\newtheorem{lemma}{Lemma}
\newtheorem{proposition}{Proposition}
\newtheorem{corollary}{Corollary}
\newtheorem{theorem}{Theorem}
\DeclareOldFontCommand{\rm}{\normalfont\rmfamily}{\mathrm}
\definecolor{mycolor}{rgb}{0.1, 0.1, 0.7}
\DeclareMathAlphabet{\mathpzc}{OT1}{pzc}%
{m}{it}
\begin{document}

\title{State-dependent and state-independent uncertainty relations for skew information and standard deviation}

\author{Sahil}
\email{sahilmd@imsc.res.in}
\affiliation{Optics and Quantum Information Group, The Institute of Mathematical Sciences,
CIT Campus, Taramani, Chennai 600113, India}
\affiliation{Homi Bhabha National Institute, Training School Complex, Anushakti Nagar, Mumbai 400085, India}

\begin{abstract}
  In this work, we derive state-dependent uncertainty relations (uncertainty equalities)  in which  commutators of incompatible operators (not necessarily Hermitian) are explicitly present  and state-independent  uncertainty relations based on the  Wigner-Yanase (-Dyson) skew information.   We derive uncertainty equality based on  standard deviation for incompatible operators with mixed states, a generalization of  previous works in which only pure states were considered.  We show that  for pure states,  the  Wigner-Yanase  skew information based  state-independent uncertainty relations become standard deviation based  state-independent uncertainty relations which turn out to be tighter  uncertainty relations for some cases than the ones  given in   previous works, and we generalize the previous works   for arbitrary operators.  As the Wigner-Yanase skew information of a quantum channel can be considered as a measure of quantum coherence of a density operator with respect to that channel, we  show that  there exists a state-independent uncertainty relation for the coherence measures of the density operator with respect to a collection of different channels.  We show that  state-dependent and state-independent uncertainty relations based on  a more general version of skew information called generalized skew information   which includes the Wigner-Yanase (-Dyson) skew information and the Fisher information as  special cases hold.  In qubits, we derive tighter state-independent uncertainty  inequalities for  different  form of generalized skew informations and standard deviations, and state-independent uncertainty equalities involving generalized skew informations and standard deviations of spin operators along three orthogonal directions. Finally, we provide a scheme to determine the Wigner-Yanase (-Dyson) skew information of an unknown observable  which can be performed in experiment using the notion of weak values. 
\end{abstract}

\maketitle

\section{Introduction}

Whenever two or more incompatible (noncommuting) observables are considered, we are always curious to know about the impact (in the form of  uncertainty relation) of the noncommutative nature of those incompatible observables on each others' measurement results.  Standard deviation based uncertainty relation \cite{robertson} known as the  Robertson-Heisenberg uncertainty relation (RHUR) is familiar  from the inception of modern quantum mechanics and provides  the  bound on the sharp preparation of a quantum state for two noncommuting observables. Thus, there can not exist a quantum state for which we can have both the standard deviations of the two noncommuting observables zero.  See Refs. \cite{Busch-2014,Bagchi-2016,Mondal-2017,Huang-2012,Guise-2018,Abbott-2016,Werner-2015,Werner-2017,Karol-2019} for improved and tighter state-dependent and state-independent  uncertainty relations based on standard deviation. Entropic uncertainty relation says that there is a tradeoff between uncertainties  (Shannon entropy) about two incompatible observables captured by the inequality known as the Maassen-Uffink entropic uncertainty relation \cite{maassen-uffink}.  Recently, it has been confirmed that there exists an uncertainty relation in a pre- and postselected  system describing the impossibility of joint sharp preparation of pre- and postselected states for  measuring incompatible observables \cite{sahil-sohail-sibasish}.  \par
In recent years, skew information, originally introduced by Wigner and Yanase \cite{wigner-yanase} to  quantify the quantum uncertainty in measurements under conservation laws, became an important information content in quantum information theory. For example, skew information has been regarded as quantum uncertainty of an observable  in a mixed state  (a  classical mixture of  quantum states) \cite{Luo-07-05}, a measure of the noncommutativity between the observable and the density operator \cite{Alain-1978}, a particular kind of quantum Fisher information in the theory of statistical estimation \cite{Luo-2003}, measures of coherence and  asymmetry of the density operator  with respect to  observables \cite{Girolami-Luo,Marvian-2016}. Moreover, the skew information of an observable becomes standard deviation of the same observable when the initially prepared state is pure \cite{Luo-Zhang}. All of these different manifestations of skew information demonstrate its importance in  information theory. \par
As the  skew information is observable dependent,  one may ask whether there exist  uncertainty relations for incompatible observables based on the Wigner-Yanase skew information.   Product \cite{Luo-2003,Luo-Zhang} and  sum \cite{Chen-2016} uncertainty relations based on skew information have been derived but particularly, the product uncertainty relations   are found to be incorrect (see  Sec. \ref{II}  for the actual reason) via counter examples which show the violation of the product uncertainty inequalities \cite{Yanagi-2005,Furuichi-2009,Yanagi-2010}.   As the Wigner-Yanase skew information can be expressed as the Frobenius norm of the commutator of square root of density operator and the observable, one can simply use Cauchy-Schwarz inequality based on Frobenius norm (see  section \ref{II}) to derive uncertainty relation for the Wigner-Yanase skew informations of two incompatible observables. But such uncertainty relation doesn't contain the commutator of the incompatible observables. The same problem  occurs in sum uncertainty relations based on the Wigner-Yanase skew informations. Uncertainty relation based on standard deviation (known as the RHUR) is not only fundamentally  important but also from application's perspective: to formulate  quantum mechanics \cite{Busch-2007,Lahti-1987}, for entanglement detection \cite{Hofmann-2003,Guhne-2004}, for the security analysis of quantum key distribution in quantum cryptography \cite{Fuchs-1996}, as a fundamental building block for quantum mechanics and quantum gravity \cite{mjwhall-2005}, etc. Thus one should  expect  uncertainty relations based on skew information  to play such important roles in application and reveal some new properties of quantum systems.\par
In this work, we derive  uncertainty equality  based on the Wigner-Yanase (-Dyson) skew information   where the commutator of two incompatible operators (not necessarily Hermitian) is explicitly present. Then we show that such uncertainty equality   can be used to solve the triviality issue of uncertainty relations based on the Wigner-Yanase skew information and  standard deviation.  It is also shown how one can derive  quantum speed limit for observable (state) using the commutator term in the  uncertainty equality.  We derive sum and product uncertainty equalities based on standard deviation with explicit presence of commutator of incompatible operators when the initially prepared state is an arbitrary mixed state, a generalization of the works of Ref. \cite{Maccone-Pati, Yao-2015} where only pure states were considered. We provide detail calculation of sum and product uncertainty equalities for three operators (can be extended for multiple operators also). It is shown that variance (skew information)  of any operator can alway be written as sum of variances (skew informations) of two Hermitian operators.  For the first time, we derive  state-independent uncertainty relation based on the Wigner-Yanase (-Dyson)  skew information for a collection  of arbitrary operators  using the  eigenvalue minimization procedure, a similar technique   first appeared in Ref. \cite{Giorda-2019} to derive the state-independent uncertainty relation based on  standard deviation. Also, we generalize  the work of  Ref. \cite{Giorda-2019}  for   a collection  of arbitrary operators. By utilizing  skew information based state-independent uncertainty relation, we derive standard deviation based state-independent uncertainty relation  with improvement of lower bounds  for some cases compared to  \cite{Giorda-2019}. \par
{Recently,  the authors in  Ref. \cite{Luo-Sun-2017} have  defined a coherence measure of a state  with respect to the L$\ddot{\emph{u}}$ders measurement (or equivalently  an orthonormal basis) involving the Wigner-Yanase  skew informations and later they have generalized the coherence measure of the state with respect to  a quantum channel  \cite{Luo-Sun-2018}.  Now as the measure of coherence depends on  orthonormal basis or more generally on a channel involving the Wigner-Yanase skew information,  it is natural to ask if there is a trade-off relation between the coherence measures of a state in different orthonormal bases (or with respect to  channels). Singh \emph{et al.} \cite{Singh-Pati-2016} derived such trade-off relation for relative entropy  coherence measures of a state in different orthonormal bases,  and Fu \emph{et al.} \cite{Fu-Luo-2019} derived  state-dependent trade-off relation for  coherence measures  with respect to two channels based on the Wigner-Yanase skew information. Here  we derive state-independent  uncertainty relation  for  coherence measures  of a state with respect to a collection of  channels based on the Wigner-Yanase skew information.}  \par
The Wigner-Yanase (-Dyson)  skew information and the Fisher information have been shown to be the  special cases of a quantity known as generalized skew information \cite{Yang-2022}. As generalized skew information is lower bounded by the Wigner-Yanase skew information \cite{Yang-2022},  we show that the state-dependent and state-independent uncertainty relations for generalized skew information can easily be derived using the state-dependent and state-independent uncertainty relations for the Wigner-Yanase  skew informations. In qubits,  we show how different form of generalized skew informations, for example, the Wigner-Yanase  skew information and the Fisher information   are connected via equalities. We study state-independent tighter uncertainty relations involving different form of generalized skew informations (\emph{e.g}.,  the Wigner-Yanase skew information and the Fisher information and so on). Also,  we study state-independent  tighter uncertainty relations involving  different form of generalized skew informations and standard deviations, and state-independent uncertainty equalities involving generalized skew informations and standard deviations of spin operators along three orthogonal directions.  Finally, we show  that the Wigner-Yanase (-Dyson) skew information can experimentally be obtained using the notion of weak values. \par
This paper is organized as follows. In Sec. \ref{II}, we discuss preliminaries of standard deviation and skew information, and related uncertainty relations. We derive our main results of this paper in Sec. \ref{III}. In particular,  in Sec. \ref{IIIA}, we derive state-dependent uncertainty equalities and inequalities based on skew information and standard deviation.  In Sec. \ref{IIIB}, we derive state-independent uncertainty relations for the Wigner-Yanase (-Dyson) skew informations and provide   several examples, and also analyse the situation when the density operator is a pure state.  We derive  state-independent uncertainty relation for the Wigner-Yanase skew informations of a collection of quantum channels in Sec. \ref{IIIC}.  We show      in Sec. \ref{IIID} that state-dependent and state-independent uncertainty relations based on a more general version of skew information called generalized skew information hold, and state-independent uncertainty inequalities and equalities for different form of generalized skew informations and standard deviations are derived. In Sec. \ref{IIIE}, we provide a weak value based scheme to determine the Wigner-Yanase (-Dyson) skew information in experiment. Finally we conclude our work  in Sec. \ref{IV}.

\section{Preliminaries}\label{II}
In this section, we provide preliminaries of standard deviation and skew information with associated properties. After that we discuss the shortcomings of existing uncertainty relations based on standard deviation and skew information. 
\subsection{Standard Deviation and Skew Information}
\subsubsection{Standard Deviation}
Let  $A\in\cal{L}(\cal{H})$ be an  operator, where $\cal{L}(\cal{H})$ is the  space of all  linear operators on $\cal{H}$. Then we define the standard deviation of  $A$  as 
\begin{align}
    \braket{\Delta A}_{\rho}=\sqrt{\Tr[\rho(A^{\dagger}A+AA^{\dagger})/2]-|Tr(A\rho)|^2},\label{IIA1-1}
\end{align}
where $\rho$ is the density operator of the system. The definition of standard deviation in Eq. (\ref{IIA1-1}) is symmetric under the exchange of $A$ and $A^{\dagger}$, and is  different from the one defined in \cite{Leblond-1976, Hall-Pati-2016} where the authors considered standard deviation for a normal operator. For Hermitian operators, $\braket{\Delta A}_{\rho}$ in  Eq. (\ref{IIA1-1}) becomes $\braket{\Delta A}_{\rho}=\sqrt{\Tr(\rho A^2)-[\Tr(\rho A)]^2}$. See Ref. \cite{sahil-sohail-sibasish}, where the authors  have given  meaningful geometric as well as information-theoretic interpretations of the standard deviation  of a Hermitian operator.
\subsubsection{Skew Information}
The  Wigner-Yanase skew information is defined as \cite{wigner-yanase}
\begin{align}
    I_{\rho}(A)=\frac{1}{2}Tr([\sqrt{\rho},A]^{\dagger}[\sqrt{\rho},A]).\label{IIA2-1}
\end{align}
   A more general version of the Wigner-Yanase skew information extended by Dyson, now known as the Wigner-Yanase-Dyson skew information, is defined as 
\begin{align}
    I^s_{\rho}(A)=\frac{1}{2}Tr([\rho^s,A]^{\dagger}[\rho^{1-s},A]),\label{IIA2-2}
\end{align}
where $0<s<1$ and obviously, $I^{s=\scriptsize{{1}/{2}}}_{\rho}(A)=I_{\rho}(A)$. An equivalent expression of $I^s_{\rho}(A)$ for the Hermitian operator $A$ is:
\begin{align}
    I^s_{\rho}(A)=Tr(A^2\rho)-Tr(\rho^{1-s}A\rho^{s}A).\label{IIA2-3}
\end{align}
 Several remarkable properties make the Wigner-Yanase-Dyson skew information interesting:\\
($i$) $I^s_{\rho}(A)$ is  convex under the classical mixing of density operators  \cite{Lieb-1973} \emph{i.e}., 
\begin{align}
    I^s_{\rho}(A)\leq \sum_{i}{p_iI^s_{\rho_i}(A)},\hspace{5mm} where \hspace{5mm}\rho=\sum_i{p_i\rho_i}.\label{IIA2-4}
\end{align}
($ii$) $I^s_{\rho}(A)$ is additive in the following sense,
\begin{align}
    I^s_{\rho_1\otimes\rho_2}(A)=I^s_{\rho_1}(A)+I^s_{\rho_2}(A).\label{IIA2-5}
\end{align}
Clearly, when $s=\frac{1}{2}$, the Wigner-Yanase-Dyson skew information $I^s_{\rho}(A)$ becomes the Frobenius  norm of the operator $\frac{1}{\sqrt{2}}[\sqrt{\rho},A]$, and hence, it is a non-negative quantity. Here, Frobenius  norm of an operator $X$ is defined as $||X||_{F}=\sqrt{Tr(X^{\dagger}X)}$. To see how $I^s_{\rho}(A)$ is non-negative as well as upper bounded by $I_{\rho}(A)$, one may use the following inequality which holds for any unitarily invariant norm (\emph{e.g}.,  Frobenius norm) \cite{Bhatia-Davis-1993},
\begin{align*}
    2||M^{\small{1/2}}XN^{\small{1/2}}||_F&\leq ||M^sXN^{1-s}+M^{1-s}XN^s||_F\nonumber\\
    &\leq ||MX+XN||_F,
\end{align*}
where $X$ is an arbitrary operator on $\mathcal{H}$, and $M$ and $N$ are positive semidefinite operators ($M,N\geq 0$). Now, by substituting $X=A$, and $M=N=\rho^{\small{1/2}}$ in the above inequality, and using the definition of $I^s_{\rho}(A)$ and $I_{\rho}(A)$, we have the following 
\begin{align*}
    0\leq I^s_{\rho}(A)\leq I_{\rho}(A).
\end{align*}
As the variance of $A$ is concave under the classical  mixing of density operators \emph{i.e}., $V_{\rho}(A)\geq\sum_ip_iV_{\rho_i}(A)$, where $V_{\rho}(A)=\braket{\Delta A}_{\rho}^2$, the above inequality becomes
\begin{align}
    0\leq I^s_{\rho}(A)\leq I_{\rho}(A)\leq V_{\rho}(A).\label{IIA2-6}
\end{align}

\subsection{Uncertainty relations}
\subsubsection{Based on standard deviation}
The RHUR and its tighter version are respectively  given by 
\begin{align}
\braket{\Delta{A}}_{\rho}^{{2}}\braket{\Delta{B}}_{\rho}^{{2}}&\geq \left[\frac{1}{2i}Tr([A,B]\rho)\right]^2, \label{IIB1-1}
\end{align}
\begin{align}
\braket{\Delta{A}}_{\rho}^{{2}}\braket{\Delta{B}}_{\rho}^{{2}}&\geq \left[\frac{1}{2i}Tr([A,B]\rho)\right]^2 \nonumber\\
&+ \left[\frac{1}{2}Tr(\{A,B\}\rho) - \braket{A}_{\rho}\braket{B}_{\rho}\right]^2.\label{IIB1-2}
\end{align}
The implication of the RHUR is that whenever $A$ and $B$ are noncommuting \emph{i.e}., $[A,B]\neq 0$, both the uncertainties $\braket{\Delta{A}}_{\rho}^{{2}}$ and $\braket{\Delta{B}}_{\rho}^{{2}}$ can not arbitrarily be small  in the same prepared state.\par
The RHUR has drawback in the sense that when $\Tr([A,B]\rho)=0$ even if $[A,B] \neq 0$, the lower bound becomes zero in Eq. (\ref{IIB1-1}) and thus the RHUR turns in to a  trivial uncertainty relation.  Eq. (\ref{IIB1-2}) can be used to have a non-zero lower bound when $\Tr([A,B]\rho)=0$, but the commutator $[A,B]$  term disappears from the relation (\ref{IIB1-2}). Note that if $\rho$ is a pure state and is an eigenstate of either $A$ or $B$, the relation (\ref{IIB1-2}) also becomes trivial. To solve this triviality issue of Eq. (\ref{IIB1-1}) and to have the commutator term $[A,B]$ explicitly  in the lower bound   of the uncertainty relation based on standard deviation, we can use the uncertainty relation based on the Wigner-Yanase-Dyson skew information   (see \emph{Corollary \ref{Corollary 1}}).

\subsubsection{Based on skew information}
One can derive an uncertainty  relation for two incompatible observables $A$ and $B$ involving the Wigner-Yanase skew informations $I_{\rho}(A)$ and $I_{\rho}(B)$, respectively using the Cauchy-Schwarz inequality based on Frobenius norm as
\begin{align}
I_{\rho}(A) I_{\rho}(B) \geq \left|{\frac{1}{2}Tr\left(\{A,B\}\rho\right)-Tr\left(\sqrt{\rho}A\sqrt{\rho}B\right)}\right|^2.\label{IIB2-1}
\end{align}
The sum uncertainty relation reads \cite{Chen-2016} 
\begin{align}
    I_{\rho}(A)+I_{\rho}(B)\geq \frac{1}{2}max\{I_{\rho}(A\pm B)\}.\label{IIB2-2}
\end{align}
Notice that the commutator of $A$ and $B$ is not explicitly present in both the inequalities  (\ref{IIB2-1}) and (\ref{IIB2-2}), and hence a direct impact of the noncommutativity of the observables is not captured.  See Ref. \cite{Cai-2021} where the author derived sum uncertainty relations for metric-adjusted skew information in which the commutator term is absent. Note that a particular case of the metric-adjusted skew information is the Wigner-Yanase-Dyson skew information and thus uncertainty relations based on  the Wigner-Yanase-Dyson skew information also do not contain the commutator of $A$ and $B$.\par
 An initial  attempt was made in the Refs. \cite{Luo-2003,Luo-Zhang} to include the  commutator in the uncertainty relation. But it was pointed out in Ref. \cite{Yanagi-2005} by a counter example  that the uncertainty relation in Refs. \cite{Luo-2003,Luo-Zhang}  is violated. The basic problem in the inner product $(A,B)=Tr(\rho AB)-Tr(\sqrt{\rho}A\sqrt{\rho}B)$ between two operators $A$ and $B$ which Luo defined in Ref. \cite{Luo-2003}  is that the inner product $(A,B)$ doesn't satisfy the semi-definite property \emph{i.e}.,  $(A,A)=0$ if and only if $A=0$. It is easy to see that $(A,A)=I_{\rho}(A)=0$ holds also for  those $A\neq 0$ for which $[\sqrt{\rho},A]=0$. \par
Later, another attempt was made  by  Luo \cite{Luo-2005} to include the commutator of $A$ and $B$ in the following uncertainty relation
\begin{align}
    U_{\rho}(A)U_{\rho}(B)\geq \frac{1}{4}|Tr(\rho [A,B])|^2,\label{IIB2-3}
\end{align}
where $U_{\rho}(X)=\sqrt{V_{\rho}(A)^2-[V_{\rho}(A)-I_{\rho}(A)]^2}=\sqrt{I_{\rho}(A)J_{\rho}(A)}$ with $J_{\rho}(A)=2V_{\rho}(A)-I_{\rho}(A)$. Although the Eq. (\ref{IIB2-3}) contains the commutator of $A$ and $B$, the left hand side  is not the product of the Wigner-Yanase skew informations and hence it doesn't reflect the true uncertainty relation for the Wigner-Yanase skew informations with the explicit form of the commutator.  Another issue with  Eq. (\ref{IIB2-3}) is that it is not a purely quantum uncertainty relation as $U_{\rho}(A)$ is the product of $I_{\rho}(A)$ [a convex function under classical mixing of density operators, see Eq. (\ref{IIA2-4})] and $J_{\rho}(A)$ [a concave function under the same classical mixing of density operators]. Note that,  a purely quantum uncertainty (or variance, or variance-like) should not increase  under the classical mixing of density operators. The expected uncertainty which one intends to define [instead of $I_{\rho}(A)$] should be non-increasing under the classical mixing of density matrices as $I_{\rho}(A)$ has the same property in this regard. The authors of  Ref. \cite{sahil-sohail-sibasish} have shown that   purely quantum uncertainty relations exist in pre- and postselected quantum systems.

\section{Main results}\label{III}

\subsection{Uncertainty equalities} \label{IIIA}
\subsubsection{Based on standard deviation for mixed states}
The sum and product uncertainty equalities based on standard deviation containing  the commutator of two incompatible observables explicitly for all pure states have been derived in Ref. \cite{Maccone-Pati, Yao-2015}.   Uncertainty equalities are particularly important to obtain  tighter uncertainty inequalities with hierarchical structure.  Moreover, recently in Ref. \cite{Bagchi-Pati-2023}, stronger quantum speed limit for mixed quantum states has been derived.  As our product uncertainty relation (\ref{IIIA1-2}) is an equality, we can derive  stronger quantum speed limit for mixed quantum states using the same technique used in Ref. \cite{Bagchi-Pati-2023}. Note that exact quantum speed limits for  pure states have recently been derived \cite{pati-brij-sahil-SLB} using the exact uncertainty relation  \cite{Hall-2001}.  As non-Hermitian operators have recently become important in the context of uncertainty relations  \cite{Massar-Spindel,Shrobona-Pati,Bong-2018,Yu-2019,Zhao-2024}, here we provide sum and  product  uncertainty equalities for any two operators when the initially prepared state  is an arbitrary mixed state. Such uncertainty equalities contain the commutator  of the  two concerned operators. See Theorem \ref{Theorem 3} where we explain how one can measure variance of a non-Hermitian operator.
\begin{theorem}\label{Theorem 1}
 For any two operators  $A$, $B\in\mathcal{L}(\mathcal{H})$, the following equalities hold:
     \begin{align}
       \braket{\Delta A}^2_{\rho}+\braket{\Delta B}^2_{\rho}=&\pm \frac{1}{2}\braket{i([A^{\dagger},B]+[A,B^{\dagger}])}_{\rho}\nonumber\\
       &+\frac{1}{2}\braket{(M_{\mp}^{\dagger}M_{\mp}+N_{\pm}N_{\pm}^{\dagger})}_{\rho},\label{IIIA1-1}
        \end{align}
        where $M_{\mp}=A\mp iB-\braket{A\mp iB}_{\rho}I$ and $N_{\pm}=A\pm iB-\braket{A\pm iB}_{\rho}I$. The `$\pm$' sign is chosen  such that the first  term in the right hand side  remains positive, and 
         \begin{align}
       \braket{\Delta A}_{\rho}\braket{\Delta B}_{\rho}=\frac{\pm \frac{1}{4}\braket{i([A^{\dagger},B]+[A,B^{\dagger}])}_{\rho}}{1-\frac{1}{4}\braket{(R_{\mp}^{\dagger}R_{\mp}+S_{\pm}S_{\pm}^{\dagger})}_{\rho}},\label{IIIA1-2}
        \end{align}
where $R_{\mp}=\frac{A}{\braket{\Delta A}_{\rho}}\mp i\frac{B}{\braket{\Delta B}_{\rho}}-\braket{\frac{A}{\braket{\Delta A}_{\rho}}\mp i\frac{B}{\braket{\Delta B}_{\rho}}}_{\rho}I$ and $S_{\pm}=\frac{A}{\braket{\Delta A}_{\rho}}\pm i\frac{B}{\braket{\Delta B}_{\rho}}-\braket{\frac{A}{\braket{\Delta A}_{\rho}}\pm i\frac{B}{\braket{\Delta B}_{\rho}}}_{\rho}I$. The `$\pm$' sign is chosen  such  that the numerator in the right hand side  remains positive. We assume that $\braket{\Delta A}_{\rho}\neq 0$, $\braket{\Delta B}_{\rho}\neq 0$.
\end{theorem}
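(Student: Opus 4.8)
The plan is to prove the sum equality \eqref{IIIA1-1} by a direct expansion of the right-hand side and then to obtain the product equality \eqref{IIIA1-2} as an algebraic consequence after rescaling the operators. First I would introduce the centered operators $\tilde A = A - \braket{A}_\rho I$ and $\tilde B = B - \braket{B}_\rho I$. Since the scalars subtracted in the definitions are exactly $\braket{A\mp iB}_\rho$ and $\braket{A\pm iB}_\rho$, one has immediately $M_\mp = \tilde A \mp i\tilde B$ and $N_\pm = \tilde A \pm i\tilde B$. The central computational fact I would establish is that, for any operator $A$,
\[
\tfrac12\braket{\tilde A^\dagger \tilde A + \tilde A \tilde A^\dagger}_\rho = \braket{\Delta A}_\rho^2 ,
\]
which follows by expanding the left-hand side, taking the trace against $\rho$, and using $\Tr(\rho A^\dagger)=\overline{\Tr(\rho A)}$ together with the definition \eqref{IIA1-1}.

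Next I would expand $M_\mp^\dagger M_\mp + N_\pm N_\pm^\dagger$ in terms of $\tilde A$ and $\tilde B$. The diagonal pieces $\tilde A^\dagger \tilde A + \tilde A \tilde A^\dagger$ and $\tilde B^\dagger \tilde B + \tilde B \tilde B^\dagger$ reproduce, through the identity above, exactly $2(\braket{\Delta A}_\rho^2 + \braket{\Delta B}_\rho^2)$ once the overall factor $\tfrac12$ is restored. The cross terms collect into $\mp i(\tilde A^\dagger \tilde B + \tilde A \tilde B^\dagger - \tilde B^\dagger \tilde A - \tilde B \tilde A^\dagger)$, and the key observation is that this antisymmetric combination equals $\mp i([\tilde A^\dagger,\tilde B] + [\tilde A,\tilde B^\dagger]) = \mp i([A^\dagger,B] + [A,B^\dagger])$, the last equality because commutators are insensitive to the c-number shifts. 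Rearranging the resulting identity yields \eqref{IIIA1-1}. To justify the stated sign prescription I would verify that $C := [A^\dagger,B] + [A,B^\dagger]$ is anti-Hermitian, $C^\dagger = -C$, so that $\braket{iC}_\rho$ is real and the sign of the first term on the right can always be fixed to make it nonnegative; the remaining term $\tfrac12\braket{M_\mp^\dagger M_\mp + N_\pm N_\pm^\dagger}_\rho$ is manifestly nonnegative since $\rho\ge 0$ and $M_\mp^\dagger M_\mp, N_\pm N_\pm^\dagger \ge 0$.

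For the product equality \eqref{IIIA1-2} I would apply \eqref{IIIA1-1} to the rescaled operators $A' = A/\braket{\Delta A}_\rho$ and $B' = B/\braket{\Delta B}_\rho$, which is legitimate because $\braket{\Delta A}_\rho,\braket{\Delta B}_\rho \neq 0$ by assumption. Using $\braket{\Delta(cA)}_\rho = |c|\,\braket{\Delta A}_\rho$ for a real positive scalar $c$, both rescaled standard deviations equal $1$, so the left-hand side of \eqref{IIIA1-1} becomes $2$. The operators $R_\mp$ and $S_\pm$ are precisely $M_\mp$ and $N_\pm$ rebuilt from $A'$ and $B'$, while the commutator term picks up the factor $1/(\braket{\Delta A}_\rho\braket{\Delta B}_\rho)$ by bilinearity. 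The identity then reads $2 = \pm\tfrac12(\braket{\Delta A}_\rho\braket{\Delta B}_\rho)^{-1}\braket{i([A^\dagger,B]+[A,B^\dagger])}_\rho + \tfrac12\braket{R_\mp^\dagger R_\mp + S_\pm S_\pm^\dagger}_\rho$, and solving this single linear equation for $\braket{\Delta A}_\rho\braket{\Delta B}_\rho$ gives \eqref{IIIA1-2} at once.

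I expect the only delicate point to be the sign and dagger bookkeeping in the cross terms, in particular checking that the four cross terms assemble into the stated commutator combination and that the choice of $\pm$ is used consistently across the two displayed lines of \eqref{IIIA1-1}. Everything else is routine linear algebra; no inequality is needed for the equalities themselves, the nonnegativity remarks serving only to fix the sign convention and to exhibit \eqref{IIIA1-1} as a genuine uncertainty relation.
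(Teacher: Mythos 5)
Your proposal is correct and follows essentially the same route as the paper's Appendix A: expand the quadratic forms $M_{\mp}^{\dagger}M_{\mp}+N_{\pm}N_{\pm}^{\dagger}$ to isolate the commutator combination, then obtain the product form by the rescaling $A\rightarrow A/\braket{\Delta A}_{\rho}$, $B\rightarrow B/\braket{\Delta B}_{\rho}$ and solving for the product. Centering first via $\tilde A,\tilde B$ rather than absorbing the $|\Tr(\cdot\,\rho)|^2$ terms afterwards is only a cosmetic reordering of the same computation.
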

\begin{proof}
The proof of Eqs. (\ref{IIIA1-1}) and (\ref{IIIA1-2}) are given in Appendix \ref{A}.
\end{proof}
 Recently,  tighter uncertainty relations for unitary operators have been derived in \cite{Yu-2019} which are inequalities.  In contrast, the relations (\ref{IIIA1-1}) and (\ref{IIIA1-2}) are  uncertainty equalities with  the commutator  for any two operators when the initially prepared state is a mixed state.  As earlier mentioned that  from the uncertainty equalities, a series of inequalities with hierarchical structure can be obtained.  One can do that by noting, for example, in Eq. (\ref{IIIA1-1}) that  $\braket{(M_{\mp}^{\dagger}M_{\mp}+N_{\pm}N_{\pm}^{\dagger})}_{\rho}=\sum_i[\braket{\phi_i|M_{\mp}\rho M_{\mp}^{\dagger}|\phi_i}+\braket{\phi_i|N_{\pm}^{\dagger}\rho N_{\pm}|\phi_i}]$, where $\sum_i\ketbra{\phi_i}{\phi_i}=I$. Now as each term in the summation is positive number, one or more terms can be discarded suitably and hence a series of inequalities with hierarchical structure can be obtained from Eq. (\ref{IIIA1-1}). One can easily  verify that when $\rho$ is a pure state, and $A$ and $B$ are Hermitian operators,  the equalities  (\ref{IIIA1-1}) and (\ref{IIIA1-2}) become the  same uncertainty equalities first derived in Ref. \cite{Yao-2015}. \par
 To achieve the RHUR uncertainty relations in sum and product form, respectively for Hermitian operators, we have to consider the constraint $\braket{M_{\mp}^{\dagger}M_{\mp}}_{\rho}=0$ and $\braket{R_{\mp}^{\dagger}R_{\mp}}_{\rho}=0$ in Eqs. (\ref{IIIA1-1}) and   (\ref{IIIA1-2}), respectively.  Here $N_{\pm}=M_{\mp}$ and $S_{\pm}=R_{\mp}$ for Hermitian operators $A$ and $B$.  \par
The states for which the terms $\braket{(M_{\mp}^{\dagger}M_{\mp}+N_{\pm}^{\dagger}N_{\pm})}_{\rho}$ and $\braket{(R_{\mp}^{\dagger}R_{\mp}+S_{\pm}^{\dagger}S_{\pm})}_{\rho}$ in Eqs. (\ref{IIIA1-1}) and   (\ref{IIIA1-2}), respectively vanish are called \emph{intelligent states}.  It can be shown that intelligent states must satisfy \{$\sqrt{\rho}M_{\mp}^{\dagger}\ket{\phi_i}=0=\sqrt{\rho}N_{\pm}\ket{\phi_i}\}^d_{i=1}$ in Eq. (\ref{IIIA1-1}) and  \{$\sqrt{\rho}R_{\mp}^{\dagger}\ket{\phi_i}=0=\sqrt{\rho}S_{\pm}\ket{\phi_i}\}^d_{i=1}$ in Eq. (\ref{IIIA1-2}), where \{$\ket{\phi_i}\}^d_{i=1}$ is a basis.\par
The product uncertainty equality (\ref{IIIA1-2}) may become trivial for the case where average values of the commutators are zero. To get rid of the triviality issue of the product uncertainty equality (\ref{IIIA1-2}), one can derive the following product uncertainty equality by making the substitutions $A\rightarrow A \braket{\Delta B}_{\rho}$ and $B\rightarrow B \braket{\Delta A}_{\rho}$ in Eq. (\ref{IIIA1-1}) (after that, divide both the sides by $\braket{\Delta A}_{\rho}\braket{\Delta B}_{\rho}$) as 
\begin{align}
 \braket{\Delta A}_{\rho}\braket{\Delta B}_{\rho}=&\pm \frac{1}{4}\braket{i([A^{\dagger},B]+[A,B^{\dagger}])}_{\rho}\nonumber\\
       &+\frac{1}{4}\braket{(\tilde{M}_{\mp}^{\dagger}\tilde{M}_{\mp}+\tilde{N}_{\pm}\tilde{N}_{\pm}^{\dagger})}_{\rho},\label{IIIA1-3}
\end{align}
 where $\tilde{M}_{\mp}=A\sqrt{\frac{ \braket{\Delta B}_{\rho}}{ \braket{\Delta A}_{\rho}}}\mp iB\sqrt{\frac{ \braket{\Delta A}_{\rho}}{ \braket{\Delta B}_{\rho}}}-\braket{A\sqrt{\frac{ \braket{\Delta B}_{\rho}}{\braket{\Delta A}_{\rho}}}\mp iB\sqrt{\frac{ \braket{\Delta A}_{\rho}}{ \braket{\Delta B}_{\rho}}}}_{\rho}I$ and  $\tilde{N}_{\pm}=A\sqrt{\frac{ \braket{\Delta B}_{\rho}}{ \braket{\Delta A}_{\rho}}}\pm iB\sqrt{\frac{ \braket{\Delta A}_{\rho}}{\braket{\Delta B}_{\rho}}}-\braket{A\sqrt{\frac{ \braket{\Delta B}_{\rho}}{ \braket{\Delta A}_{\rho}}}\pm iB\sqrt{\frac{ \braket{\Delta A}_{\rho}}{ \braket{\Delta B}_{\rho}}}}_{\rho}I$. Thus, even if the average value of the commutator in Eq. (\ref{IIIA1-3}) is zero, the  Eq. (\ref{IIIA1-3}) remains nontrivial  and useful. \par
  The sum uncertainty  equality (\ref{IIIA1-1})  might  also be useful (depending upon the context) to solve the  triviality issue of the RHUR (\ref{IIB1-1}) and (\ref{IIB1-2}). Note that the RHUR can become trivial  if for example, $\rho$ being a pure state and   is an eigenstate of either $A$ or $B$, then  both the sides of Eq. (\ref{IIB1-1}) and (\ref{IIB1-2})  becomes zero. Now even if the initially prepared state is an eigenstate of either $A$ or $B$, both the sides of  the sum uncertainty  equality (\ref{IIIA1-1}) does not become zero and thus the triviality issue the RHUR is solved. See Ref. \cite{sahil-sohail-sibasish,Maccone-Pati} for detail discussion for the solution to the trivial uncertainty relations. \par
  Uncertainty equality for multiple operators can be derived easily using Eq. (\ref{IIIA1-1}). See Appendix \ref{A} for detail discussion.

 \subsubsection{Based on skew information}

Similar to Theorem \ref{Theorem 1}, we now derive uncertainty equality for the Wigner-Yanase-Dyson skew informations of two arbitrary operators. 

\begin{theorem}\label{Theorem 2}
   For two arbitrary operators $A$, $B\in\mathcal{L}(\mathcal{H})$, the following equality holds:
    \begin{align}
       \sqrt{I_{\rho}^s(A)I_{\rho}^s(B)}=\frac{\pm \frac{i}{4}[\Tr([A^{\dagger},B]+[A,B^{\dagger}])\rho^s+\mathcal{E}_{AB}^s]}{1+\Omega_{AB}^s-\frac{1}{4}\Tr[(\xi^{\pm}+\eta^{\mp})(I-\rho^{1-s})]},\label{IIIA2-1}
        \end{align}
    where $\xi^{\pm}=(\frac{A}{\sqrt{I^s_{\rho}(A)}}\pm i\frac{B}{\sqrt{I^s_{\rho}(B)}})^{\dagger}\rho^s(\frac{A}{\sqrt{I^s_{\rho}(A)}}\pm i\frac{B}{\sqrt{I^s_{\rho}(B)}})$ and   $\eta^{\mp}=(\frac{A}{\sqrt{I^s_{\rho}(A)}}\mp i\frac{B}{\sqrt{I^s_{\rho}(B)}})\rho^s(\frac{A}{\sqrt{I^s_{\rho}(A)}}\mp i\frac{B}{\sqrt{I^s_{\rho}(B)}})^{\dagger}$, and $\mathcal{E}^s_{AB}=\Tr(\rho^{1-s}B^{\dagger}\rho^sA)+\Tr(\rho^{1-s}B\rho^sA^{\dagger})-\Tr(\rho^{1-s}A\rho^sB^{\dagger})-\Tr(\rho^{1-s}A^{\dagger}\rho^sB)$ with $\mathcal{E}^{s=1/2}_{AB}=0$, $\Omega_{AB}^s=\frac{\Tr(\{A^{\dagger},A\}\sigma)}{4I_{\rho}^s(A)}+\frac{\Tr(\{B^{\dagger},B\}\sigma)}{4I_{\rho}^s(B)}$ with $\sigma=\rho^s-\rho\geq 0$. We assume here that   $I^s_{\rho}(A)\neq 0$, $I^s_{\rho}(B)\neq 0$.  The sign ``$\pm$" is chosen such that  the numerator always remains positive.
\end{theorem}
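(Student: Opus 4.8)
The plan is to follow the same route as for Theorem~\ref{Theorem 1}: establish an exact algebraic identity that rewrites the skew information as a ``truncated norm'', feed in the normalized operators $\tilde A = A/\sqrt{I^s_\rho(A)}$ and $\tilde B = B/\sqrt{I^s_\rho(B)}$, and rearrange into the displayed quotient. Normalization is legitimate because $I^s_\rho$ is homogeneous of degree two, $I^s_\rho(\lambda C)=|\lambda|^2 I^s_\rho(C)$, so that $I^s_\rho(\tilde A)=I^s_\rho(\tilde B)=1$. This is exactly the device that turns the sum relation (\ref{IIIA1-1}) into the product relation (\ref{IIIA1-2}): the denominator of (\ref{IIIA2-1}) is built from the normalized operators $\tilde A,\tilde B$ (through $\xi^{\pm},\eta^{\mp}$ and the normalization-invariant $\Omega^s_{AB}$), while its numerator is bilinear in the original $A,B$, and the mismatch between the two is precisely the factor $\sqrt{I^s_\rho(A)I^s_\rho(B)}$ that becomes the left-hand side.

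The computational core is the identity, valid for any single operator $C$,
\begin{align}
I^s_\rho(C)=\frac{1}{2}\Tr[(C^\dagger\rho^s C+C\rho^s C^\dagger)(I-\rho^{1-s})]-\frac{1}{2}\Tr(\sigma\{C^\dagger,C\}),\nonumber
\end{align}
with $\sigma=\rho^s-\rho\geq 0$. It follows by expanding (\ref{IIA2-2}) into $I^s_\rho(C)=\frac{1}{2}\Tr(\rho\{C^\dagger,C\})-\frac{1}{2}[\Tr(C^\dagger\rho^s C\rho^{1-s})+\Tr(C^\dagger\rho^{1-s}C\rho^s)]$ and using cyclicity together with $\Tr(C\rho^s C^\dagger\rho^{1-s})=\Tr(C^\dagger\rho^{1-s}C\rho^s)$. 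I would apply this bookkeeping with the $\xi$-piece built from $C=\tilde A+i\tilde B$, giving $\xi^{+}=C^\dagger\rho^s C$, and the $\eta$-piece built from $D=\tilde A-i\tilde B$, giving $\eta^{-}=D\rho^s D^\dagger$. The two sign-conjugate arguments are forced by non-Hermiticity: reconstructing the diagonal anticommutators $\{\tilde A^\dagger,\tilde A\}$ and $\{\tilde B^\dagger,\tilde B\}$ requires both the ordering $\tilde A\tilde A^\dagger$ (supplied by $\xi$) and the ordering $\tilde A^\dagger\tilde A$ (supplied by $\eta$), which merge into one term only when $A,B$ are Hermitian.

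I would then expand $\frac{1}{2}\Tr[(\xi^{\pm}+\eta^{\mp})(I-\rho^{1-s})]$ and sort the traces according to whether they are diagonal or cross in $(\tilde A,\tilde B)$. The diagonal part reassembles $I^s_\rho(\tilde A)+I^s_\rho(\tilde B)$ together with the weighting mismatch $\frac{1}{2}\Tr(\sigma\{\tilde A^\dagger,\tilde A\})+\frac{1}{2}\Tr(\sigma\{\tilde B^\dagger,\tilde B\})=2\Omega^s_{AB}$; this is precisely where $\sigma=\rho^s-\rho$ and hence $\Omega^s_{AB}$ enters, because the ``truncated norm'' is $\rho^s$-weighted whereas the definition of $I^s_\rho$ is $\rho$-weighted. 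The cross part splits in turn: the $\rho^s$-sandwiched pieces collapse through $CC^\dagger+D^\dagger D$ into the commutator term $\mp\frac{i}{2}\Tr(([\tilde A^\dagger,\tilde B]+[\tilde A,\tilde B^\dagger])\rho^s)$, while the residual $\rho^s\cdots\rho^{1-s}$ pieces assemble into $\pm\frac{i}{2}\mathcal{E}^s_{\tilde A\tilde B}$, the latter vanishing at $s=\frac{1}{2}$ exactly because $\rho^s=\rho^{1-s}$ makes the two orderings cyclically equal, recovering $\mathcal{E}^{s=1/2}_{AB}=0$. Collecting everything and setting $I^s_\rho(\tilde A)=I^s_\rho(\tilde B)=1$ gives the normalized identity $1+\Omega^s_{AB}-\frac{1}{4}\Tr[(\xi^{\pm}+\eta^{\mp})(I-\rho^{1-s})]=\pm\frac{i}{4}[\Tr(([\tilde A^\dagger,\tilde B]+[\tilde A,\tilde B^\dagger])\rho^s)+\mathcal{E}^s_{\tilde A\tilde B}]$; restoring the normalization factor on the (bilinear) numerator produces (\ref{IIIA2-1}).

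The main obstacle is the bookkeeping rather than any conceptual difficulty: one must keep the four families of traces---the $\rho^s$-sandwiched diagonal, the $\rho$-weighted anticommutator, and the two $\rho^s\cdots\rho^{1-s}$ orderings---correctly paired through cyclicity, and pin down the relative signs so that the commutator and the correction $\mathcal{E}^s_{AB}$ enter with the signs dictated by the chosen branch. The overall sign $\pm$, equivalently the choice $\xi^{+}\eta^{-}$ versus $\xi^{-}\eta^{+}$, is then fixed by demanding that the numerator be positive, consistent with the manifestly non-negative left-hand side $\sqrt{I^s_\rho(A)I^s_\rho(B)}$.
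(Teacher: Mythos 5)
Your proposal is correct and follows essentially the same route as the paper's Appendix \ref{B} proof: the paper likewise forms the $\rho^s$-sandwiched operators $\xi^{\pm},\eta^{\mp}$ from the normalized combinations, evaluates $\frac{1}{2}\Tr(\xi^{\pm}+\eta^{\mp})$ and $\frac{1}{2}\Tr[(\xi^{\pm}+\eta^{\mp})\rho^{1-s}]$ separately, splits $\rho^s=\rho+\sigma$ to produce $\Omega^s_{AB}$, and subtracts to reach exactly your normalized identity (its Eq.\ (\ref{B7})) before rearranging. Your single-operator ``truncated norm'' identity is just a compact repackaging of those same trace manipulations, and all signs and terms check out.
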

\begin{proof}
See Appendix \ref{B} for the proof.    
\end{proof}
The uncertainty equality (\ref{IIIA2-1}) is very similar to the uncertainty equality (\ref{IIIA1-2}). In fact, if the density operator $\rho=\ketbra{\psi}{\psi}$ is a pure  state then both of  them reduce to the same uncertainty equality  which was first derived in \cite{Yao-2015}.  Note that the equality (\ref{IIIA2-1}) contains the commutator of the two operators $A$ and $B$, and thus a direct impact of the  noncommutativity of $A$ and $B$ can explicitly be observed. \par
Theorem \ref{Theorem 2} can be useful to obtain a series of lower bounds for the product of the Wigner-Yanase-Dyson skew informations of two operators by discarding some of the positive terms from $\frac{1}{4}\Tr[(\xi^{\pm}+\eta^{\mp})(I-\rho^{1-s})]$ in the Eq. (\ref{IIIA2-1}) as it is a sum of positive terms: $\frac{1}{4}\Tr[(\xi^{\pm}+\eta^{\mp})(I-\rho^{1-s})]=\frac{1}{4}\sum_i\braket{\phi_i|(\xi^{\pm}+\eta^{\mp})(I-\rho^{1-s})|\phi_i}$, where \{$\ket{\phi_i}\}$ forms a basis and $(I-\rho^{1-s})\geq 0$.  \par
The advantage of having a commutator term in an uncertainty relation is that we can obtain an upper bound for the rate of change of the system's state with respect to time (or other parameters). The quantum speed limits for states and observables are the best examples of it \cite{Mandelstam-Tamm,Deffner-Campbell,Brij-Pati-observable}.  For instance, if we consider $A_t$ to be a time dependent  observable  and $H$ is the system's time independent Hamiltonian, then under the unitary evolution, the Heisenberg equation of motion for the observable is given by $i\hbar\frac{d}{dt}A_t=[A_t,H]$. 
Now we want to find the minimum time taken by the observable $A_t$  to reach  the observable $A_T$ starting  from the observable $A_0$ at time $t=0$ when the skew informations of the system Hamiltonian $H$ and the time dependent observable $A_t$ are given. This can be done in the following way. 
By substituting $A=A_t$, $B=H$ and $s=\frac{1}{2}$ in Eq. (\ref{IIIA2-1}), we have 
\begin{align}
 \left|\frac{d}{dt}\braket{A_t}_{\rho_{\frac{1}{2}}}\right|=\frac{2}{\hbar\Tr\sqrt{\rho}}\sqrt{I_{\rho}(A_t)I_{\rho}(H)}\cal{R}_t,\label{IIIA2-1-1}
\end{align}
where $\rho_{\frac{1}{2}}=\frac{\sqrt{\rho}}{\Tr\sqrt{\rho}}$ is a normalized density operator and $\cal{R}_t=1+\Omega-\frac{1}{2}\Tr[\xi^{\pm}(I-\sqrt{\rho})]$. Here, $\Omega$ and $\xi^{\pm}$ are defined in Eq. (\ref{IIIA2-1}). By integrating the above equality, we obtain
\begin{align}
T&= \frac{\hbar\Tr\sqrt{\rho}}{2\sqrt{I_{\rho}(H)}}\int^{T}_{0}\frac{|d\braket{A_t}_{\rho_{\frac{1}{2}}}|}{\sqrt{I_{\rho}(A_t)}\cal{R}_t}\nonumber\\
&\geq \frac{\hbar\Tr\sqrt{\rho}}{2\sqrt{I_{\rho}(H)}}\left|\int^{T}_{0}\frac{d\braket{A_t}_{\rho_{\frac{1}{2}}}}{\sqrt{I_{\rho}(A_t)}\cal{R}_t}\right|\equiv T^{O}_{QSL}\label{IIIA2-1-2}
\end{align}
where we have used the triangle inequality for the integral to obtain the last inequality. Now if we parametrize the observable by time `$t$', then the right hand side of the inequality (\ref{IIIA2-1-2}) will become an integration with respect to time only. Due to the complex structure of the integral, it might not be an easy task to solve analytically but one can do that by imposing several conditions and restrictions on the observable and the density operator.  Thus the minimum time taken by  the observable $A_t$ to reach  the observable $A_T$ starting  from the observable $A_0$ at time $t=0$  is given by  $T^{O}_{QSL}$.  Quantum speed limit  for states can also be derived from  Eq. (\ref{IIIA2-1-1}).  {The meaning of the inequality (\ref{IIIA2-1-2}) is that it tells us how the lower bound of the  required time $T$ is controlled by the value of the Wigner-Yanase skew information of the Hamiltonian of the system \emph{i.e.,} $I_{\rho}(H)$. In the original quantum speed limit (see \cite{Mandelstam-Tamm,Deffner-Campbell}), it is the standard deviation of the Hamiltonian which plays the  role of controlling lower bound of the required time $T$.  The inequality (\ref{IIIA2-1-2}) reflects that the more the Wigner-Yanase skew information of the system Hamiltonian, the less the  lower bound on the required time $T$ (in general). As skew information is interpreted as measures of  quantum uncertainty, asymmetry, coherence, etc, the quantum speed limits can be improved and understood better while having the knowledge of the skew information in different contexts.} \par

Below we show that  the uncertainty equality (\ref{IIIA2-1}) is also useful to provide the commutator term $[A,B]$ in the lower bound of uncertainty relations for the product of standard deviations of  the observables (Hermitian operators) $A$ and $B$ even when $\Tr([A,B]\rho)=0$ in the RHUR  (\ref{IIB1-1}). Similarly, if $\Tr([A,B]\sqrt{\rho})=0$ for the observables in the  relation (\ref{IIIA2-1}), then one is able to obtain uncertainty inequality for the Wigner-Yanase skew information ($s=\frac{1}{2}$) where the lower bound will contain the commutator $[A,B]$.

\begin{corollary}\label{Corollary 1}
   For two incompatible observables $A$, $B\in\mathcal{L}(\mathcal{H})$,  product of the standard deviations and  product of the Wigner-Yanase skew informations satisfy the following inequalities:
    \begin{align}
        \braket{\Delta A}_{\rho}\hspace{-1mm}\braket{\Delta B}_{\rho}&\geq  \sqrt{I_{\rho}(A)I_{\rho}(B)}\nonumber\\
        &\geq \frac{\pm \frac{i}{2}[Tr([A,B]\rho^s)+\mathcal{E}_{AB}^s]}{1+\Omega_{AB}^s-\frac{1}{2}\Tr[\xi^{\pm}(I-\rho^{1-s})]}.\label{IIIA2-2}
    \end{align}
\end{corollary}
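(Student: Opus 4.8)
The plan is to obtain both inequalities by chaining the bounds already established in Eq. (\ref{IIA2-6}) with the equality of Theorem \ref{Theorem 2}, so that essentially no new computation is needed beyond a Hermitian specialization. For the left inequality, I would simply invoke Eq. (\ref{IIA2-6}), which gives $I_{\rho}(A)\leq V_{\rho}(A)=\braket{\Delta A}_{\rho}^2$ and likewise $I_{\rho}(B)\leq\braket{\Delta B}_{\rho}^2$. Since every quantity here is non-negative, multiplying these two bounds and taking the positive square root yields $\sqrt{I_{\rho}(A)I_{\rho}(B)}\leq\braket{\Delta A}_{\rho}\braket{\Delta B}_{\rho}$, which is exactly the first inequality.

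For the right inequality I would combine two ingredients. First, Eq. (\ref{IIA2-6}) also gives $I^s_{\rho}(X)\leq I_{\rho}(X)$ for every admissible $s$, whence $\sqrt{I_{\rho}(A)I_{\rho}(B)}\geq\sqrt{I^s_{\rho}(A)I^s_{\rho}(B)}$. Second, I would apply the equality of Theorem \ref{Theorem 2} to the quantity $\sqrt{I^s_{\rho}(A)I^s_{\rho}(B)}$ and specialize it to Hermitian $A$ and $B$. This is precisely where the factors of $\tfrac14$ in Eq. (\ref{IIIA2-1}) collapse to $\tfrac12$: with $A^{\dagger}=A$ and $B^{\dagger}=B$ one has $[A^{\dagger},B]+[A,B^{\dagger}]=2[A,B]$, and writing $C=A/\sqrt{I^s_{\rho}(A)}$ and $D=B/\sqrt{I^s_{\rho}(B)}$ one verifies $\xi^{+}=(C-iD)\rho^s(C+iD)=\eta^{-}$ and $\xi^{-}=(C+iD)\rho^s(C-iD)=\eta^{+}$, so that $\xi^{\pm}+\eta^{\mp}=2\xi^{\pm}$. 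Substituting these identities into Eq. (\ref{IIIA2-1}) turns the numerator into $\pm\tfrac{i}{2}\Tr([A,B]\rho^s)$ (up to the $\mathcal{E}^s_{AB}$ contribution, which vanishes at $s=\tfrac12$, the case relevant to $I_{\rho}$) and the denominator into $1+\Omega^s_{AB}-\tfrac12\Tr[\xi^{\pm}(I-\rho^{1-s})]$, reproducing the stated fraction. Chaining this equality with $\sqrt{I_{\rho}(A)I_{\rho}(B)}\geq\sqrt{I^s_{\rho}(A)I^s_{\rho}(B)}$ completes the right inequality.

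The only points genuinely requiring care are the Hermitian reduction $\xi^{\pm}=\eta^{\mp}$, which must be checked by conjugating out the daggers on $C\pm iD$, and the consistent choice of the sign ``$\pm$'' so that the numerator stays positive; one also needs $I^s_{\rho}(A),I^s_{\rho}(B)\neq0$ in order to apply Theorem \ref{Theorem 2}, which holds generically when $A$ and $B$ are incompatible. I expect no delicate estimate to arise, since both inequalities are produced by composing previously established (in)equalities rather than by any fresh optimization.
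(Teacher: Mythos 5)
Your proposal is correct and follows essentially the same route as the paper: the left inequality comes from multiplying the bounds $I_{\rho}(A)\leq V_{\rho}(A)$ and $I_{\rho}(B)\leq V_{\rho}(B)$ of Eq. (\ref{IIA2-6}), and the right one from $I^s_{\rho}\leq I_{\rho}$ together with the equality of Theorem \ref{Theorem 2} specialized to Hermitian operators, where your verification that $\xi^{\pm}=\eta^{\mp}$ (so the $\tfrac14$'s become $\tfrac12$'s) is exactly the reduction the paper leaves implicit. Your parenthetical about $s=\tfrac12$ is unnecessary since the bound is meant to hold for general $s$ with the $\mathcal{E}^s_{AB}$ term retained, but this does not affect the argument.
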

\begin{proof}
    As the inequalities in  (\ref{IIA2-6}) hold for observable $B$ also, it is easy to see by using Eq. (\ref{IIIA2-1}) for Hermitian operators $A$ and $B$ that Eq. (\ref{IIIA2-2})  holds automatically.
\end{proof}
Thus, by taking a suitable value of the variable  `$s$' in Eq. (\ref{IIIA2-2}), the commutator term can not be vanished, and  the commutator    of $A$ and $B$ remains explicit in  standard deviation and the Wigner-Yanase skew information based uncertainty relations in any situation. \par
Before we proceed towards state-independent sum uncertainty relations for  a collection of arbitrary operators based on skew informations, below we show that each skew information  (variance) of an arbitrary operator  can be expressed as the sum of skew informations (variances) of two Hermitian operators. This will be useful to derive  state-independent sum uncertainty relations for  a collection of arbitrary operators based on skew informations.\par
 It can be shown for any operator $A$ that 
 \begin{align}
 A=A_{1}\pm iA_{2},\label{IIIA2-3}
 \end{align}
 where $A_{1}=\frac{1}{2}(A+A^{\dagger})$ and $A_{2}=\mp\frac{i}{2}(A-A^{\dagger})$ are both Hermitian operators.  Then we have the following theorem:
\begin{theorem}\label{Theorem 3}
   The Wigner-Yanase-Dyson skew information (standard deviation)  of any operator $A\in\mathcal{L}(\mathcal{H})$ can always be written as
     \begin{align}
        I_{\rho}^s(A)&= I_{\rho}^s(A_1)+I_{\rho}^s(A_2),\label{IIIA2-4}\\
        \braket{\Delta A}_{\rho}^2&=\braket{\Delta A_1}^2_{\rho}+\braket{\Delta A_2}^2_{\rho},\label{IIIA2-5}
     \end{align}
  where $A_1$ and $A_2$ are defined in Eq. (\ref{IIIA2-3}).
\end{theorem}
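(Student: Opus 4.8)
The plan is to substitute the canonical decomposition $A = A_1 + iA_2$ from Eq.~(\ref{IIIA2-3}) directly into the definitions (\ref{IIA1-1}) and (\ref{IIA2-2}) and to expand, keeping careful track of the cross terms that mix $A_1$ and $A_2$. (The overall sign in $A = A_1 \pm iA_2$ is immaterial, since flipping the sign of $A_2$ changes neither $I^s_\rho(A_2)$ nor $\braket{\Delta A_2}_\rho$.) In both identities the whole content of the theorem reduces to showing that these cross terms cancel, leaving only the two ``diagonal'' contributions.

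For the standard deviation (\ref{IIIA2-5}) this is the easy direction. First I would compute $A^\dagger A + AA^\dagger = (A_1 - iA_2)(A_1 + iA_2) + (A_1 + iA_2)(A_1 - iA_2) = 2(A_1^2 + A_2^2)$, so that the symmetrized product is automatically free of the commutator $[A_1,A_2]$ and $\Tr[\rho(A^\dagger A + AA^\dagger)/2] = \Tr(\rho A_1^2) + \Tr(\rho A_2^2)$. Next, since $A_1$ and $A_2$ are Hermitian, the expectation values $\Tr(A_1\rho)$ and $\Tr(A_2\rho)$ are real, whence $|\Tr(A\rho)|^2 = |\Tr(A_1\rho) + i\Tr(A_2\rho)|^2 = [\Tr(A_1\rho)]^2 + [\Tr(A_2\rho)]^2$. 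Subtracting gives (\ref{IIIA2-5}) immediately, with no surviving cross term.

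For the skew information (\ref{IIIA2-4}) I would expand $[\rho^s,A] = [\rho^s,A_1] + i[\rho^s,A_2]$ and exploit that for Hermitian $A_i$ and Hermitian $\rho^s$ one has $[\rho^s,A_i]^\dagger = -[\rho^s,A_i]$, so $[\rho^s,A]^\dagger = -[\rho^s,A_1] + i[\rho^s,A_2]$. Inserting this together with $[\rho^{1-s},A] = [\rho^{1-s},A_1] + i[\rho^{1-s},A_2]$ into the definition (\ref{IIA2-2}) and grouping, the diagonal contributions reproduce $\tfrac{1}{2}\Tr([\rho^s,A_1]^\dagger[\rho^{1-s},A_1]) + \tfrac{1}{2}\Tr([\rho^s,A_2]^\dagger[\rho^{1-s},A_2]) = I^s_\rho(A_1) + I^s_\rho(A_2)$, while the off-diagonal part is $\tfrac{i}{2}\bigl(\Tr([\rho^s,A_2][\rho^{1-s},A_1]) - \Tr([\rho^s,A_1][\rho^{1-s},A_2])\bigr)$.

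The main obstacle is to show that this last difference vanishes. I would expand each commutator product and reduce every term to one of the trace monomials $\Tr(\rho A_1 A_2)$, $\Tr(\rho A_2 A_1)$, $\Tr(\rho^s A_1 \rho^{1-s} A_2)$, and $\Tr(\rho^{1-s} A_1 \rho^s A_2)$. The two ``contact'' terms $\Tr(\rho A_1 A_2)$ and $\Tr(\rho A_2 A_1)$ enter both traces identically and cancel pairwise, and the sandwiched terms match after a single cyclic rotation: $\Tr(\rho^{1-s} A_2 \rho^s A_1) = \Tr(\rho^s A_1 \rho^{1-s} A_2)$ and $\Tr(\rho^s A_2 \rho^{1-s} A_1) = \Tr(\rho^{1-s} A_1 \rho^s A_2)$. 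Hence the two mixed traces coincide, the off-diagonal contribution cancels, and (\ref{IIIA2-4}) follows. The only real subtlety is the bookkeeping of which power of $\rho$ sits between $A_1$ and $A_2$; cyclicity of the trace is precisely what renders the asymmetric $s \leftrightarrow 1-s$ placement harmless.
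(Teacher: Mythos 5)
Your proposal is correct and follows essentially the same route as the paper: a direct expansion of the definitions (\ref{IIA1-1}) and (\ref{IIA2-2}) under the substitution $A=A_1\pm iA_2$, with the cross terms cancelling by Hermiticity of $A_1,A_2$ and cyclicity of the trace. The paper merely organizes the skew-information computation through the intermediate identity $I^s_\rho(D)=\tfrac12[\Tr(\{D,D^\dagger\}\rho)-\Tr(\rho^{1-s}D^\dagger\rho^sD)-\Tr(\rho^{1-s}D\rho^sD^\dagger)]$ rather than expanding the commutators directly, which is the same bookkeeping in a different order.
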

\begin{proof}
    Consider Eq. (\ref{IIA2-2}) for a non-Hermitian operator $D_{\pm}$ as 
    \begin{align*}
        I_{\rho}^s(D_{\pm})&=\frac{1}{2}Tr([\sqrt{\rho},D_{\pm}]^{\dagger}[\sqrt{\rho},D_{\pm}])\nonumber\\
        &=\frac{1}{2}\big[Tr(\{D_{\pm},D_{\pm}^{\dagger}\}\rho)-Tr(\rho^{1-s}D_{\pm}^{\dagger}\rho^sD_{\pm})\nonumber\\
        &\hspace{2.8cm}-Tr(\rho^{1-s}D_{\pm}\rho^sD_{\pm}^{\dagger})\big],
    \end{align*}
    where $\{\cdot,\cdot\}$ is the anti-commutator. By substituting  $D_{\pm}=A\pm iB$ in the above equation, we obtain  Eq. (\ref{IIIA2-4}).\par
    To obtain Eq. (\ref{IIIA2-5}), expand the standard deviation of the non-Hermitian operator $A=A_1\pm iA_2$ defined in Eq. (\ref{IIA1-1}) and at the end consider the definition of standard deviations of  Hermitian operators $A_1$ and $A_2$.
\end{proof}
  Eqs. (\ref{IIIA2-4}) and (\ref{IIIA2-5}) suggest that the skew information (or variance) of a non-Hermitian operator can always  be obtained as sum of the skew informations (or variances) of two Hermitian operators.\par

To obtain state-independent uncertainty relation for a collection of  arbitrary operators, we need to impose some condition on those collection of operators. Let us, for example,  consider \{$A_k\}_{k=1}^N$, where $A_k$'s are any operators and by using  Eqs. (\ref{IIIA2-4}) and (\ref{IIIA2-5}), we have
\begin{align}
\sum_{k=1}^N\braket{\Delta A_k}_{\rho}^2&=\sum_{k=1}^N\braket{\Delta A_{k,1}}_{\rho}^2+\braket{\Delta A_{k,2}}_{\rho}^2,\label{IIIA2-6}\\
\sum_{k=1}^N I^s_{\rho}(A_k)&=\sum_{k=1}^N I^s_{\rho}(A_{k,1})+I^s_{\rho}(A_{k,2}).\label{IIIA2-7}
\end{align}
 The condition which we want to impose here from now on  is that there is \emph{no common eigenstate}  between \emph{at least two Hermitian operators}  in   \{$A_{k,1},A_{k,2}\}_{k=1}^N$. This condition  is  necessary to have  non-zero state-independent  lower bounds  for  sum of variances or skew informations of \{$A_k\}_{k=1}^N$.

\subsection{State-independent uncertainty relations}\label{IIIB}
Here,  we provide a detail derivation and analysis with examples of the state-independent uncertainty relations  based on the Wigner-Yanase (-Dyson) skew information. After that we show that  if the density operator is considered to be a pure state, then the state-independent uncertainty relations  based on the Wigner-Yanase (-Dyson) skew information become the state-independent uncertainty relations  based on standard deviation.
\subsubsection{Based on skew information}
 In Ref. \cite{Giorda-2019}, the authors have shown that the  variance of an observable can equivalently  be expressed as an average value of a positive semidefinite operator in a bipartite system and based on that, they have derived state-independent uncertainty relation for arbitrary number of Hermitian operators.  Motivated by their strategy,  we first show that the Wigner-Yanase-Dyson skew information  of any operator $A$ defined in Eq. (\ref{IIA2-2}) can  be realized in the similar way. \par
For a density operator  $\rho=\sum_i\lambda_i\ket{\lambda_i}\bra{{\lambda_i}}$ (spectral decomposition), we  define  unnormalized states $\ket{\Tilde{\Phi}^s}:=\sum_{i}\lambda^s_i\ket{\lambda_i}\ket{\lambda^*_i}$ and $\ket{\Tilde{\Phi}^{1-s}}:=\sum_{i}\lambda^{1-s}_i\ket{\lambda_i}\ket{\lambda^*_i}$, where $\ket{\Tilde{\Phi}^{s}}, \ket{\Tilde{\Phi}^{1-s}}\in \mathcal{H}\otimes\mathcal{H}$ and the state  $\ket{\psi^*}$ is defined in such a way that if $\ket{\psi}=\sum_k\alpha_k\ket{k}$ in the computational basis \{$\ket{k}\}$, then $\ket{\psi^*}=\sum_k\alpha_k^*\ket{k}$.  Also define $H_A:=\frac{1}{\sqrt{2}}(A\otimes I-I\otimes A^T)$, where `$T$'  is the transpose operation with respect to the computational basis.  Now let us calculate 
\begin{align}
    &\braket{\Tilde{\Phi}^s|H_A^{\dagger}H_A|\tilde{\Phi}^{1-s}}\nonumber\\
    &=\frac{1}{2}\sum_{i,j}\lambda_i^s\lambda_j^{1-s}\bra{\lambda_i}\bra{\lambda^*_i}\Big[A^{\dagger}A\otimes I+I\otimes (A^T)^{\dagger}A^T\nonumber\\
    &\hspace{3.15cm}-A^{\dagger}\otimes A^T-A\otimes (A^T)^{\dagger}\Big]\ket{\lambda_j}\ket{\lambda^*_j}\nonumber\\
     &=\frac{1}{2}\sum_{i,j}\lambda_i^s\lambda_j^{1-s}\Big[\braket{\lambda_i|A^{\dagger}A|\lambda_j}\delta_{ij}+\delta_{ij}\braket{\lambda_i^*|(AA^{\dagger})^T|\lambda_j^*}\nonumber\\
    &\hspace{0.4cm}-\braket{\lambda_i|A^{\dagger}|\lambda_j}\braket{\lambda_i^*|A^{T}|\lambda_j^*}-\braket{\lambda_i|A|\lambda_j}\braket{\lambda_i^*|(A^{\dagger})^T|\lambda_j^*}\Big],\nonumber\\\label{IIIB-1}
\end{align}
where we have used $(A^T)^{\dagger}=(A^{\dagger})^T$ and $\braket{\lambda_i^*|\lambda_j^*}=\delta_{ij}$ \cite{proof-0}.\par
By using  $\braket{\lambda^*_i|X^T|\lambda^*_j}=\braket{\lambda_j|X|\lambda_i}$ \cite{proof-1} in Eq. (\ref{IIIB-1}),   we have 
\begin{align}
   \braket{\Tilde{\Phi}^s|H_A^{\dagger}H_A|\tilde{\Phi}^{1-s}}&=\frac{1}{2}\Big[\Tr(A^{\dagger}A\rho)+\Tr(AA^{\dagger}\rho)\nonumber\\
   &-\Tr(\rho^{s}A^{\dagger}\rho^{1-s}A)-\Tr(\rho^{s}A\rho^{1-s}A^{\dagger})\Big]\nonumber\\
    &=I^s_{\rho}(A).\label{IIIB-2}
\end{align}
From Eq. (\ref{IIIB-2}), it is  seen that  if $s=1/2$, then the Wigner-Yanase skew information $I_{\rho}(A)$ can be viewed as the average value of the positive semidefinite operator $H^{\dagger}_AH_A$ in the state $\ket{\Phi}=\sum_{i}\sqrt{\lambda_i}\ket{\lambda_i}\ket{\lambda^*_i}$.\par

 For a non-Hermitian operator $A_k$ defined in Eq. (\ref{IIIA2-3}), the Wigner-Yanase-Dyson skew information using Eq. (\ref{IIIA2-4})  becomes
 \begin{align}
 I^s_{\rho}(A_k)&=I^s_{\rho}(A_{k,1})+I^s_{\rho}(A_{k,2})\nonumber\\
 &= \braket{\Tilde{\Phi}^s|(H_{k,1}^2+ H_{k,2}^2)|\tilde{\Phi}^{1-s}},\label{IIIB-4}
\end{align}
where  we have used Eq. (\ref{IIIB-2}) for the Hermitian operators $A_{k,1}$ and $A_{k,2}$. Here $H_{k,1}=\frac{1}{\sqrt{2}}(A_{k,1}\otimes I-I\otimes A_{k,1}^T)$ and similarly for $H_{k,2}$ also.
 Thus Eq. (\ref{IIIA2-7}) becomes
  \begin{align}
        \sum_{k=1}^NI_{\rho}^s(A_k)= \braket{\Tilde{\Phi}^s|H_{tot}|\tilde{\Phi}^{1-s}},\label{IIIB-5}
    \end{align}
    where $H_{tot}=\sum_{k=1}^NH_{k,1}^2+H_{k,2}^2$ is a positive semidefinite operator.\par
    In the following, we provide state-independent uncertainty relations for  the sum of \{$I^s_{\rho}(A_k)\}_{k=1}^N$ using Eq. (\ref{IIIB-5}) from eigenvalue minimization method \cite{Giorda-2019}. First we study the case $s\neq \frac{1}{2}$ and then the special case $s=\frac{1}{2}$. \par
    For $s\neq \frac{1}{2}$,  the lower bound of $\sum_{k=1}^NI_{\rho}^s(A_k)$ can not be obtained by finding the minimum eigenvalue of the positive semidefinite  operator $H_{tot}$ as $\braket{\Tilde{\Phi}^s|H_{tot}|\tilde{\Phi}^{1-s}}$ doesn't represent the average value of $H_{tot}$. To find the lower bound of the positive number $\braket{\Tilde{\Phi}^s|H_{tot}|\tilde{\Phi}^{1-s}}$, we apply the reverse Cauchy-Schwarz inequality \cite{Otachel-2018}.
    \begin{proposition}\label{Proposition 1}
  For the positive number $|\braket{\Tilde{\Phi}^s|H_{tot}|\tilde{\Phi}^{1-s}}|$, the following inequalities hold
  \\
  (i)
      \begin{align}
        |\braket{\Tilde{\Phi}^s|H_{tot}|\tilde{\Phi}^{1-s}}| \geq  f(\tau_1,\tau_2)\Theta_s\sqrt{\braket{\Phi^{1-s}|H_{tot}^2|\Phi^{1-s}}}.\label{IIIB1-1}
    \end{align}
    Here $\Theta_s=\sqrt{\Tr[\rho^{2s}]\Tr[\rho^{2(1-s)}]}$, 
     \begin{align*}
    f(\tau_1,\tau_2)=\underset{\tau_1,\tau_2}{max}\{\frac{1-\tau_1\tau_2}{(1+\tau_1^2)(1+\tau_2^2)}\}\leq 1,\\
   where,\hspace{2mm} \tau_1\geq \frac{1}{|\braket{\chi|\Phi^s}|^2}-1,\hspace{2mm} \tau_2\geq \frac{1}{|\braket{\chi|\Phi^{1-s}_{H_{tot}}}|^2}-1,
    \end{align*}
    \begin{align*}
       s.\hspace{0.5mm}t,\hspace{2mm}\tau_1\tau_2<1,
    \end{align*}
and $\ket{\Phi^s}=\ket{\tilde{\Phi}^s}/\sqrt{\braket{\tilde{\Phi}^s|\tilde{\Phi}^s}}$,  $\ket{\Phi^{1-s}_{H_{tot}}}=H_{tot}\ket{\tilde{\Phi}^{1-s}}/\sqrt{\braket{\tilde{\Phi}^{1-s}|H_{tot}^2|\tilde{\Phi}^{1-s}}}$, and $\ket{\chi}\in \mathcal{H}\otimes\mathcal{H}$ is an arbitrary bipartite state.\\
 (ii)
      \begin{align}
        |\braket{\Tilde{\Phi}^s|H_{tot}|\tilde{\Phi}^{1-s}}| \geq  f(\mu_1,\mu_2)\Theta_s\sqrt{\braket{\Phi^{s}|H_{tot}^2|\Phi^{s}}}.\label{IIIB1-2}
    \end{align}
Here 
       \begin{align*}
    f(\mu_1,\mu_2)=\underset{\mu_1,\mu_2}{max}\{\frac{1-\mu_1\mu_2}{(1+\mu_1^2)(1+\mu_2^2)}\}\leq 1,\\
  where,\hspace{2mm}\mu_1\geq \frac{1}{|\braket{\xi|\Phi^{1-s}}|^2}-1,\hspace{2mm}\ \mu_2\geq \frac{1}{|\braket{\xi|\Phi^{s}_{H_{tot}}}|^2}-1,
    \end{align*}
    \begin{align*}
     s.\hspace{0.5mm}t,\hspace{2mm}\mu_1\mu_2<1,
    \end{align*}
and $\ket{\Phi^{1-s}}=\ket{\tilde{\Phi}^{1-s}}/\sqrt{\braket{\tilde{\Phi}^{1-s}|\tilde{\Phi}^{1-s}}}$,  $\ket{\Phi^{s}_{H_{tot}}}=H_{tot}\ket{\tilde{\Phi}^s}/\sqrt{\braket{\tilde{\Phi}^s|H_{tot}^2|\tilde{\Phi}^s}}$, and $\ket{\xi}\in \mathcal{H}\otimes\mathcal{H}$ is an arbitrary bipartite state.
\end{proposition}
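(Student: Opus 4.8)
The plan is to recognize the bilinear number $\braket{\tilde{\Phi}^s|H_{tot}|\tilde{\Phi}^{1-s}}$ as an ordinary inner product of two vectors in $\mathcal{H}\otimes\mathcal{H}$ and then invoke the reverse Cauchy--Schwarz inequality of \cite{Otachel-2018}, which lower bounds $|\braket{x|y}|$ by a fraction of $\|x\|\,\|y\|$ controlled by the angles that $x$ and $y$ make with an arbitrary reference vector. For part (i) I would set $\ket{x}=\ket{\tilde{\Phi}^s}$ and $\ket{y}=H_{tot}\ket{\tilde{\Phi}^{1-s}}$, so that $\braket{x|y}=\braket{\tilde{\Phi}^s|H_{tot}|\tilde{\Phi}^{1-s}}$ is exactly the quantity to be bounded; by Eq.~(\ref{IIIB-5}) this equals $\sum_k I^s_\rho(A_k)$ and is thus a nonnegative real number, so the estimate is meaningful. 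For part (ii), using that $H_{tot}$ is positive semidefinite and hence Hermitian, I would write the same number as $\braket{x'|y'}$ with $\ket{x'}=H_{tot}\ket{\tilde{\Phi}^s}$ and $\ket{y'}=\ket{\tilde{\Phi}^{1-s}}$.

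The next step is the norm bookkeeping that produces the factor $\Theta_s$. Since $\rho=\sum_i\lambda_i\ketbra{\lambda_i}{\lambda_i}$ and $\braket{\lambda_i^*|\lambda_j^*}=\delta_{ij}$, direct computation gives $\braket{\tilde{\Phi}^s|\tilde{\Phi}^s}=\sum_i\lambda_i^{2s}=\Tr[\rho^{2s}]$ and $\braket{\tilde{\Phi}^{1-s}|\tilde{\Phi}^{1-s}}=\Tr[\rho^{2(1-s)}]$. For (i) this yields $\|x\|=\sqrt{\Tr[\rho^{2s}]}$ and $\|y\|=\sqrt{\braket{\tilde{\Phi}^{1-s}|H_{tot}^2|\tilde{\Phi}^{1-s}}}=\sqrt{\Tr[\rho^{2(1-s)}]}\,\sqrt{\braket{\Phi^{1-s}|H_{tot}^2|\Phi^{1-s}}}$, so that $\|x\|\,\|y\|=\Theta_s\sqrt{\braket{\Phi^{1-s}|H_{tot}^2|\Phi^{1-s}}}$, which is precisely the right-hand-side factor in (\ref{IIIB1-1}) apart from $f$. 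The analogous split for (ii) places $H_{tot}$ on the other factor and gives $\|x'\|\,\|y'\|=\Theta_s\sqrt{\braket{\Phi^s|H_{tot}^2|\Phi^s}}$, matching (\ref{IIIB1-2}).

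With $\braket{x|y}$ and the product of norms in hand, the reverse Cauchy--Schwarz inequality of \cite{Otachel-2018}, applied with the arbitrary reference vector $\ket{\chi}$ (respectively $\ket{\xi}$), gives $|\braket{x|y}|\geq f(\tau_1,\tau_2)\,\|x\|\,\|y\|$, where the angle parameters are read off from the overlaps of $\ket{\chi}$ with the normalized vectors $\ket{\Phi^s}=\ket{x}/\|x\|$ and $\ket{\Phi^{1-s}_{H_{tot}}}=\ket{y}/\|y\|$. Identifying these overlaps with $\tau_1\geq 1/|\braket{\chi|\Phi^s}|^2-1$ and $\tau_2\geq 1/|\braket{\chi|\Phi^{1-s}_{H_{tot}}}|^2-1$ and maximizing the admissible prefactor reproduces $f(\tau_1,\tau_2)$; the restriction $\tau_1\tau_2<1$ is exactly the condition guaranteeing that the bound is strictly positive. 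Part (ii) then follows verbatim with the primed vectors and the reference $\ket{\xi}$, producing $\mu_1,\mu_2$ in place of $\tau_1,\tau_2$.

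The main obstacle I anticipate is matching the hypotheses of the Otachel inequality exactly: one must verify that the reference vector may be taken arbitrary, that $\tau_1,\tau_2$ correspond to the (squared tangents of the) angles between $\ket{\chi}$ and the two normalized vectors, and that the maximization defining $f(\tau_1,\tau_2)\le 1$ together with the constraint $\tau_1\tau_2<1$ is what converts the raw reverse-Schwarz estimate into the stated closed form. Once the correct pair of vectors is chosen, everything else is routine linear-algebraic bookkeeping.
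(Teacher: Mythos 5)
Your proposal is correct and follows essentially the same route as the paper: identify $\braket{\tilde{\Phi}^s|H_{tot}|\tilde{\Phi}^{1-s}}$ as the inner product of $\ket{\tilde{\Phi}^s}$ with $H_{tot}\ket{\tilde{\Phi}^{1-s}}$ (and the Hermitian-conjugate pairing for part (ii)), compute the norms to extract $\Theta_s$, and apply the reverse Cauchy--Schwarz inequality of Otachel with an arbitrary reference vector. The only difference is one of detail: the paper does not cite the inequality as a black box but rederives it explicitly via Gram--Schmidt decomposition of $\ket{\tilde{\Phi}^s}$ and $H_{tot}\ket{\tilde{\Phi}^{1-s}}$ against $\ket{\chi}$, which is exactly the verification step you flag as the remaining obstacle.
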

\begin{proof}
See Appendix \ref{C} for the proofs.
\end{proof}

    \begin{theorem}\label{Theorem 4}
    Let \{$A_k\}_{k=1}^N\subset\cal{L}(\cal{H})$ and at least two Hermitian operators in  \{$A_{k,1},A_{k,2}\}_{k=1}^N$  defined in Eq. (\ref{IIIA2-7}) do not share any common eigenstate. Then the sum of \{$I^s_{\rho}(A_k)\}_{k=1}^N$ with $s\neq\frac{1}{2}$ is lower bounded by 
    \begin{align}
        \sum_{k=1}^NI^s_{\rho}(A_k)\geq max\{f(\tau_1,\tau_2), f(\mu_1,\mu_2)\}\Theta_s\epsilon_0,\label{IIIB2-1}
    \end{align}
    and if $\epsilon_0=0$, then 
   \begin{align}
        \sum_{k=1}^NI^s_{\rho}(A_k)\geq  max\{f_{1-s}(\tau_1,\tau_2), f_{s}(\mu_1,\mu_2)\}\Theta_s\epsilon_1, \label{IIIB2-2}
    \end{align}
   where $\epsilon_0$ and $\epsilon_1$ are the ground state and first excited state eigenvalues, respectively of the operator $H_{tot}$  defined in Eq. (\ref{IIIB-5}), and   
        \begin{align*}
   f_{1-s}(\tau_1,\tau_2)&=f(\tau_1,\tau_2)\left[1-\frac{1}{d}\frac{(\Tr\rho^{1-s})^2}{{\Tr\rho^{2(1-s)}}}\right]^{\frac{1}{2}},\\
    f_{s}(\mu_1,\mu_2)&=f(\mu_1,\mu_2)\left[1-\frac{1}{d}\frac{(\Tr\rho^{s})^2}{{\Tr\rho^{2s}}}\right]^{\frac{1}{2}}.
    \end{align*}
    \end{theorem}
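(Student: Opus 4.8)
The plan is to combine the operator representation of the summed skew information from Eq.~(\ref{IIIB-5}) with the reverse Cauchy--Schwarz bounds of Proposition~\ref{Proposition 1}, and then to feed in the spectral data of $H_{tot}$. First I would fix the spectral decomposition $H_{tot}=\sum_{j}\epsilon_j\ketbra{e_j}{e_j}$ with $0\le\epsilon_0\le\epsilon_1\le\cdots$, so that $H_{tot}^2=\sum_j\epsilon_j^2\ketbra{e_j}{e_j}$. Starting from $\sum_{k=1}^N I^s_{\rho}(A_k)=\braket{\tilde{\Phi}^s|H_{tot}|\tilde{\Phi}^{1-s}}$ and applying part (i) of Proposition~\ref{Proposition 1}, the whole problem reduces to lower bounding $\sqrt{\braket{\Phi^{1-s}|H_{tot}^2|\Phi^{1-s}}}$, and symmetrically $\sqrt{\braket{\Phi^{s}|H_{tot}^2|\Phi^{s}}}$ from part (ii).

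For the first inequality~(\ref{IIIB2-1}), since $H_{tot}\ge 0$ every eigenvalue obeys $\epsilon_j\ge\epsilon_0\ge 0$, hence $\epsilon_j^2\ge\epsilon_0^2$. Expanding in the eigenbasis, $\braket{\Phi^{1-s}|H_{tot}^2|\Phi^{1-s}}=\sum_j\epsilon_j^2|\braket{e_j|\Phi^{1-s}}|^2\ge\epsilon_0^2$ because $\ket{\Phi^{1-s}}$ is normalized. Taking the square root and inserting into Proposition~\ref{Proposition 1}(i) gives $\sum_k I^s_{\rho}(A_k)\ge f(\tau_1,\tau_2)\Theta_s\epsilon_0$; the parallel use of part (ii) gives the bound with $f(\mu_1,\mu_2)$, and the maximum of the two yields~(\ref{IIIB2-1}).

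The real content is the refinement when $\epsilon_0=0$, and here I would first identify $\ker H_{tot}$. Using $H_{A}=\frac{1}{\sqrt 2}(A\otimes I-I\otimes A^T)$ together with the vectorization $\ket{v}=\sum_{ij}V_{ij}\ket{i}\ket{j}$, a direct index computation shows $H_A\ket{v}=0$ iff $[A,V]=0$. Thus $\ket{v}\in\ker H_{tot}$ iff the matrix $V$ commutes with every $A_{k,1}$ and $A_{k,2}$. The hypothesis that at least two of these Hermitian operators share no common eigenstate forces their joint commutant to consist only of scalars, so $V\propto I$ and the kernel is one-dimensional, spanned by the normalized maximally entangled vector $\ket{e_0}=\frac{1}{\sqrt d}\sum_i\ket{\lambda_i}\ket{\lambda_i^*}$. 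In particular $\epsilon_0=0$ is the generic situation, since $\ket{e_0}$ is annihilated by each $H_A$ irrespective of $A$; this is exactly why~(\ref{IIIB2-1}) is typically vacuous and the sharper estimate is needed.

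With the ground state isolated, I would project it out: $\braket{\Phi^{1-s}|H_{tot}^2|\Phi^{1-s}}=\sum_{j\ge1}\epsilon_j^2|\braket{e_j|\Phi^{1-s}}|^2\ge\epsilon_1^2\left(1-|\braket{e_0|\Phi^{1-s}}|^2\right)$. The overlap is then computed directly from $\braket{e_0|\tilde{\Phi}^{1-s}}=\frac{1}{\sqrt d}\Tr\rho^{1-s}$ and $\braket{\tilde{\Phi}^{1-s}|\tilde{\Phi}^{1-s}}=\Tr\rho^{2(1-s)}$, giving $|\braket{e_0|\Phi^{1-s}}|^2=\frac{1}{d}(\Tr\rho^{1-s})^2/\Tr\rho^{2(1-s)}$. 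Substituting back reproduces precisely the factor $f_{1-s}(\tau_1,\tau_2)$, and the symmetric calculation with $\ket{\Phi^s}$ produces $f_s(\mu_1,\mu_2)$; the maximum of the two gives~(\ref{IIIB2-2}). The step I expect to be the main obstacle is the kernel characterization: showing rigorously that the \emph{no common eigenstate} condition collapses the joint commutant to scalars, and hence that the kernel is exactly one-dimensional (so that $\epsilon_1>0$ and the overlap formula is the only correction), is where the hypothesis does all the work, and some care is needed to handle possible degeneracies in the individual operators $A_{k,i}$.
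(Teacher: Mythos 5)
Your overall strategy coincides with the paper's: represent $\sum_k I^s_\rho(A_k)$ as $\braket{\tilde{\Phi}^s|H_{tot}|\tilde{\Phi}^{1-s}}$, invoke both parts of Proposition \ref{Proposition 1}, bound $\braket{\Phi^{1-s}|H_{tot}^2|\Phi^{1-s}}$ and $\braket{\Phi^{s}|H_{tot}^2|\Phi^{s}}$ by $\epsilon_0^2$ for the first inequality, and for $\epsilon_0=0$ project out the ground state and compute the overlap with the maximally entangled vector; your overlap computation $|\braket{e_0|\Phi^{1-s}}|^2=(\Tr\rho^{1-s})^2/(d\,\Tr\rho^{2(1-s)})$ is exactly the one in Appendix \ref{D}. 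Your observation that the maximally entangled vector is annihilated by every $H_A$, so that $\epsilon_0=0$ always and Eq. (\ref{IIIB2-1}) is always vacuous, is correct and sharper than anything the paper says. Where you diverge is the kernel characterization: you use the vectorization identity $H_A\ket{v}=0\iff[A,V]=0$ and reduce everything to the joint commutant, whereas the paper (Appendix \ref{D}) matches two Schmidt decompositions of the putative kernel vector.

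The gap you flag is real, and it does not close: ``at least two of the $A_{k,n}$ share no common eigenstate'' does \emph{not} imply that their joint commutant is trivial, hence does not imply $\dim\ker H_{tot}=1$. A concrete counterexample on $\mathcal{H}=\mathbb{C}^2\otimes\mathbb{C}^2$ ($d=4$): take $A=\sigma_x\oplus(3I+\sigma_x)$ and $B=\sigma_z\oplus(3I+\sigma_z)$ with respect to a fixed splitting $W\oplus W^{\perp}$. Both are nondegenerate Hermitian operators with no common eigenvector (a common eigenvector would yield a common eigenvector of $\sigma_x$ and $\sigma_z$ in $W$ or in $W^{\perp}$), yet $P_W$ and $P_{W^\perp}$ lie in the commutant, so $\ker(H_A^2+H_B^2)$ contains the Schmidt-rank-$2$ vectorization of $P_W$ and is at least two-dimensional; correspondingly $\rho=P_W/2$ is not maximally mixed but has $I_\rho(A)=I_\rho(B)=0$. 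In that situation $\epsilon_1=0$ as well, so the theorem's conclusion degenerates to $0\ge 0$ rather than being false, but your derivation of the factor $1-\tfrac{1}{d}(\Tr\rho^{1-s})^2/\Tr\rho^{2(1-s)}$ presupposes that $\ket{e_0}$ is the unique (maximally entangled) ground state, which fails here. You should be aware that the paper's own Appendix \ref{D} has essentially the same weakness: it only rules out the degeneracy pattern in which $d-1$ Schmidt coefficients coincide (calling it ``the extreme case'') and asserts without proof that the nondegenerate-kernel structure survives degeneracy of the $A_{k,n}$; the block-diagonal example above evades both steps. The hypothesis that actually makes the argument work is irreducibility of the family $\{A_{k,1},A_{k,2}\}$ (no common invariant subspace), which for $d=2$, or for families closed under enough noncommutativity such as $\{S_x,S_y,S_z\}$, is equivalent to having no common eigenstate --- but not in general.
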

    \begin{proof}
       There are two lower bounds of  $\sum_{k=1}^NI_{\rho}^s(A_k)=\braket{\Tilde{\Phi}^s|H_{tot}|\tilde{\Phi}^{1-s}}$  given in \emph{Proposition \ref{Proposition 1}}, namely Eqs. (\ref{IIIB1-1}) and (\ref{IIIB1-2}).  Now the minimum value of both $\braket{\Phi^{1-s}|H_{tot}^2|\Phi^{1-s}}$ and $\braket{\Phi^{s}|H_{tot}^2|\Phi^{s}}$ is  $\epsilon^2_0$, where $\epsilon_0$ is  the ground state eigenvalue  of $H_{tot}$. By taking the maximum of the lower bounds of $\braket{\Tilde{\Phi}^s|H_{tot}|\tilde{\Phi}^{1-s}}$ given in  Eqs. (\ref{IIIB1-1}) and (\ref{IIIB1-2}), we have inequality (\ref{IIIB2-1}).\par
       To prove inequality (\ref{IIIB2-2}), we have to find the minimum values of $\braket{\Phi^{1-s}|H_{tot}^2|\Phi^{1-s}}$ and $\braket{\Phi^{s}|H_{tot}^2|\Phi^{s}}$ when $\epsilon_0=0$. In Appendix \ref{D}, we show that $\braket{\Phi^{1-s}|H_{tot}^2|\Phi^{1-s}}$ and $\braket{\Phi^{s}|H_{tot}^2|\Phi^{s}}$ can have zero value if and only if both $\ket{\Phi^s}$ and $\ket{\Phi^{1-s}}$ are maximally entangled state.  Note that $\ket{\Phi^s}$ and $\ket{\Phi^{1-s}}$ being maximally entangled state, lead to $\sum_{k=1}^NI_{\rho}^s(A_k)=\braket{\Tilde{\Phi}^s|H_{tot}|\tilde{\Phi}^{1-s}}=0$ as each skew information becomes zero. Hence, we discard the possibility of zero lower bound of $\sum_{k=1}^NI_{\rho}^s(A_k)$ when $\ket{\Phi^s}$ and $\ket{\Phi^{1-s}}$ are not maximally entangled state or equivalently the density operator $\rho$ is not maximally mixed state. It is shown  in Appendix \ref{D} that  the lower bounds of  $\braket{\Phi^{1-s}|H_{tot}^2|\Phi^{1-s}}$ and $\braket{\Phi^{s}|H_{tot}^2|\Phi^{s}}$ are $\epsilon^2_1\left[1-\frac{1}{d}\frac{(\Tr\rho^{1-s})^2}{{\Tr\rho^{2(1-s)}}}\right]$ and $\epsilon^2_1\left[1-\frac{1}{d}\frac{(\Tr\rho^{s})^2}{{\Tr\rho^{2s}}}\right]$, respectively whenever $\epsilon_0=0$. Now by putting these values in Eqs. (\ref{IIIB1-1}) and (\ref{IIIB1-2}), and taking maximum of the lower bounds of $\braket{\Tilde{\Phi}^s|H_{tot}|\tilde{\Phi}^{1-s}}$ in Eqs. (\ref{IIIB1-1}) and (\ref{IIIB1-2}), we finally obtain inequality (\ref{IIIB2-2}).
    \end{proof}

\begin{theorem}\label{Theorem 5}
     Let \{$A_k\}_{k=1}^N\subset\cal{L}(\cal{H})$ and at least two Hermitian operators in  \{$A_{k,1},A_{k,2}\}_{k=1}^N$  defined in Eq. (\ref{IIIA2-7}) do not share any common eigenstate. Then the sum of \{$I_{\rho}(A_k)\}_{k=1}^N$ is lower bounded by
    \begin{align}
        \sum_{k=1}^NI_{\rho}(A_k)\geq \epsilon_0,\label{IIIB3-1}
    \end{align}
    and if $\epsilon_0=0$, then 
    \begin{align}
        \sum_{k=1}^NI_{\rho}(A_k)\geq \epsilon_1\left[1-\frac{1}{d}(\Tr\sqrt{\rho})^2\right],\label{IIIB3-2}
    \end{align}
    where $\epsilon_0$ and $\epsilon_1$ are the ground state and first excited state eigenvalues, respectively of the operator $H_{tot}$ defined in Eq. (\ref{IIIB-5}).
\end{theorem}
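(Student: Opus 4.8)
The plan is to specialize the construction behind Eq.~(\ref{IIIB-5}) to $s=\tfrac12$, where the two auxiliary vectors coincide. Indeed $\ket{\tilde\Phi^{1/2}}=\ket{\Phi}=\sum_i\sqrt{\lambda_i}\ket{\lambda_i}\ket{\lambda_i^*}$ is already normalized, since $\braket{\Phi|\Phi}=\sum_i\lambda_i=1$, so Eq.~(\ref{IIIB-5}) reduces to a genuine expectation value $\sum_{k=1}^N I_\rho(A_k)=\braket{\Phi|H_{tot}|\Phi}$. Because $H_{tot}$ is positive semidefinite, the variational characterization of its smallest eigenvalue immediately gives $\braket{\Phi|H_{tot}|\Phi}\ge \epsilon_0\braket{\Phi|\Phi}=\epsilon_0$, which is Eq.~(\ref{IIIB3-1}). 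The key structural feature I would emphasize is that, unlike the $s\neq\tfrac12$ case of Theorem~\ref{Theorem 4}, no reverse Cauchy--Schwarz step is needed here, so the bound will come out sharper, without the square root appearing in Theorem~\ref{Theorem 4}.

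For Eq.~(\ref{IIIB3-2}) I would first pin down the kernel of $H_{tot}$. Writing $H_{tot}=\sum_{k=1}^N(H_{k,1}^2+H_{k,2}^2)$ as a sum of squares of Hermitian operators, its null space is the common null space $\bigcap_{k,j}\ker H_{k,j}$. The maximally entangled state $\ket{\Omega}=\frac{1}{\sqrt d}\sum_k\ket{k}\ket{k}$ satisfies $(X\otimes I)\ket{\Omega}=(I\otimes X^T)\ket{\Omega}$ for every $X$ (the transpose/ricochet identity), hence $H_X\ket{\Omega}=0$ for all $X$ and in particular $H_{tot}\ket{\Omega}=0$; this already forces $\epsilon_0=0$ in the present $s=\tfrac12$ setting. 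Using the vectorization correspondence $H_A\ket{\Psi}=0\Leftrightarrow[A,M]=0$, where $\ket{\Psi}$ is the vectorization of $M$, the kernel of $H_{tot}$ is isomorphic to the commutant of $\{A_{k,1},A_{k,2}\}_{k=1}^N$. The hypothesis that at least two of these Hermitian operators share no common eigenstate is what makes this commutant trivial, so the kernel is exactly $\mathrm{span}\{\ket{\Omega}\}$ and the ground-state projector is $P_0=\ketbra{\Omega}{\Omega}$. I would defer the detailed verification of this last step to Appendix~\ref{D}, which already establishes that the null space is spanned by a maximally entangled state.

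Finally, with $\epsilon_0=0$ and spectral decomposition $H_{tot}=\sum_{j\ge1}\epsilon_j Q_j$, the operator inequality $H_{tot}\ge\epsilon_1(I-P_0)$ holds, so $\braket{\Phi|H_{tot}|\Phi}\ge\epsilon_1\bigl(1-\braket{\Phi|P_0|\Phi}\bigr)=\epsilon_1\bigl(1-|\braket{\Omega|\Phi}|^2\bigr)$. Evaluating the overlap in the eigenbasis, where $\ket{\Omega}=\frac{1}{\sqrt d}\sum_i\ket{\lambda_i}\ket{\lambda_i^*}$, gives $\braket{\Omega|\Phi}=\frac{1}{\sqrt d}\sum_i\sqrt{\lambda_i}=\frac{1}{\sqrt d}\Tr\sqrt{\rho}$, so $|\braket{\Omega|\Phi}|^2=\frac1d(\Tr\sqrt{\rho})^2$, which yields Eq.~(\ref{IIIB3-2}). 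The main obstacle is the kernel-dimension step: one must rule out null vectors of $H_{tot}$ beyond $\ket{\Omega}$, which is precisely where the no-common-eigenstate condition (and hence triviality of the commutant) enters; once that is granted, the remainder is a direct expectation-value estimate.
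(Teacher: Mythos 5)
Your proof of Eq.~(\ref{IIIB3-1}) is exactly the paper's: at $s=\tfrac12$ the sum is the expectation value $\braket{\Phi|H_{tot}|\Phi}$ of a positive semidefinite operator in a normalized state, and the variational principle gives $\epsilon_0$. For Eq.~(\ref{IIIB3-2}) your route is structurally the same as the paper's Appendix~\ref{D} --- identify the null space of $H_{tot}$, use the no-common-eigenstate hypothesis to force it to be spanned by the maximally entangled state, apply $H_{tot}\ge\epsilon_1(I-\ketbra{\Omega}{\Omega})$, and evaluate $|\braket{\Omega|\Phi}|^2=\frac{1}{d}(\Tr\sqrt{\rho})^2$ --- but you package the kernel analysis in the cleaner commutant/vectorization language, and you add a correct observation the paper never makes: since $H_X\ket{\Omega}=0$ for \emph{every} $X$ by the ricochet identity, $\epsilon_0=0$ automatically in this transposed construction, so Eq.~(\ref{IIIB3-1}) is always vacuous and the real content is the second bound. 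One caution on the step you defer: the claim that ``no common eigenstate'' forces the common commutant to be trivial holds only for operators with nondegenerate spectra. For example $A=\sigma_z\oplus\sigma_z$ and $B=\sigma_x\oplus\sigma_x$ on $\mathbb{C}^4$ share no eigenvector, yet both commute with $I_2\oplus 0$, so $\ker(H_A^2+H_B^2)$ is more than one-dimensional and the $\epsilon_1$ bound degenerates. This is the same weak point in the paper's own Appendix~\ref{D} (which asserts without proof that the degenerate case ``remains the same''), so your deferral to that appendix leaves you on equal footing with the published argument rather than introducing a new gap.
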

\begin{proof}
    From Eq. (\ref{IIIB-5}) with $s=\frac{1}{2}$, we have
    \begin{align}
        \sum_{k=1}^NI_{\rho}(A_k)=\braket{{\Phi}|H_{tot}|{\Phi}},\label{IIIB3-3}
    \end{align}
    where $\ket{{\Phi}}=\sum_{i}\sqrt{\lambda_i}\ket{\lambda_i}\ket{\lambda_i^*}$ is a normalized state vector. As $\sum_{k=1}^NI_{\rho}(A_k)$ is the average value of the operator $H_{tot}$ in the state $\ket{\Phi}$, the minimum value of  $\sum_{k=1}^NI_{\rho}(A_k)$ is the ground state eigenvalue of $H_{tot}$ \emph{i.e}., $\epsilon_0$. Hence Eq. (\ref{IIIB3-1}) is proved.\par
   In Appendix \ref{D}, we show that if $\epsilon_0=0$, then  the minimum value of $\braket{{\Phi}|H_{tot}|{\Phi}}$ is given by  $\epsilon_1\left[1-\frac{1}{d}(\Tr\sqrt{\rho})^2\right]$ and thus Eq. (\ref{IIIB3-2}) is proved.
    \end{proof}
    As the method of obtaining the lower bound of sum of skew informations is not unique,  the lower bound in  the inequality (\ref{IIIB3-1}) may not be the actual one  or even near to the actual  lower bound. The reason is that the ground state  of ${H_{tot}}$ may not be an entangled state (not maximally as in that case the ground state eigenvalue becomes zero) and thus for product states, the lower bound is not achievable. In the reference \cite{Giorda-2019}, the authors have shown that it is possible to make the lower bound tighter for the sum of variances of arbitrary number of observables. Unfortunately, the similar strategy can not be applied here to make the lower bound in Eq. (\ref{IIIB3-1}) tighter.  \par 
\textbf{States that saturate the lower bound.} Now let us discuss the saturation condition and the states which saturate the inequalities (\ref{IIIB3-1}) and (\ref{IIIB3-2}).  We assume $\epsilon_{d^2-1}\geq\epsilon_{d^2-2}\geq\cdots\geq\epsilon_{1}\geq\epsilon_{0}$ without any loss of generality, where  $\{\epsilon_i\}$ are the eigenvalues of  the positive semidefinite operator $H_{tot}$. Then from Eq. (\ref{IIIB3-3}), we have
     \begin{align}
        \sum_{k=1}^NI_{\rho}(A_k)=\epsilon_0|\braket{\Phi|\epsilon_0}|^2+\sum_{i=1}^{d^2-1}\epsilon_i|\braket{\Phi|\epsilon_i}|^2,\nonumber
    \end{align}
    where $\ket{\epsilon_i}$ is the eigenvector corresponding to the eigenvalue $\epsilon_i$. Clearly, $\sum_{k=1}^NI_{\rho}(A_k)=\epsilon_0$ only when $\ket{\Phi}=\ket{\epsilon_0}$ \emph{i.e}.,  $\rho=\Tr_B(\ketbra{\epsilon_0}{\epsilon_0})$. If $\epsilon_0=0$, then we have 
         \begin{align}
        \sum_{k=1}^NI_{\rho}(A_k)&=\sum_{i=1}^{d^2-1}\epsilon_i|\braket{\Phi|\epsilon_i}|^2,\nonumber\\
        &=\sum_{i=1}^{d^2-1}(\epsilon_1+\gamma_i)|\braket{\Phi|\epsilon_i}|^2\nonumber\\
        &=\epsilon_1(1-|\braket{\Phi|\epsilon_0}|^2)+\sum_{i=2}^{d^2-1}\gamma_i|\braket{\Phi|\epsilon_i}|^2,\label{IIIB3-4}
    \end{align}
    where $\gamma_i$ is defined by the relation $\epsilon_i=\epsilon_1+\gamma_i$ with $\gamma_1=0$, and we have used $\sum_{i=1}^{d^2-1}|\braket{\Phi|\epsilon_i}|^2=1-|\braket{\Phi|\epsilon_0}|^2$. To have $\sum_{i=2}^{d^2-1}\gamma_i|\braket{\Phi|\epsilon_i}|^2=0$ in Eq. (\ref{IIIB3-4}),  $\ket{\Phi}$ must belong to  $span\{\ket{\epsilon_0},\ket{\epsilon_1}\}$. As we already know that if $\epsilon_0=0$, then $\ket{\epsilon_0}$ is a maximally entangled state and hence $|\braket{\Phi|\epsilon_0}|^2=\frac{1}{d}(\Tr\sqrt{\rho})^2$. Then we have  $\sum_{k=1}^NI_{\rho}(A_k)=\epsilon_1\left[1-\frac{1}{d}(\Tr\sqrt{\rho})^2\right]$.\par
   \textbf{Goodness of the  lower bound.}  Now, let us  understand how good the  lower bound $\tilde{l}_B:=\epsilon_{0}$ obtained in Eq. (\ref{IIIB3-1}) is  from the actual unknown  lower bound  $l_B$ for sum of  \{$I_{\rho}(A_k)\}_{k=1}^N$. The  obtained lower bound $\tilde{l}_B$ using our method can not be guaranteed to be the actual unknown lower bound $l_B$. The reason is simply because the ground state of $H_{tot}$ may not be an entangled state. On the one hand,  if the  ground state eigenvalue  $\epsilon_0$ corresponds to an entangled eigenstate of $H_{tot}$, then $\tilde{l}_B=\epsilon_0$ is the actual lower bound $l_B$  which is achievable by just taking $\ket{\Phi}$ to be that entangled  ground state of $H_{tot}$.  On the other hand, if  the ground state eigenvalue $\epsilon_0$ corresponds to product eigenstate of $H_{tot}$, then $\tilde{l}_B=\epsilon_0$ is not  the actual unknown lower bound $l_B$ and thus there will be a gap between these two. Our task is now to find the possible range of lower bounds for sum of  \{$I_{\rho}(A_k)\}_{k=1}^N$  in which  $l_B$ lies.
   We assume that  $H_{tot}$ has spectral decomposition \{$\{\epsilon_0,\ket{\epsilon_0}\}, \{\epsilon_i,\ket{\epsilon_i}\}_{i=1}^{K}\}$, where $K=d^2-1$.  
   Now, we start from Eq. (\ref{IIIB3-3}):
 \begin{align}
 \sum_{k=1}^NI_{\rho}(A_k)&=\braket{\Phi|H_{tot}|\Phi}\nonumber\\
 &=\epsilon_0|\braket{\Phi|\epsilon_0}|^2+\sum_{i=1}^{K}\epsilon_i|\braket{\Phi|\epsilon_i}|^2\nonumber\\
 &\leq \epsilon_0|\braket{\Phi|\epsilon_0}|^2+\epsilon_K\sum_{i=1}^{K}|\braket{\Phi|\epsilon_i}|^2\nonumber\\
 &=\epsilon_0|\braket{\Phi|\epsilon_0}|^2+\epsilon_K(1-|\braket{\Phi|\epsilon_0}|^2)\nonumber\\
 &=\epsilon_K-(\epsilon_K-\epsilon_0)|\braket{\Phi|\epsilon_0}|^2,\label{IIIB3-5}
  \end{align}
 where we have used the fact that $\epsilon_i\leq\epsilon_K$ for $i=1,\cdots,K-1$ and $\ketbra{\epsilon_0}{\epsilon_0}+\sum_{i=1}^K\ketbra{\epsilon_i}{\epsilon_i}=I$.   Note that, the upper bound in Eq. (\ref{IIIB3-5}) depends on $\ket{\epsilon_0}$ which indicates that the upper bound can have multiple values depending upon the nature of $\ket{\epsilon_0}$.  Now from Eq. (\ref{IIIB3-1}) and (\ref{IIIB3-5}), we have 
   \begin{align}
 \epsilon_0\leq\sum_{k=1}^NI_{\rho}(A_k)\leq\epsilon_K-(\epsilon_K-\epsilon_0)|\braket{\Phi|\epsilon_0}|^2.\label{IIIB3-6}
  \end{align}
 Eq. (\ref{IIIB3-6}) suggests that the actual unknown lower bound $l_B$ of $\sum_{k=1}^NI_{\rho}(A_k)$ must lie in the interval $[\epsilon_0,\epsilon_K-(\epsilon_K-\epsilon_0)|\braket{\Phi|\epsilon_0}|^2]$.  Let us now study two cases in Eq. (\ref{IIIB3-6}):\\
 ($i$) If  $\ket{\epsilon_0}$ is an  entangled state, then by taking  $\ket{\Phi}=\ket{\epsilon_0}$, the interval becomes $[\epsilon_0,\epsilon_0]$ \emph{i.e}., width of the interval is zero  and we conclude that the lower bound in Eq. (\ref{IIIB3-1}) is achievable for the density operator  $\rho=\Tr_B(\ketbra{\epsilon_0}{\epsilon_0})$.\\
 ($ii$) If  $\ket{\epsilon_0}=\ket{\tilde{l}_B}\ket{\tilde{l}^*_B}$, a product state, then the minimum value  of the upper bound in Eq. (\ref{IIIB3-6}) is given by $\epsilon_K-(\epsilon_K-\epsilon_0)\lambda_d$ as $max|\braket{\Phi|\epsilon_0}|^2=\lambda_d$ \cite{Wolf-2012}, where we considered  $\ket{\Phi}=\sum_{i=1}^{d-1}\sqrt{\lambda_i}\ket{\lambda_i}\ket{\lambda_i^*}+\sqrt{\lambda_d}\ket{\lambda_d}\ket{\lambda_d^*}$ (in Schmidt decomposition form) and $\lambda_{d}$ being  the largest   Schmidt coefficient.  Thus the interval in which the actual unknown lower bound $l_B$ lies is  $[\epsilon_0,\epsilon_0\lambda_d+\epsilon_{K} (1-\lambda_d)]$.\par
   The goodness of the  lower bound $\tilde{l}_B$ now depends on the width of the interval $[\epsilon_0,\epsilon_0\lambda_d+\epsilon_{K} (1-\lambda_d)]$. The smaller the interval, the better the approximation of the lower bound $\tilde{l}_B$ . The condition for sufficiently  smaller  interval is $\epsilon_{K} (1-\lambda_d)\ll \epsilon_0\lambda_d$.
    \par
 \textbf{Example 1.}  We provide here analytical results for   spin-$\frac{1}{2}$ and spin-1 systems and  remark on the lower bound of sum of the Wigner-Yanase skew informations   in a spin-$j$ system based on numerical results at the end of this example. \par
 ($i$) \emph{Spin-$\frac{1}{2}$:} For  the observables $S_x=\frac{1}{2}\sigma_x$, $S_y=\frac{1}{2}\sigma_y$ and $S_z=\frac{1}{2}\sigma_z$,  the ground state eigenvalue of $H_{tot}$ defined in \emph{Theorem \ref{Theorem 5}} is found to be $\epsilon_0=0$. Thus we have to look for the first excited state eigenvalue which is given by $\epsilon_1=1$  and   according to  \emph{Theorem \ref{Theorem 5}},  we have
 \begin{align*}
  \sum_{k=x,y,z}I_{\rho}(S_k)\geq [1-\frac{1}{d}(\Tr\sqrt{\rho})^2],
 \end{align*}
    where $d=2$. \par
    ($ii$) \emph{Spin-$1$:}  The spin-$1$ operators are defined as
     \begin{equation*}
S_x=
 \frac{1}{\sqrt{2}}\begin{pmatrix}
 0 & 1 & 0\\
1 & 0 & 1\\
0& 1 & 0
\end{pmatrix},
\hspace{6mm}
S_y= \frac{1}{\sqrt{2}}
\begin{pmatrix}
 0 & -i & 0\\
i& 0 & -i\\
0 & i & 0
\end{pmatrix},
\end{equation*}
  \begin{equation*}
\hspace{6mm}
S_z=\begin{pmatrix}
 1 & 0 & 0\\
0 & 0 & 0\\
0& 0 & -1
\end{pmatrix}.
\end{equation*}
The ground state eigenvalue of $H_{tot}$ defined in \emph{Theorem \ref{Theorem 5}} is found to be $\epsilon_0=0$. Thus we  look for the first excited state eigenvalue which is given by $\epsilon_1=1$  and   according to \emph{Theorem \ref{Theorem 5}},  we have
 \begin{align*}
  \sum_{k=x,y,z}I_{\rho}(S_k)\geq [1-\frac{1}{d}(\Tr\sqrt{\rho})^2],
 \end{align*}
    where $d=3$. \par
($iii$) \emph{Systems with large spin:} For systems with large spin number \emph{e.g}.,  $j=3/2 , 2, \cdots$, the numerical evidences show that the ground state eigenvalue of $H_{tot}$ defined in \emph{Theorem \ref{Theorem 5}} for  spin operators in three orthogonal directions $x$, $y$ and $z$  with arbitrary   spin number is zero \emph{i.e}., $\epsilon_0=0$.  The first excited state eigenvalue is found to be $\epsilon_1=1$. Thus, based on numerical evidences, we conjucture that 
\begin{align*}
  \sum_{k=x,y,z}I_{\rho}(S_k)\geq [1-\frac{1}{d}(\Tr\sqrt{\rho})^2],
 \end{align*}
    where $d=2j+1$. The density operator  which  saturates the lower bound of the above inequality has already been discussed [see Eq. (\ref{IIIB3-4})]. \par
    
    \subsubsection{Based on standard deviation}
As mentioned in the reference \cite{Giorda-2019} that there is no unique way to achieve the lower bound of the sum of variances of $N$ number of Hermitian operators from eigenvalue minimization.   Below we show another  similar method using  the idea of  eigenvalue minimization to achieve the lower bound of the sum of variances of $N$ number of arbitrary  operators. This is the generalization of the work  \cite{Giorda-2019} for  $N$ number of arbitrary operators.  The result  follows from \emph{Theorem \ref{Theorem 5}}.
\begin{corollary}\label{Corollary 2}
 Let \{$A_k\}_{k=1}^N\subset\cal{L}(\cal{H})$ and at least two Hermitian operators in  \{$A_{k,1},A_{k,2}\}_{k=1}^N$  defined in Eq. (\ref{IIIA2-7}) do not share any common eigenstate. Then the sum of \{$\braket{\Delta A_k}_{\psi}^2\}_{k=1}^N$ is lower bounded by 
    \begin{align}
        \sum_{k=1}^N\braket{\Delta A_k}_{\psi}^2\geq \epsilon_0,\label{IIIB4-1}
    \end{align}
    and if $\epsilon_0=0$, then 
    \begin{align}
        \sum_{k=1}^N\braket{\Delta A_k}_{\psi}^2\geq \epsilon_1\left[1-\frac{1}{d}\right],\label{IIIB4-2}
    \end{align}
    where $\epsilon_0$ and $\epsilon_1$ are ground state and first excited state eigenvalues, respectively of the operator $H_{tot}$  defined in Eq. (\ref{IIIB-5}).
\end{corollary}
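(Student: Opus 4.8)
The plan is to derive Corollary \ref{Corollary 2} as a direct specialization of Theorem \ref{Theorem 5} to a pure state $\rho=\ketbra{\psi}{\psi}$, the only new ingredient being the collapse of the Wigner-Yanase skew information onto the variance for pure states. First I would use idempotence, $\sqrt{\rho}=\rho$, so that Eq. (\ref{IIA2-3}) at $s=\tfrac{1}{2}$ gives, for any Hermitian $A$, $I_{\rho}(A)=\Tr(A^2\rho)-\Tr(\rho A\rho A)=\braket{\psi|A^2|\psi}-\braket{\psi|A|\psi}^2=\braket{\Delta A}_{\psi}^2$, where I used $\rho A\rho=\braket{\psi|A|\psi}\rho$. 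For a general operator $A_k=A_{k,1}\pm iA_{k,2}$ I would then invoke the additive decompositions of Theorem \ref{Theorem 3}, Eqs. (\ref{IIIA2-4}) and (\ref{IIIA2-5}), to upgrade this to $I_{\rho}(A_k)=I_{\rho}(A_{k,1})+I_{\rho}(A_{k,2})=\braket{\Delta A_{k,1}}_{\psi}^2+\braket{\Delta A_{k,2}}_{\psi}^2=\braket{\Delta A_k}_{\psi}^2$, valid for non-Hermitian $A_k$ as well.

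Next I would observe that the operator $H_{tot}$ of Eq. (\ref{IIIB-5}) is constructed purely from the Hermitian parts $A_{k,1},A_{k,2}$ and carries no dependence on the state; hence its ground-state and first-excited eigenvalues $\epsilon_0,\epsilon_1$ are literally the same objects as in Theorem \ref{Theorem 5}. Feeding the pure-state identity $\sum_{k=1}^N I_{\rho}(A_k)=\sum_{k=1}^N\braket{\Delta A_k}_{\psi}^2$ into the bound (\ref{IIIB3-1}) then yields (\ref{IIIB4-1}) at once. The hypothesis that at least two operators in $\{A_{k,1},A_{k,2}\}$ share no common eigenstate is inherited unchanged and is what secures a nontrivial (strictly positive) bound.

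For the sharpened inequality when $\epsilon_0=0$, I would carry the same substitution into (\ref{IIIB3-2}) and simplify the state-dependent prefactor. Since a pure state satisfies $\Tr\sqrt{\rho}=\Tr\rho=1$, the quantity $(\Tr\sqrt{\rho})^2$ equals $1$, so the factor $\big[1-\tfrac{1}{d}(\Tr\sqrt{\rho})^2\big]$ collapses to $\big[1-\tfrac{1}{d}\big]$, producing exactly (\ref{IIIB4-2}).

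Because the whole argument is a translation rather than a fresh estimate, I do not expect a genuine obstacle in establishing the inequalities themselves. The one place demanding care is interpretive: for a pure state the associated bipartite vector specializes to $\ket{\Phi}=\ket{\psi}\ket{\psi^*}$, which is always a \emph{product} state. This never threatens the validity of (\ref{IIIB4-1}) or (\ref{IIIB4-2}), but it means the accompanying saturation and goodness-of-bound discussion from Theorem \ref{Theorem 5} must be reinterpreted with $\ket{\Phi}$ confined to product vectors; in particular one should check separately whether the relevant eigenstate of $H_{tot}$ is attainable by such a product vector before claiming tightness.
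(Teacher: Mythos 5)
Your proposal is correct and follows essentially the same route as the paper: specialize Theorem \ref{Theorem 5} to a pure state, where the Wigner--Yanase skew information reduces to the variance and $(\Tr\sqrt{\rho})^2=1$ turns the prefactor into $[1-\tfrac{1}{d}]$. Your closing remark that $\ket{\Phi}=\ket{\psi}\ket{\psi^*}$ is necessarily a product vector, so saturation must be re-examined, is a sound extra observation but not needed for the corollary itself.
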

\begin{proof}
For a pure state $\rho=\ketbra{\psi}{\psi}$,  skew information defined in Eq. (\ref{IIA2-1}) becomes variance \emph{i.e}.,  $I_{\psi}(A_k)=\braket{\Delta A_k}_{\psi}^2$ and thus the proofs of Eqs. (\ref{IIIB4-1}) and (\ref{IIIB4-2}) automatically follow from the results of \emph{Theorem \ref{Theorem 5}}.
\end{proof}
Later, we will show that our  result of  \emph{Corollary \ref{Corollary 2}}  can sometimes provide tighter state-independent lower bound than the method given in  Ref. \cite{Giorda-2019}.\par
Now we provide the strategy similar to  Ref. \cite{Giorda-2019} to have tighter lower bound for $\sum_{k=1}^N\braket{\Delta A_k}_{\psi}^2$ compared to the one given in Eq. (\ref{IIIB4-1}).
 \begin{proposition}\label{Proposition 2}
 For each $k=1,2,\cdots,N$ and $n=1,2$, and  $\alpha\in[a^{(1)}_{k,n}, a^{(d)}_{k,n}]$, define $H_{tot,k,n}^{\alpha}:=H_{tot}+A_{k,n}^{\alpha}\otimes A_{k,n}^{\alpha}$, where $H_{tot}$ is defined in Eq. (\ref{IIIB-5}) and ${A^{\alpha}_{k,n}}=A_{k,n}-\alpha I$. Here,    \{$a_{k,n}^{(i)}\}^{d}_{i=1}$  are the eigenvalues of the Hermitian operator $A_{k,n}$,  and $a_{k,n}^{(1)}$ and $a_{k,n}^{(d)}$ are minimum and maximum eigenvalues of  $A_{k,n}$, respectively. Then the best lower bound for $\sum_{k^{\prime}=1}^N\braket{\Delta A_{k^{\prime}}}_{\psi}^2$ is given by 
\begin{align}
 \sum_{k^{\prime}=1}^N\braket{\Delta A_{k^{\prime}}}_{\psi}^2\geq \underset{k,n}{max}\hspace{2mm}\underset{\alpha}{min}\hspace{1mm}\epsilon_{0,k,n}^{\alpha},\label{IIIB5-1}
\end{align}
where $\epsilon_{0,k,n}^{\alpha}$ is the ground state eigenvalue of $H_{tot,k,n}^{\alpha}$.
\end{proposition}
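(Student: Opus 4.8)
The plan is to exploit the fact that, for a pure state $\rho=\ketbra{\psi}{\psi}$, the bipartite vector $\ket{\Phi}=\sum_i\sqrt{\lambda_i}\ket{\lambda_i}\ket{\lambda_i^*}$ of \emph{Theorem \ref{Theorem 5}} collapses to the product vector $\ket{\Phi}=\ket{\psi}\ket{\psi^*}$, so that Eq. (\ref{IIIB3-3}) with $s=\tfrac12$ reads $\sum_{k'=1}^N\braket{\Delta A_{k'}}_{\psi}^2=\braket{\Phi|H_{tot}|\Phi}$. The whole point of the tightening is that the bare bound $\epsilon_0$ of \emph{Corollary \ref{Corollary 2}} is obtained by minimizing $\braket{\Phi|H_{tot}|\Phi}$ over \emph{all} (possibly entangled) normalized vectors, whereas the physically admissible $\ket{\Phi}$ is constrained to the product form. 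The added term $A_{k,n}^{\alpha}\otimes A_{k,n}^{\alpha}$ is designed to raise the spectrum of $H_{tot}$ on entangled directions while leaving the value on the product manifold unchanged for a suitable $\alpha$.

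The key computation is that the expectation of the added term in the product vector factorizes,
\begin{align}
\braket{\Phi|A_{k,n}^{\alpha}\otimes A_{k,n}^{\alpha}|\Phi}=\braket{\psi|A_{k,n}^{\alpha}|\psi}\,\braket{\psi^*|A_{k,n}^{\alpha}|\psi^*}.\nonumber
\end{align}
First I would observe that choosing $\alpha=\alpha^*:=\braket{A_{k,n}}_{\psi}$ makes the first factor $\braket{\psi|A_{k,n}^{\alpha^*}|\psi}=\braket{A_{k,n}}_{\psi}-\alpha^*=0$, so the entire product vanishes regardless of the second factor. Since $A_{k,n}$ is Hermitian, its mean $\alpha^*$ automatically lies in $[a_{k,n}^{(1)},a_{k,n}^{(d)}]$, so this $\alpha$ is admissible. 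Hence for this particular value,
\begin{align}
\sum_{k'=1}^N\braket{\Delta A_{k'}}_{\psi}^2=\braket{\Phi|H_{tot}|\Phi}=\braket{\Phi|H_{tot,k,n}^{\alpha^*}|\Phi}.\nonumber
\end{align}

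Because $H_{tot,k,n}^{\alpha^*}$ is Hermitian — it is $H_{tot}\ge 0$ plus the Hermitian term $A_{k,n}^{\alpha^*}\otimes A_{k,n}^{\alpha^*}$ — and $\ket{\Phi}$ is normalized, the right-hand side is bounded below by the ground-state eigenvalue $\epsilon_{0,k,n}^{\alpha^*}$. Since the realized $\alpha^*$ depends on the unknown state, I would then weaken this to the state-independent estimate $\epsilon_{0,k,n}^{\alpha^*}\ge\min_{\alpha}\epsilon_{0,k,n}^{\alpha}$, which therefore holds for \emph{every} $\ket{\psi}$ at each fixed pair $(k,n)$. As the left-hand side does not depend on $(k,n)$, the inequality $\sum_{k'}\braket{\Delta A_{k'}}_{\psi}^2\ge\min_{\alpha}\epsilon_{0,k,n}^{\alpha}$ may be imposed for every $(k,n)$ at once, and taking the maximum over $k,n$ yields $\sum_{k'=1}^N\braket{\Delta A_{k'}}_{\psi}^2\ge\max_{k,n}\min_{\alpha}\epsilon_{0,k,n}^{\alpha}$, which is exactly (\ref{IIIB5-1}).

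The step requiring the most care is the logical ordering of the quantifiers: the equality holds only at the state-dependent value $\alpha^*$, and one must argue that replacing $\alpha^*$ by the worst case $\min_\alpha$ produces a bound valid \emph{uniformly} over all states, so that the subsequent maximization over $(k,n)$ is legitimate (an inequality $x\ge y_i$ for all $i$ gives $x\ge\max_i y_i$). I would also confirm that the $\alpha$-minimization need only range over the compact interval $[a_{k,n}^{(1)},a_{k,n}^{(d)}]$, since every attainable mean $\alpha^*$ lives there and the construction is never invoked outside it. Beyond this bookkeeping the argument is essentially a one-line factorization, so I anticipate no serious analytic obstacle.
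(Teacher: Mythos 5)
Your proof is correct and is essentially the paper's own argument: the paper partitions the pure states into the level sets $\mathcal{S}^{\alpha}_{k,n}=\{\ket{\psi}:\braket{\psi|A_{k,n}|\psi}=\alpha\}$ (on which the added term has vanishing expectation) and then minimizes over $\alpha\in[a^{(1)}_{k,n},a^{(d)}_{k,n}]$, which is the same quantifier bookkeeping as your choice $\alpha^{*}=\braket{A_{k,n}}_{\psi}$ followed by $\epsilon^{\alpha^{*}}_{0,k,n}\geq\min_{\alpha}\epsilon^{\alpha}_{0,k,n}$. The only piece the paper includes that you omit is the explicit check that $H^{\alpha}_{tot,k,n}\geq 0$ (via $H^{\alpha}_{k,n}=H_{k,n}$), which is not needed for the validity of the inequality but guarantees the resulting bound is non-negative.
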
  
\begin{proof}
First we have to show that  $H_{tot,k,n}^{\alpha}\geq 0$ \emph{i.e}., positive semidefinite. Note  that  $H^{\alpha}_{k,n}=\frac{1}{\sqrt{2}}(A^{\alpha}_{k,n}\otimes I-I\otimes {A^{\alpha}_{k,n}}^T)=H_{k,n}$ and  hence 
\begin{align}
H_{tot}&=\sum_{k^{\prime}=1}^N\sum_{n^{\prime}=1}^2H^2_{k^{\prime},n^{\prime}}\nonumber\\
&=\sum_{k^{\prime}=1}^N\sum_{n^{\prime}=1}^2(H^{\alpha}_{k^{\prime},n^{\prime}})^2\nonumber\\
&=\sum_{{k^{\prime}}\neq k}^N\sum_{{n^{\prime}}\neq n}^2(H^{\alpha}_{k^{\prime},n^{\prime}})^2+\frac{1}{2}[(A^{\alpha}_{k,n})^2\otimes I+I\otimes ({A^{\alpha}_{k,n}}^T)^2]\nonumber\\
&-A^{\alpha}_{k,n}\otimes {A^{\alpha}_{k,n}}^T.\nonumber
\end{align}
Now, we have $H_{tot,k,n}^{\alpha}=H_{tot}+A^{\alpha}_{k,n}\otimes {A^{\alpha}_{k,n}}^T=\sum_{{k^{\prime}}\neq k}^N\sum_{{n^{\prime}}\neq n}^2(H^{\alpha}_{k^{\prime},n^{\prime}})^2+\frac{1}{2}[(A^{\alpha}_{k,n})^2\otimes I+I\otimes ({A^{\alpha}_{k,n}}^T)^2]\geq0$.\par
By using  $H_{tot}=H_{tot,k,n}^{\alpha}-A^{\alpha}_{k,n}\otimes {A^{\alpha}_{k,n}}^T$ and  $I_{\psi}(A_{k^{\prime}})=\braket{\Delta A_{k^{\prime}}}_{\psi}^2$  with $\ket{\Phi}=\ket{\psi}\ket{\psi^*}$ in  Eq. (\ref{IIIB3-3}),  it  becomes  
\begin{align}
        \sum_{k^{\prime}=1}^N\braket{\Delta A_{k^{\prime}}}_{\psi}^2=\bra{\psi}\braket{{\psi^*}|[H_{tot,k,n}^{\alpha}-A^{\alpha}_{k,n}\otimes {A^{\alpha}_{k,n}}^T]|{\psi}}\ket{\psi^*}.\label{IIIB5-2}
    \end{align}
To prove the minimization part of the right hand side of Eq. (\ref{IIIB5-1}), let us define $\cal{S}_{k,n}^{\alpha}:=\{\ket{\psi}\in \mathcal{H}|\braket{\psi|{A}_{k,n}|\psi}=\alpha\}$, then $\bra{\psi}\braket{\psi^*|A^{\alpha}_{k,n}\otimes {A^{\alpha}_{k,n}}^T|\psi}\ket{\psi^*}=0$ $\forall$ $\ket{\psi}\in \cal{S}_{k,n}^{\alpha}$ and
\begin{align}
 \sum_{k^{\prime}=1}^N\braket{\Delta A_{k^{\prime}}}_{\psi}^2&=\bra{\psi}\braket{\psi^*|H_{tot,k,n}^{\alpha}-A^{\alpha}_{k,n}\otimes {A^{\alpha}_{k,n}}^T|\psi}\ket{\psi^*}\nonumber\\
 &=\bra{\psi}\braket{\psi^*|H_{tot,k,n}^{\alpha}|\psi}\ket{\psi^*}\nonumber\\
 &\geq \epsilon_{0,k,n}^{\alpha}.\label{IIIB5-3}
\end{align}
 Now,  as $\alpha=\braket{\psi|{A}_{k,n}|\psi}$ and  $A_{k,n}$ is a  Hermitian operator,  $\alpha$ can take any values from $[a^{(1)}_{k,n}, a^{(d)}_{k,n}]$.  By varying the  values of $\alpha$, we obtain similar inequalities like Eq. (\ref{IIIB5-3}) and it is obvious that $\cup_{\alpha\in[a^{(1)}_{k,n}, a^{(d)}_{k,n}]}\cal{S}_{k,n}^{\alpha}\equiv\cal{H}$. Thus, if Eq. (\ref{IIIB5-3}) is to be valid for any state $\ket{\psi}$, we must take the minimum of $\epsilon_{0,k,n}^{\alpha}$ over $\alpha\in[a^{(1)}_{k,n}, a^{(d)}_{k,n}]$.\par
To prove the maximization part of the right hand side of Eq. (\ref{IIIB5-1}), note that the left hand side is independent of the index $k$ and $n$, and hence the maximization over $k$ and $n$ is the most preferable choice to have the tightest lower bound in this scenario.
\end{proof}
The above mentioned \emph{Proposition \ref{Proposition 2}} is valid without the transposition operation also and in fact without  transposition operation, \emph{Proposition \ref{Proposition 2}} is same as    the work of Ref. \cite{Giorda-2019}.  \par
\textbf{Example 2.}
 We will  now show that \emph{Corollary \ref{Corollary 2}} can provide better results for the following example compared to Ref. \cite{Giorda-2019}. Consider the operators:
  \begin{equation*}
A_1=
 \begin{pmatrix}
 0 & 1 & 0\\
1 & 0 & i\\
0& -i & 0
\end{pmatrix},
\hspace{6mm}A_2=
\begin{pmatrix}
 1 & 0 & 0\\
0 & 0 & 0\\
0& 0 & -1
\end{pmatrix},
\end{equation*}
  \begin{equation*}
A_3=
 \begin{pmatrix}
 1 & 1 & 0\\
1 & 0 & -1\\
0& -1 & -1
\end{pmatrix},
\hspace{6mm}A_4=
\begin{pmatrix}
 1 & 0 & i\\
0 & 0 & 0\\
-i& 0 & -1
\end{pmatrix}.
\end{equation*}
 We have found that the ground state eigenvalue $\epsilon_0$ of $H_{tot}$ in \emph{Corollary \ref{Corollary 2}} is zero and hence the lower bound of $V_{1234}:=\sum_{k=1}^4\braket{\Delta A_k}_{\psi}^2$  is given by $\epsilon_1(1-\frac{1}{d})=2.32339(1-\frac{1}{3})=1.5489$ [from Eq. (\ref{IIIB4-2})]. We have checked that the strategy provided in \emph{Proposition \ref{Proposition 2}} does not give higher value  of the lower bound of $V_{1234}$ than 1.5489. To compare with the work of Ref.  \cite{Giorda-2019}, the best lower bound of $V_{1234}$  is given by 1.3993 $<$ 1.5489 and hence our strategy gives tighter state-independent lower bound than  than the one given in  Ref. \cite{Giorda-2019} for  this particular example.

  \subsection{State-independent uncertainty relations for channels} \label{IIIC}
  The authors in Ref. \cite{Luo-Sun-2017} have defined a coherence measure of a state with respect to the L$\ddot{u}$ders measurement $\Pi=\{\Pi_i\}$ with $\sum_i\Pi_i=I$ involving the Wigner-Yanase skew information. Namely, they defined the coherence measure of a state $\rho$ as $\cal{C}(\rho|\Pi):=\sum_iI_{\rho}(\Pi_i)$ and showed that $\cal{C}(\rho|\Pi)$ satisfies the reasonable requirements for a coherence measure that is: ($i$) $\cal{C}(\rho|\Pi)\geq 0$ and equality holds iff the state is an incoherent state [here incoherent states are defined as $\cal{I}_L:=\{\sigma: \Pi(\sigma)=\sigma\}$, where $\Pi(\sigma)=\sum_i\Pi_i\sigma\Pi_i$], ($ii$) $\cal{C}(\rho|\Pi)$ is convex in $\rho$, and ($iii$) monotonicity under incoherent operations:  $\cal{C}(\Lambda_I(\rho)|\Pi)\leq \cal{C}(\rho|\Pi)$, where $\Lambda_I$ is an incoherent operation (which cannot create coherence from incoherent states).  Later they have generalized the coherence measure of the state with respect to an arbitrary  quantum channel \cite{Luo-Sun-2018} in the following way.   An arbitrary quantum channel can be defined with the Kraus representation  as
  \begin{align}
  \cal{E}(\rho)=\sum_{i=1}^mK^i\rho {K^i}^{\dagger},\label{IIIC1-1}
  \end{align}
    where \{$K^i\}^{m}_{i=1}$ are Kraus operators with $\sum_i{K^i}^{\dagger}K^i=I$ and $\rho$ is the input state.  Then  the Wigner-Yanase skew information associated with a channel is defined as 
 \begin{align}
 I_{\rho}(\cal{E})=\sum_{i=1}^mI_{\rho}(K^i),\label{IIIC1-2}
 \end{align}
 where $I_{\rho}(K^i)$ is defined in Eq. (\ref{IIA2-1}) for the Kraus operator $K^i$, and  $I_{\rho}(\cal{E})$ is regarded as a bona fide measure for coherence of $\rho$ with respect to the channel  $\cal{E}$ \cite{Luo-Sun-2018}.   The authors have shown that $I_{\rho}(\cal{E})$ satisfies many interesting and desirable properties of a coherence measure. \par
  Now as the measure of coherence depends on orthonormal basis or more generally a channel involving the  Wigner-Yanase skew information, then it is natural to ask if there is some trade-off relation between the coherence measures in different orthonormal bases (or with respect to channels). Singh \emph{et al}. \cite{Singh-Pati-2016} derived such trade-off relation for coherence measures in different orthonormal bases and Fu \emph{et al}. \cite{Fu-Luo-2019} derived state-dependent trade-off relation for coherence measures with respect to two channels based on the Wigner-Yanase skew information.  In the following theorem, we provide state-independent uncertainty relation for coherence measures with respect to a quantum collection of channels based on the Wigner-Yanase skew information defined in Eq. (\ref{IIIC1-2}).\par
 For a collection of  channels \{$\cal{E}_l\}_{l=1}^{N}$, the sum of \{$I_{\rho}(\cal{E}_l)\}^N_{l=1}$ can be rewritten as 
 \begin{align}
 \sum_{l=1}^NI_{\rho}(\cal{E}_l)=\sum_{l=1}^N\sum_{i=1}^{m_l}I_{\rho}(K_{l,1}^i)+I_{\rho}(K_{l,2}^i),\label{IIIC1-2-1}
 \end{align}
 where we have used Eq. (\ref{IIIC1-2}) for each channel $\cal{E}_l$ and taken sum over $l$, and Eq. (\ref{IIIA2-4}) for the operators $K^i_{l}$ which can alway be written as  $K^i_l=K^i_{l,1}\pm i K^i_{l,2}$ with  $K^i_{l,1}$ and $K^i_{l,2}$ being Hermitian operators according to Eq. (\ref{IIIA2-3}).
 \begin{theorem}\label{Theorem 6}
 Let \{$\cal{E}_l\}_{l=1}^{N}$ be a collection  of quantum channels and each channel $\cal{E}_l$ is represented by the Kraus operators \{$K^i_l\}^{m_l}_{i=1}$, and at least two Hermitian operators in  \{\{$K^i_{l,1},K^i_{l,2}\}_{i=1}^{m_l}\}_{l=1}^N$  defined in Eq. (\ref{IIIC1-2-1}) do not share any common eigenstate. Then the sum of \{$I_{\rho}(\cal{E}_l)\}^N_{l=1}$ is lower bounded by 
 \begin{align}
 \sum_{l=1}^NI_{\rho}(\cal{E}_l)\geq \epsilon_0,\label{IIIC1-3}
 \end{align}
 and if $\epsilon_0=0$, then 
 \begin{align}
 \sum_{l=1}^NI_{\rho}(\cal{E}_l)\geq \epsilon_1\left[1-\frac{1}{d}(\Tr\sqrt{\rho})^2\right],\label{IIIC1-4}
  \end{align}
  where $\epsilon_0$ and $\epsilon_1$ are ground state and first excited state eigenvalues, respectively of the operator $H_{tot}=\sum_{l=1}^NF_l$, $F_l=\sum_{i=1}^{m_l}\sum_{n=1}^2(H_{l,n}^i)^2$ and $H^i_{l,n}=\frac{1}{\sqrt{2}}(K_{l,n}^i\otimes I-I\otimes {K_{l,n}^i}^T)$.
 \end{theorem}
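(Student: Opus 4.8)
The plan is to reduce Theorem \ref{Theorem 6} entirely to \emph{Theorem \ref{Theorem 5}} by recognizing that each coherence measure $I_{\rho}(\cal{E}_l)$ is itself a sum of Wigner-Yanase skew informations of Hermitian operators. First I would invoke the definition (\ref{IIIC1-2}), namely $I_{\rho}(\cal{E}_l)=\sum_{i=1}^{m_l}I_{\rho}(K_l^i)$, and then apply \emph{Theorem \ref{Theorem 3}} [Eq. (\ref{IIIA2-4})] to each generally non-Hermitian Kraus operator $K_l^i=K_{l,1}^i\pm iK_{l,2}^i$, splitting its skew information into those of its two Hermitian parts. Summing over $l$ reproduces Eq. (\ref{IIIC1-2-1}), so the quantity $\sum_{l=1}^NI_{\rho}(\cal{E}_l)$ becomes a sum of skew informations over the finite collection of Hermitian operators $\{\{K_{l,1}^i,K_{l,2}^i\}_{i=1}^{m_l}\}_{l=1}^N$.

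Next I would rewrite each Hermitian term through the bipartite representation (\ref{IIIB-2}) specialized to $s=\frac{1}{2}$: for any Hermitian $X$ one has $I_{\rho}(X)=\braket{\Phi|H_X^2|\Phi}$ with $H_X=\frac{1}{\sqrt{2}}(X\otimes I-I\otimes X^T)$ and $\ket{\Phi}=\sum_i\sqrt{\lambda_i}\ket{\lambda_i}\ket{\lambda_i^*}$. Applying this to every $K_{l,n}^i$ and summing yields
\begin{align}
\sum_{l=1}^NI_{\rho}(\cal{E}_l)=\braket{\Phi|H_{tot}|\Phi},\nonumber
\end{align}
with $H_{tot}=\sum_{l=1}^NF_l$, $F_l=\sum_{i=1}^{m_l}\sum_{n=1}^2(H_{l,n}^i)^2\geq 0$, exactly the positive semidefinite operator named in the statement. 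Since the right-hand side is a genuine expectation value of $H_{tot}$ in the normalized state $\ket{\Phi}$, its minimum over admissible $\rho$ is bounded below by the ground-state eigenvalue $\epsilon_0$, which establishes Eq. (\ref{IIIC1-3}).

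For the degenerate case $\epsilon_0=0$ I would reuse the Appendix \ref{D} analysis exactly as in the proof of \emph{Theorem \ref{Theorem 5}}: when $\epsilon_0=0$ the ground eigenstate $\ket{\epsilon_0}$ must be maximally entangled, so $\ket{\Phi}$ attains it only when $\rho$ is maximally mixed, a case in which every skew information vanishes and the relation is trivial. Excluding that, the Schmidt structure inherited by $\ket{\Phi}$ from $\rho$ forces $|\braket{\Phi|\epsilon_0}|^2=\frac{1}{d}(\Tr\sqrt{\rho})^2$, and the minimum of $\braket{\Phi|H_{tot}|\Phi}$ then equals $\epsilon_1[1-\frac{1}{d}(\Tr\sqrt{\rho})^2]$, giving Eq. (\ref{IIIC1-4}).

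Because the argument is a direct corollary of \emph{Theorem \ref{Theorem 5}}, I do not expect any genuinely hard step. The one point requiring care is the hypothesis translation: I must verify that the non-overlap condition---that at least two operators among $\{K_{l,1}^i,K_{l,2}^i\}$ share no common eigenstate---plays the same role it does in \emph{Theorem \ref{Theorem 5}}, namely ruling out the degenerate situation in which a product state would be a simultaneous eigenstate and force the bound to collapse. The only other subtlety is bookkeeping: the triple-index family $(l,i,n)$ of Hermitian operators must be matched correctly onto the single-index collection $\{A_k\}$ appearing in \emph{Theorem \ref{Theorem 5}}.
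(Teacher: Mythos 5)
Your proposal is correct and follows essentially the same route as the paper: decompose each $I_{\rho}(\cal{E}_l)$ via Eq. (\ref{IIIC1-2}) and Theorem \ref{Theorem 3} into skew informations of the Hermitian parts of the Kraus operators, rewrite the sum as $\braket{\Phi|H_{tot}|\Phi}$ using Eq. (\ref{IIIB-2}) at $s=\tfrac{1}{2}$, and bound below by $\epsilon_0$, falling back on the Appendix \ref{D} argument when $\epsilon_0=0$. No substantive differences from the paper's proof.
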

 \begin{proof}
 Note that  Eq. (\ref{IIIC1-2-1})  can be rewritten as  
  \begin{align}
\sum_{l=1}^NI_{\rho}(\cal{E}_l)&=\sum_{l=1}^N\sum_{i=1}^{m_l}I_{\rho}(K_{l,1}^i)+I_{\rho}(K_{l,2}^i)\nonumber\\
 &=\sum_{l=1}^N\sum_{i=1}^{m_l}\braket{\Phi|(H_{l,1}^i)^2+(H_{l,2}^i)^2|\Phi}\nonumber\\
 &=\sum_{l=1}^N\braket{\Phi|F_l|\Phi},\label{IIIC1-5}
 \end{align}
 where we have used Eq. (\ref{IIIB-2}) for the Hermitian operators $K_{l,1}^i$ and $K_{l,2}^i$, and $\ket{\tilde{\Phi}^s}=\ket{\tilde{\Phi}^{1-s}}=\ket{\Phi}$ for $s=\frac{1}{2}$.  Here $\ket{{\Phi}}=\sum_{i}\sqrt{\lambda_i}\ket{\lambda_i}\ket{\lambda_i^*}$ is a normalized state vector, and $F_l=\sum_{i=1}^{m_l}\sum_{n=1}^2(H_{l,n}^i)^2$.  Thus Eq. (\ref{IIIC1-5}) becomes
 \begin{align}
\sum_{l=1}^NI_{\rho}(\cal{E}_l)=\braket{\Phi|H_{tot}|\Phi},\label{IIIC1-6}
 \end{align}
 where $H_{tot}=\sum_{l=1}^NF_l$.  Now, as $H_{tot}$ is a positive semidefinite operator, the minimum value of the average of $H_{tot}$ \emph{i.e}., $\braket{\Phi|H_{tot}|\Phi}$ is the ground state eigenvalue $\epsilon_0$ of $H_{tot}$ and hence inequality (\ref{IIIC1-3}) is proved. To prove inequality (\ref{IIIC1-4}), one can use the same arguments as given for the inequality (\ref{IIIB3-2}) in  \emph{Theorem \ref{Theorem 5}} which is that, if $\epsilon_0=0$ then the minimum value of $\braket{\Phi|H_{tot}|\Phi}$ is given by $\epsilon_1\left[1-\frac{1}{d}(\Tr\sqrt{\rho})^2\right]$.
  \end{proof}
{The implication of inequality (\ref{IIIC1-3}) is that  how the coherence of a given quantum state with respect to a channel  is restricted by the coherence of the same quantum state with respect to the other channels. If, for example, we consider two channels, then even if the coherence of the density operator with respect to the first channel is zero, the  coherence of the same density operator with respect to the second channel  is never zero.}\par
  If the initial state is a pure state, then the bound given in Eq. (\ref{IIIC1-3}) can be improved using the same strategy given in \emph{Proposition \ref{Proposition 2}} as skew information becomes variance for a pure state.\par
  \textbf{Example 3.} We consider the phase damping channel $\cal{E}_1(\rho)=\sum_{i=1}^{2}K^i_1\rho {K^i_1}^{\dagger}$: 
  \begin{align*}
  K_1^1=\ketbra{0}{0}+\sqrt{1-p}\ketbra{1}{1}, \hspace{0.6cm} K_1^2=\sqrt{p}\ketbra{1}{1},
  \end{align*}
  and the amplitute damping channel $\cal{E}_2(\rho)=\sum_{i=1}^{2}K^i_2\rho {K^i_2}^{\dagger}$: 
  \begin{align*}
  K_2^1=\ketbra{0}{0}+\sqrt{1-p}\ketbra{1}{1}, \hspace{0.6cm} K_2^2=\sqrt{p}\ketbra{0}{1}.
  \end{align*}
  Now according to \emph{Theorem \ref{Theorem 6}}, the ground state eigenvalue of $H_{tot}$ is given by $\epsilon_0=0$ and hence, we have to find the first excited state eigenvalue of $H_{tot}$ which is given by $\epsilon_1=p$. Thus the sum of the Wigner-Yanase skew informations of two channels $\cal{E}_1$ and $\cal{E}_2$ is lower bounded by:
  \begin{align*}
 I_{\rho}(\cal{E}_1)+I_{\rho}(\cal{E}_2)\geq p\left[1-\frac{1}{2}(\Tr\sqrt{\rho})^2\right],
  \end{align*}   
where we have taken $d=2$ for the qubit.
\subsection{Uncertainty relations  based on generalized skew information}  \label{IIID}
\subsubsection{Generalized skew information}
Recently, Yang \emph{et al.} \cite{Yang-2022}  have shown that there exist a family of skew information like quantities (and they call it as generalized  skew information) in which the well known  Wigner-Yanase  skew information and quantum Fisher information are special cases.  Here we show that for such generalized skew informations of  noncommuting operators, state-dependent and state-independent uncertainty relations exist. We  define the generalized skew information for arbitrary operators  by modifying the definition of generalized skew information  for normal operators given in  Ref. \cite{Yang-2022}.  Along with,  in the special case of the  generalized skew information defined in  Ref. \cite{Yang-2022}, it doesn't become the Wigner-Yanase skew information for arbitrary operator defined in Eq. (\ref{IIA2-1}). To solve both the issues, we define the generalized skew information of an arbitrary operator as
\begin{align}
\cal{I}^{\nu}_{\rho}(A)&=\frac{1}{2}\Tr[\rho(A^{\dagger}A+AA^{\dagger})]\nonumber\\
&-\frac{1}{2}\sum_{i,j}m_{\nu}(\lambda_i,\lambda_j)[|\braket{\lambda_i|A^{\dagger}|\lambda_j}|^2+|\braket{\lambda_i|A|\lambda_j}|^2],\label{IIID1-1}
\end{align}
where $m_{\nu}(\lambda_i,\lambda_j)$ is the generalized mean of  positive real numbers $\lambda_i$ and $\lambda_j$ of equal weighting having the following form
\begin{align}
m_{\nu}(\lambda_i,\lambda_j)=\left(\frac{\lambda_i^{\nu}+\lambda_j^{\nu}}{2}\right)^{1/{\nu}},\label{IIID1-2}
\end{align}
where $-\infty<\nu<0$ and
\begin{align*}
m_{0}(\lambda_i,\lambda_j):&=\underset{\nu\rightarrow 0}{\lim}m_{\nu}(\lambda_i,\lambda_j)=\sqrt{\lambda_i\lambda_j},\\
m_{-\infty}(\lambda_i,\lambda_j):&=\underset{\nu\rightarrow -\infty}{\lim}m_{\nu}(\lambda_i,\lambda_j)=\min\{\lambda_i,\lambda_j\}.
\end{align*} 
One can easily verify that 
\begin{align*}
\cal{I}^{0}_{\rho}(A)&=\Tr(\rho A^2)-\Tr(\sqrt{\rho}A\sqrt{\rho}A),\\
\cal{I}^{-1}_{\rho}(A)&=\frac{1}{2}\sum_{i,j}\frac{(\lambda_i-\lambda_j)^2}{\lambda_i+\lambda_j}|\braket{\lambda_i|A|\lambda_j}|^2
\end{align*}
for a Hermitian operator.  Thus we achieve  the well known informational quantities like the Wigner-Yanase skew information and the  Fisher information as special cases of generalized skew information and these are given by
\begin{align}
\cal{I}^{0}_{\rho}(A)&=I_{\rho}(A),\label{IIID1-3}\\
\cal{I}^{-1}_{\rho}(A)&=\frac{1}{4}F_{\rho}(A),\label{IIID1-4}
\end{align}
where $F_{\rho}(A)$ is the Fisher information defined as $4\cal{I}^{-1}_{\rho}(A)$ \cite{Braunstein-1994,Toth-2014}. It can be shown that $\cal{I}^{\nu}_{\rho}(A)$ defined in Eq. (\ref{IIID1-1}) has the same properties similar to generalized skew information  defined in \cite{Yang-2022}:
\begin{align}
\cal{I}^{0}_{\rho}(A)\leq\cal{I}^{-1}_{\rho}(A)\leq\cdots\leq\cal{I}^{-\infty}_{\rho}(A)\leq V_{\rho}(A).\label{IIID1-5}
\end{align}
Moreover, generalized skew information has\\
($i$) Convexity property:
\begin{align}
\cal{I}^{\nu}_{\rho}(A)\leq\sum_ip_i\cal{I}^{\nu}_{\rho_i}(A),\label{IIID1-6}
\end{align}
where $\rho=\sum_ip_i\rho_i$.\\
($ii$) Additivity property:
\begin{align}
\cal{I}^{\nu}_{\rho_A\otimes\rho_B}(A\otimes I+I\otimes B)=\cal{I}^{\nu}_{\rho_A}(A)+\cal{I}^{\nu}_{\rho_B}(B).\label{IIID1-7}
\end{align}
Having such important properties of generalized skew information, it can be made useful similar to the Wigner-Yanase skew information and the  Fisher information in future.
\subsubsection{Uncertainty relations}
We have already established uncertainty relation for the Wigner-Yanase skew information  describing the impossibility of preparing a density operator such that both the Wigner-Yanase skew informations of incompatible observables are zero in state-dependent and state-independent way. Now, as generalized skew information is lower bounded by the Wigner-Yanase skew information [see Eq. (\ref{IIID1-5})], we can have state-dependent uncertainty relation for generalized skew informations of incompatible observables which is given by
   \begin{align}
       \sqrt{\cal{I}^{\nu}_{\rho}(A)\cal{I}^{\nu^{\prime}}_{\rho}(B)}\geq \frac{\pm \frac{i}{4}[\Tr([A^{\dagger},B]+[A,B^{\dagger}])\rho^s+\mathcal{E}_{AB}^s]}{1+\Omega_{AB}^s-\frac{1}{4}\Tr[(\xi^{\pm}+\eta^{\mp})(I-\rho^{1-s})]}.\label{IIID1-8}
    \end{align}
The derivation is obvious from Eq. (\ref{IIIA2-1}). One can show that the following equality holds for generalized skew informations:
\begin{align}
\cal{I}^{\nu}_{\rho}(A)=\cal{I}^{\nu}_{\rho}(A_1)+\cal{I}^{\nu}_{\rho}(A_2),\label{IIID1-9}
\end{align}
where $A_1$ and $A_2$  are Hermitian operators and  defined in Eq. (\ref{IIIA2-3}). Thus, the generalized skew information of a non-Hermitian operator can be expressed as the sum of the generalized skew informations of two  Hermitian operators. This result will be useful to derive state-independent sum uncertainty relations based on generalized skew information of a collection of non-Hermitian operators.\par
 The state-independent uncertainty relation for $N$ number of  arbitrary operators  is given by
  \begin{align}
        \sum_{k=1}^N\cal{I}^{\nu_k}_{\rho}(A_k)\geq \epsilon_0,\label{IIID1-10}
    \end{align}
    and if $\epsilon_0=0$, then 
    \begin{align}
        \sum_{k=1}^N\cal{I}^{\nu_k}_{\rho}(A_k)\geq \epsilon_1\left[1-\frac{1}{d}(\Tr\sqrt{\rho})^2\right].\label{IIID1-11}
    \end{align}
The inequalities (\ref{IIID1-10}) and (\ref{IIID1-11}) automatically hold due to inequality (\ref{IIID1-5}) and \emph{Theorem \ref{Theorem 5}}. \par

\subsubsection{Uncertainty relations in qubits}
Although in an arbitrary dimensional system, it is not possible in general  to interrelate  different type of generalized skew informations, for example, relation between the Wigner-Yanase skew information ($\nu=0$) and  the Fisher information ($\nu=-1$),   or  more generally, relation between $\cal{I}^{\nu}_{\rho}(A)$ and $\cal{I}^{\nu^{\prime}}_{\rho}(A)$, here we show that in a qubit, it is possible to interrelate   different type of generalized skew informations. Moreover,  uncertainty relations for  different form of generalized  skew informations of incompatible operators, tight uncertainty relations and uncertainty equalities are provided in detail.\par
In a qubit, let the density operator be $\rho=\lambda_1\ketbra{\lambda_1}{\lambda_1}+\lambda_2\ketbra{\lambda_2}{\lambda_2}$, where $\lambda_1$ and $\lambda_2$ are the non-zero eigenvalues with $\lambda_1+\lambda_2=1$. Now the generalized skew information of an arbitrary operator $\sigma$ in a qubit can be expressed as 
\begin{align}
\cal{I}^{\nu}_{\rho}(\sigma)=\left[1-2\left(\frac{1}{2}\Tr\rho^{\nu}\right)^{{1}/{\nu}}\right]\braket{\Delta \sigma}_{\lambda_l}^2,\label{IIID2-1}
\end{align} 
where $l=1$ or $2$. See Appendix \ref{E} for the derivation.  Note that $(\frac{1}{2}\Tr\rho^{\nu})^{{1}/{\nu}}=m_{\nu}(\lambda_1,\lambda_2)$  defined in Eq. (\ref{IIID1-2}) for a qubit. Thus, in a qubit, any two skew informations $\cal{I}^{\nu}_{\rho}(\sigma)$ and $\cal{I}^{\nu^{\prime}}_{\rho}(\sigma)$ of the same operator $\sigma$ are connected by the equality:
\begin{align}
\frac{\cal{I}^{\nu}_{\rho}(\sigma)}{[1-2(\Tr\rho^{\nu}/2)^{{1}/{\nu}}]}=\frac{\cal{I}^{\nu^{\prime}}_{\rho}(\sigma)}{[1-2(\Tr\rho^{\nu^{\prime}}/2)^{{1}/{\nu^{\prime}}}]}.\label{IIID2-2}
\end{align}
For example, the Wigner-Yanase skew information $\cal{I}^{0}_{\rho}(\sigma)=I_{\rho}(\sigma)$ and the Fisher information $\cal{I}^{-1}_{\rho}(A)=\frac{1}{4}F_{\rho}(A)$ are related by $F_{\rho}(\sigma)/I_{\rho}(\sigma)=4[1-4\lambda_1\lambda_2]/[1-2\sqrt{\lambda_1\lambda_2}]$.\par
Now we state the most general form of  uncertainty relation  for different form of generalized skew informations of incompatible operators in a qubit.
\begin{proposition}\label{Proposition 3}
For a collection of  arbitrary  operators \{$\sigma_k\}^{N}_{k=1}$  acting on a two-dimensional Hilbert space,  the following inequality  holds
\begin{align}
\sum_{{k}=1}^N\frac{\cal{I}^{\nu_{k}}_{\rho}(\sigma_{k})}{[1-2\left(\frac{1}{2}\Tr\rho^{\nu_{k}}\right)^{\frac{1}{\nu_{k}}}]}\geq \cal{L}_{N},\label{IIID2-3}
\end{align}
 where  $\cal{L}_N$ is the state-independent  lower bound of  $\sum_{k=1}^N\braket{\Delta \sigma_k}_{\psi}^2$ and $\ket{\psi}$ is any pure state. 
\end{proposition}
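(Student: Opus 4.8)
The plan is to reduce the inequality to the already-available state-independent bound for variances of pure states, by a change of variables from generalized skew information to variance. First I would apply the qubit identity~(\ref{IIID2-1}): for each $k$,
\begin{align}
\cal{I}^{\nu_k}_{\rho}(\sigma_k)=\left[1-2\left(\tfrac{1}{2}\Tr\rho^{\nu_k}\right)^{1/\nu_k}\right]\braket{\Delta \sigma_k}_{\lambda_l}^2,\nonumber
\end{align}
where $\ket{\lambda_l}$ is an eigenstate of $\rho$. Dividing by the scalar prefactor turns each summand on the left-hand side of~(\ref{IIID2-3}) into the pure-state variance $\braket{\Delta \sigma_k}_{\lambda_l}^2$. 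The prefactor is nonzero---indeed strictly positive---whenever $\rho$ is not maximally mixed, since the power mean satisfies $m_{\nu}(\lambda_1,\lambda_2)\le\sqrt{\lambda_1\lambda_2}\le\tfrac12$ for $\nu<0$ by monotonicity of the mean and the AM--GM inequality, so the direction of the inequality is preserved and the denominators in~(\ref{IIID2-3}) are legitimate.

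The step that needs care, and the main obstacle, is that the eigenstate label $l$ appearing in~(\ref{IIID2-1}) must be chosen the \emph{same} for all $k$, so that the divided summands assemble into $\sum_k\braket{\Delta\sigma_k}_{\lambda_1}^2$ evaluated in one common pure state. This is justified by the observation that, in a qubit, the variance of a Hermitian observable takes equal values in a pure state and its orthogonal partner. Concretely, expanding a Hermitian operator in Bloch form, its variance in the pure state with Bloch vector $\hat{n}$ equals $|\vec{r}|^2-(\vec{r}\cdot\hat{n})^2$, which is invariant under $\hat{n}\to-\hat{n}$; since $\ket{\lambda_1}$ and $\ket{\lambda_2}$ are antipodal on the Bloch sphere, $\braket{\Delta\sigma_k}_{\lambda_1}^2=\braket{\Delta\sigma_k}_{\lambda_2}^2$ for Hermitian $\sigma_k$. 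For a general (non-Hermitian) $\sigma_k$ I would split $\sigma_k=\sigma_{k,1}\pm i\sigma_{k,2}$ into Hermitian parts as in~(\ref{IIIA2-3}) and invoke the additivity of the variance~(\ref{IIIA2-5}), so that $\braket{\Delta\sigma_k}^2=\braket{\Delta\sigma_{k,1}}^2+\braket{\Delta\sigma_{k,2}}^2$; each Hermitian term is eigenstate-independent by the previous remark, hence so is their sum. Thus $l=1$ can be fixed uniformly across the whole sum.

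Finally, with the common eigenstate fixed, the left-hand side of~(\ref{IIID2-3}) equals $\sum_{k=1}^N\braket{\Delta\sigma_k}_{\lambda_1}^2$. Since $\ket{\lambda_1}$ is a genuine pure state and $\cal{L}_N$ is, by definition, a lower bound of $\sum_{k=1}^N\braket{\Delta\sigma_k}_{\psi}^2$ holding for \emph{every} pure state $\ket{\psi}$ (obtainable, for instance, through the eigenvalue-minimization bound of \emph{Corollary \ref{Corollary 2}}), specializing $\ket{\psi}=\ket{\lambda_1}$ gives $\sum_{k=1}^N\braket{\Delta\sigma_k}_{\lambda_1}^2\ge\cal{L}_N$, which is precisely~(\ref{IIID2-3}). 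No further estimation is required beyond the Bloch-sphere symmetry noted above.
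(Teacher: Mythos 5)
Your proof is correct and follows essentially the same route as the paper: apply the qubit identity~(\ref{IIID2-1}) to each term, sum over $k$, and observe that $\cal{L}_N$, being a lower bound valid for every pure state, applies in particular to the eigenstate $\ket{\lambda_l}$ of $\rho$. Your extra justifications (strict positivity of the prefactor away from the maximally mixed state, and the eigenstate-independence $\braket{\Delta\sigma_k}_{\lambda_1}^2=\braket{\Delta\sigma_k}_{\lambda_2}^2$, which the paper establishes in Appendix~\ref{E}) are sound and only make explicit what the paper leaves implicit.
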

 \begin{proof}
 In Eq. (\ref{IIID2-1}), we assume that for each operator, the index $\nu$ is fixed. By taking sum of Eq. (\ref{IIID2-1}) for \{$\sigma_k\}^{N}_{k=1}$, we have 
\begin{align}
\sum_{{k}=1}^N\frac{\cal{I}^{\nu_{k}}_{\rho}(\sigma_{k})}{[1-2\left(\frac{1}{2}\Tr\rho^{\nu_{k}}\right)^{\frac{1}{\nu_{k}}}]}=\sum_{k=1}^N\braket{\Delta \sigma_k}_{\lambda_l}^2.\label{IIID2-4}
\end{align}
Now, if $\cal{L}_N$ is the state-independent lower bound of $\sum_{k=1}^N\braket{\Delta \sigma_k}_{\lambda_l}^2$, then $\cal{L}_N$ is also the lower bound of  $\sum_{k=1}^N\braket{\Delta \sigma_k}_{\psi}^2$, and  Eq. (\ref{IIID2-3}) holds automatically.
 \end{proof}
 Eq. (\ref{IIID2-3}) provides a tighter state-independent uncertainty relation than Eq. (\ref{IIID1-10}) [or Eq. (\ref{IIID1-11}) if $\epsilon_0=0$]. We give an example in the following.\par
\textbf{\emph{Observation 1.}}  For   $\nu_1=\nu_2=\cdots=\nu_N=\nu$, the inequality (\ref{IIID2-3}) becomes
\begin{align}
\sum_{{k}=1}^N\cal{I}^{\nu}_{\rho}(\sigma_{k})\geq \left[1-2\left(\frac{1}{2}\Tr\rho^{\nu}\right)^{\frac{1}{\nu}}\right]\cal{L}_N.\label{IIID2-5}
\end{align}
As $\sum_{{k}=1}^N\cal{I}^{\nu}_{\rho}(\sigma_{k})$ has two lower bounds given in   Eq. (\ref{IIID1-10}) [or Eq. (\ref{IIID1-11}) if $\epsilon_0=0$] and Eq. (\ref{IIID2-5}), we have to take the largest one. For a qubit, we show significant improvement of the lower bound of $\sum_{{k}=1}^N\cal{I}^{\nu}_{\rho}(\sigma_{k})$ given in Eq. (\ref{IIID2-5}) over  the one given in Eq. (\ref{IIID1-10}) in the following example. \par
\textbf{Example 4.} 
Consider the qubit observables: 
  \begin{equation*}
\sigma_1=
 \begin{pmatrix}
 1 & 1-\frac{i}{2} \\
1+\frac{i}{2} & -1
\end{pmatrix},
\hspace{6mm}\sigma_2=
\begin{pmatrix}
 1 &\frac{1}{2}+\frac{i}{2} \\
\frac{1}{2}-\frac{i}{2}&-1 
\end{pmatrix},
\end{equation*}
  \begin{equation*}
\sigma_3=
\frac{1}{2} \begin{pmatrix}
 1 & -1-i \\
-1+i &-1 
\end{pmatrix},
\hspace{6mm}\sigma_4=
\begin{pmatrix}
 -1 & \frac{1}{2}+\frac{i}{2} \\
\frac{1}{2}-\frac{i}{2} & 1 
\end{pmatrix}.
\end{equation*}
{The density operator is considered to be diagonal in computational basis with diagonal elements $0.3$ and $0.7$, respectively.} In this case, $\epsilon_0=0$ [see Eq. (\ref{IIID1-10})] and hence  the lower bound  of $\cal{I}^{\nu}_{1234}:=\sum_{{k}=1}^4\cal{I}^{\nu}_{\rho}(\sigma_{k})$ is given by  $0.1921$ in Eq. (\ref{IIID1-11}) and this remains the same lower bound for any value of $\nu$. Now the lower bound of  $\cal{I}^{\nu}_{1234}$  in Eq. (\ref{IIID2-5}) is found to be  0.1834 for $\nu=0$; 0.3515 for $\nu=-1$;  0.4835 for $\nu=-2$; $\cdots$; 0.8788 for $\nu=-\infty$. Thus a significant improvement of the lower bound of  $\cal{I}^{\nu}_{1234}$ for $\nu=-1,-2,\cdots, -\infty$ can be observed using Eq. (\ref{IIID2-5}). \par
\begin{proposition}\label{Proposition 4}
The sum of the collection of  generalized skew informations \{$\cal{I}_{\rho}^{\nu_k}(\sigma_k)\}^{N_1}_{k=1}$ and  standard deviations \{$\braket{\Delta \Omega_{k^{\prime}}}_{\rho}^{2}\}^{N_2}_{k^{\prime}=1}$ is lower bounded by  
\begin{align}
\sum_{{k}=1}^{N_1}\frac{\cal{I}^{\nu_{k}}_{\rho}(\sigma_{k})}{[1-2\left(\frac{1}{2}\Tr\rho^{\nu_{k}}\right)^{\frac{1}{\nu_{k}}}]}+\sum_{k^{\prime}=1}^{N_2} \braket{\Delta \Omega_{k^{\prime}}}_{\rho}^2\geq \cal{L}_{N_1+N_2},\label{IIID2-6}
\end{align}
where $\cal{L}_{N_1+N_2}$ is the state-independent  lower bound of  $\sum_{k=1}^{N_1}\braket{\Delta \sigma_k}_{\psi}^2+\sum_{k^{\prime}=1}^{N_2}\braket{\Delta \Omega_{k^{\prime}}}_{\psi}^2$ and $\ket{\psi}$ is any pure state. 
\end{proposition}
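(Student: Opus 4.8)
The plan is to mirror the argument of \emph{Proposition \ref{Proposition 3}} and reduce the claim to the pure-state uncertainty relation that \emph{defines} $\cal{L}_{N_1+N_2}$, the only genuinely new ingredient being a comparison of the mixed-state variances with their pure-state counterparts. First I would use the qubit identity (\ref{IIID2-1}) [equivalently (\ref{IIID2-4})] to rewrite each generalized-skew-information term as a variance taken on a fixed eigenvector $\ket{\lambda_l}$ of $\rho$, so that
\begin{align}
\sum_{k=1}^{N_1}\frac{\cal{I}^{\nu_{k}}_{\rho}(\sigma_{k})}{[1-2\left(\frac{1}{2}\Tr\rho^{\nu_{k}}\right)^{\frac{1}{\nu_{k}}}]}=\sum_{k=1}^{N_1}\braket{\Delta \sigma_k}_{\lambda_l}^2.\nonumber
\end{align}
After this substitution the left-hand side of (\ref{IIID2-6}) reads $\sum_{k=1}^{N_1}\braket{\Delta \sigma_k}_{\lambda_l}^2+\sum_{k'=1}^{N_2}\braket{\Delta \Omega_{k'}}_{\rho}^2$, in which the skew-information contributions are evaluated on the pure state $\ket{\lambda_l}$ while the standard-deviation contributions still refer to the mixed $\rho$.

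The key step I would establish is the qubit lemma $\braket{\Delta \Omega}_{\rho}^2\geq\braket{\Delta \Omega}_{\lambda_l}^2$ for every operator $\Omega$ on the two-dimensional space, which I would prove in two moves. First, in a qubit the eigenvectors $\ket{\lambda_1}$ and $\ket{\lambda_2}$ of $\rho$ are orthogonal and hence carry antipodal Bloch vectors; writing a Hermitian $\Omega=c_0 I+\vec{c}\cdot\vec{\sigma}$ one finds that its variance depends on the Bloch vector only through the square of its projection onto the $\Omega$-axis, so $\braket{\Delta \Omega}_{\lambda_1}^2=\braket{\Delta \Omega}_{\lambda_2}^2=\braket{\Delta \Omega}_{\lambda_l}^2$ (for a general $\Omega$ this follows after the splitting $\Omega=\Omega_1\pm i\Omega_2$ into Hermitian parts via \emph{Theorem \ref{Theorem 3}}). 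Second, invoking the concavity of the variance under the classical mixing $\rho=\lambda_1\ketbra{\lambda_1}{\lambda_1}+\lambda_2\ketbra{\lambda_2}{\lambda_2}$ gives $\braket{\Delta \Omega}_{\rho}^2\geq\lambda_1\braket{\Delta \Omega}_{\lambda_1}^2+\lambda_2\braket{\Delta \Omega}_{\lambda_2}^2=(\lambda_1+\lambda_2)\braket{\Delta \Omega}_{\lambda_l}^2=\braket{\Delta \Omega}_{\lambda_l}^2$. A direct alternative is the Bloch computation $\braket{\Delta \Omega}_{\rho}^2=|\vec{c}|^2-(\lambda_1-\lambda_2)^2(\vec{c}\cdot\vec{n})^2\geq|\vec{c}|^2-(\vec{c}\cdot\vec{n})^2=\braket{\Delta \Omega}_{\lambda_l}^2$, valid since $|\lambda_1-\lambda_2|\leq 1$.

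Assembling the pieces, applying the lemma term by term yields
\begin{align}
\sum_{k=1}^{N_1}\braket{\Delta \sigma_k}_{\lambda_l}^2+\sum_{k'=1}^{N_2}\braket{\Delta \Omega_{k'}}_{\rho}^2\geq\sum_{k=1}^{N_1}\braket{\Delta \sigma_k}_{\lambda_l}^2+\sum_{k'=1}^{N_2}\braket{\Delta \Omega_{k'}}_{\lambda_l}^2,\nonumber
\end{align}
and the right-hand side is exactly the summed variance functional evaluated at the single pure state $\ket{\lambda_l}$; since $\cal{L}_{N_1+N_2}$ is by definition a lower bound of $\sum_{k}\braket{\Delta \sigma_k}_{\psi}^2+\sum_{k'}\braket{\Delta \Omega_{k'}}_{\psi}^2$ holding for \emph{every} pure state $\ket{\psi}$, it in particular bounds the value at $\ket{\lambda_l}$, which closes the argument. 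I expect the main obstacle to be exactly the lemma $\braket{\Delta \Omega}_{\rho}^2\geq\braket{\Delta \Omega}_{\lambda_l}^2$: its proof rests on the equality of the variance at the two eigenstates of $\rho$, a feature special to two-level systems (two eigenstates with antipodal Bloch vectors), which is precisely why the proposition is confined to qubits and would fail in higher dimensions, where concavity alone only produces a convex combination of \emph{distinct} eigenstate variances.
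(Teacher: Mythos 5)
Your proposal is correct and follows essentially the same route as the paper: rewrite the skew-information terms as variances on an eigenstate $\ket{\lambda_l}$ via Eq.~(\ref{IIID2-4}), use the qubit fact $\braket{\Delta \Omega}_{\lambda_1}^2=\braket{\Delta \Omega}_{\lambda_2}^2$ together with concavity of the variance to get $\braket{\Delta \Omega_{k'}}_{\rho}^2\geq\braket{\Delta \Omega_{k'}}_{\lambda_l}^2$, and then invoke the definition of $\cal{L}_{N_1+N_2}$ as a lower bound over all pure states. Your extra Bloch-vector verification of the key lemma and the remark on why the argument is special to two-level systems are consistent with, but not beyond, the paper's Appendix~\ref{E} treatment.
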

\begin{proof}
For a qubit, we have $\braket{\Delta \Omega_{k^{\prime}}}_{\lambda_1}=\braket{\Delta \Omega_{k^{\prime}}}_{\lambda_2}$, where $\ket{\lambda_l}$ with $l=1$ or $2$ are eigenstates of the density operator $\rho$ (see Appendix \ref{E}), and hence $\braket{\Delta \Omega_{k^{\prime}}}_{\rho}^2\geq \braket{\Delta \Omega_{k^{\prime}}}_{\lambda_l}^2$ and
\begin{align}
\sum_{k^{\prime}=1}^{N_2}\braket{\Delta \Omega_{k^{\prime}}}_{\rho}^2\geq \sum_{k^{\prime}=1}^{N_2}\braket{\Delta \Omega_{k^{\prime}}}_{\lambda_l}^2.\label{IIID2-7}
\end{align}
Now, together with Eq. (\ref{IIID2-4}) and Eq. (\ref{IIID2-7}),  we have
\begin{align*}
&\sum_{{k}=1}^{N_1}\frac{\cal{I}^{\nu_{k}}_{\rho}(\sigma_{k})}{[1-2\left(\frac{1}{2}\Tr\rho^{\nu_{k}}\right)^{\frac{1}{\nu_{k}}}]}+\sum_{k^{\prime}=1}^{N_2} \braket{\Delta \Omega_{k^{\prime}}}_{\rho}^2\nonumber\\
&=\sum_{k=1}^{N_1}\braket{\Delta \sigma_k}_{\lambda_l}^2+\sum_{k^{\prime}=1}^{N_2}\braket{\Delta \Omega_{k^{\prime}}}_{\lambda_l}^2\geq\cal{L}_{N_1+N_2},
\end{align*}
where $\cal{L}_{N_1+N_2}$ is the state-independent lower bound of $\sum_{k=1}^{N_1}\braket{\Delta \sigma_k}_{\lambda_l}^2+\sum_{k^{\prime}=1}^{N_2}\braket{\Delta \Omega_{k^{\prime}}}_{\lambda_l}^2$.  As  $\cal{L}_{N_1+N_2}$  is also the state-independent lower bound of  $\sum_{k=1}^{N_1}\braket{\Delta \sigma_k}_{\psi}^2+\sum_{k^{\prime}=1}^{N_2}\braket{\Delta \Omega_{k^{\prime}}}_{\psi}^2$,  this completes the proof.
\end{proof}
\textbf{\emph{Observation 2.}} If $N_1=1$, $N_2=1$  and $\nu_1=-1$, then Eq. (\ref{IIID2-6}) becomes $\frac{1}{4(1-4\lambda_1\lambda_2)}{{F}_{\rho}(\sigma_{\vec{a}})}+\braket{\Delta \sigma_{\vec{b}}}^2_{\rho}\geq \frac{1}{4}(1-|\vec{a}\cdot\vec{b}|)$.  Using the unique property of Fisher information being the convex roof of variances given in  Ref. \cite{Chiew-Gessner-2022}, this inequality can be improved as
\begin{align}
 \frac{1}{4}F_{\rho}(\sigma_{\vec{a}})+\braket{\Delta \sigma_{\vec{b}}}^2_{\rho}\geq\frac{1}{4}(1-|\vec{a}\cdot\vec{b}|),\label{IIID2-8}
 \end{align}
 where $\vec{a}$ and $\vec{b}$ are two arbitrary directions for spin operators. See Ref. \cite{Sixia-2013} for the general derivation of how the Fisher information can be expressed as  the convex roof of variances first proved in Ref. \cite{Toth-Petz-2013} for lower dimensions  and conjectured for higher dimensions. The  inequality (\ref{IIID2-8}) holds only for two  observables in two arbitrary directions $\vec{a}$ and $\vec{b}$, and the result can not be extended for multiple observables in arbitrary directions whereas   inequality (\ref{IIID2-6}) holds for  multiple observables in arbitrary directions.  \par
 Now, as any generalized skew information can be expressed in terms of Fisher information [see Eq. (\ref{IIID2-2})], Eq. (\ref{IIID2-8}) can be expressed in terms of generalized skew information which will be an improvement of Eq. (\ref{IIID2-6}) for only two observables:
 \begin{align}
\frac{(1-4\lambda_1\lambda_2)}{[1-2(\Tr\rho^{\nu_k}/2)^{{1}/{\nu_k}}]}\cal{I}^{\nu_{k}}_{\rho}(\sigma_{\vec{a}})+\braket{\Delta \sigma_{\vec{b}}}^2_{\rho}\geq\frac{1}{4}(1-|\vec{a}\cdot\vec{b}|).\label{IIID2-9}
 \end{align}
 {\emph{Remark.}} As Fisher information can be  expressed as  a convex roof of variances, it is possible to express generalized skew information as a convex roof of variances  in a qubit   due to a direct connection between generalized skew information and the Fisher information  [see Eq. (\ref{IIID2-2})]: 
 \begin{align*}
 \cal{I}^{\nu_{k}}_{\rho}(\sigma)=\frac{[1-2(\Tr\rho^{\nu_k}/2)^{{1}/{\nu_k}}]}{(1-4\lambda_1\lambda_2)}4 \underset{p_k, \psi_k}{\emph{inf}}\sum_kp_k\braket{\Delta \sigma}^2_{\psi_k}.
 \end{align*}
\textbf{Uncertainty equalities.} The spin operator along $\vec{n}\in\mathbb{R}^3$ can be written as $\sigma_{\vec{n}}=\frac{1}{2}\vec{n}\cdot\vec{\sigma}$, where $\vec{\sigma}=(\sigma_x,\sigma_y,\sigma_z)$ is a vector of Pauli operators. Any density operator can be characterized in a qubit  by a Bloch vector $\vec{r}\in\mathbb{R}^3$ via $\rho=\frac{1}{2}(I+\vec{r}\cdot\vec{\sigma})$.\par
Note that the standard deviation of $\sigma_{\vec{n}}$ in the given state $\rho$ is 
\begin{align}
\braket{\Delta \sigma_{\vec{n}}}^2_{\rho}=\frac{1}{4}[1-(\vec{n}\cdot\vec{r})^2],\label{IIID3-1}
\end{align}
where we have used $\braket{\sigma_{\vec{n}}^2}_{\rho}=\frac{1}{4}$ and  $\braket{\sigma_{\vec{n}}}_{\rho}=\frac{1}{2}\vec{n}\cdot\vec{r}$ with $|\vec{n}|^2=1$. The purity of $\rho$ is given by 
\begin{align}
\Tr(\rho^2)&=\frac{1}{2}(1+|\vec{r}|^2),\nonumber\\
&=\frac{1}{2}[1+(\vec{n}_1\cdot\vec{r})^2+(\vec{n}_2\cdot\vec{r})^2+(\vec{n}_3\cdot\vec{r})^2],\label{IIID3-2}
\end{align}
where $(\vec{n}_1,\vec{n}_2,\vec{n}_3)$ are three orthogonal directions in $\mathbb{R}^3$. By combining Eqs. (\ref{IIID3-1}) and (\ref{IIID3-2}), we have
\begin{align}
\braket{\Delta \sigma_{\vec{n}_1}}^2_{\rho}+\braket{\Delta \sigma_{\vec{n}_2}}^2_{\rho}+\braket{\Delta \sigma_{\vec{n}_3}}^2_{\rho}=1-\frac{1}{2}\Tr\rho^2.\label{IIID3-3}
\end{align}
\begin{proposition}\label{Proposition 5}
The following equality for generalized skew informations holds:
\begin{align}
\sum_{{k}=1}^3\frac{\cal{I}^{\nu_{k}}_{\rho}(\sigma_{\vec{n}_k})}{[1-2\left(\frac{1}{2}\Tr\rho^{\nu_{k}}\right)^{\frac{1}{\nu_{k}}}]}=\frac{1}{2}.\label{IIID3-4}
\end{align}
\end{proposition}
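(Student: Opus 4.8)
The plan is to exploit the qubit identity (\ref{IIID2-1}), which is engineered precisely so that dividing a generalized skew information by its $\nu$-dependent prefactor returns an ordinary variance in an eigenstate of $\rho$. First I would apply (\ref{IIID2-1}) to each of the three operators $\sigma_{\vec{n}_k}$, so that every summand on the left-hand side of (\ref{IIID3-4}) collapses to
\begin{align*}
\frac{\cal{I}^{\nu_{k}}_{\rho}(\sigma_{\vec{n}_k})}{[1-2\left(\frac{1}{2}\Tr\rho^{\nu_{k}}\right)^{1/\nu_{k}}]}=\braket{\Delta \sigma_{\vec{n}_k}}^2_{\lambda_l},
\end{align*}
independently of the index $\nu_k$ and of the choice $l\in\{1,2\}$ of eigenstate of $\rho$. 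This removes all $\nu_k$-dependence at once and reduces the claim to the single statement $\sum_{k=1}^3\braket{\Delta \sigma_{\vec{n}_k}}^2_{\lambda_l}=\frac{1}{2}$ about one pure state $\ket{\lambda_l}$.

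The key step is then to recognize that this surviving quantity is exactly the sum of spin variances along three orthogonal directions treated in (\ref{IIID3-3}), now evaluated on the pure density operator $\ketbra{\lambda_l}{\lambda_l}$. Since (\ref{IIID3-3}) was obtained directly from the Bloch representation (\ref{IIID3-1})--(\ref{IIID3-2}) and holds for an arbitrary density operator, I would substitute $\rho\to\ketbra{\lambda_l}{\lambda_l}$ to get $\sum_{k=1}^3\braket{\Delta \sigma_{\vec{n}_k}}^2_{\lambda_l}=1-\frac{1}{2}\Tr(\ketbra{\lambda_l}{\lambda_l})^2$. Because $\ket{\lambda_l}$ is pure its purity is $\Tr(\ketbra{\lambda_l}{\lambda_l})^2=1$ (equivalently its Bloch vector is a unit vector, so $\sum_k(\vec{n}_k\cdot\vec{r})^2=|\vec{r}|^2=1$), and the right-hand side becomes $1-\frac{1}{2}=\frac{1}{2}$, the asserted constant.

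I do not expect a genuine analytic obstacle: the whole content is the observation that the denominators in (\ref{IIID3-4}) are built to strip each generalized skew information down to a pure-eigenstate variance, after which the geometric sum rule (\ref{IIID3-3}) closes the argument through the trivial fact that a pure state has purity one. The only point needing a word of care is that (\ref{IIID2-1}) is valid for either eigenstate $l=1,2$ and returns the same number, so the left-hand side is unambiguous; this is guaranteed by the qubit property $\braket{\Delta \sigma}_{\lambda_1}=\braket{\Delta \sigma}_{\lambda_2}$ already used in the proof of Proposition \ref{Proposition 4}.
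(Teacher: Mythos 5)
Your proposal is correct and follows essentially the same route as the paper: the paper likewise uses the qubit identity (\ref{IIID2-1}) [summed as in Eq. (\ref{IIID2-4})] to reduce the left-hand side to $\sum_{k}\braket{\Delta\sigma_{\vec{n}_k}}^2_{\lambda_l}$ and then evaluates this via the Bloch-sphere sum rule (\ref{IIID3-3}) at a pure state, where $\Tr\rho^2=1$ gives the constant $\frac{1}{2}$.
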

\begin{proof}
Using Eq. (\ref{IIID3-3}) for pure state in right hand side of  Eq. (\ref{IIID2-4}) for three  spin-1/2 operators in three orthogonal directions $\vec{n}_1$, $\vec{n}_2$ and $\vec{n}_3$, the relation (\ref{IIID3-4}) is proved.
\end{proof}
For $\nu_1=\nu_2=\nu_3=-1$ [which corresponds to Fisher information, see Eq. (\ref{IIID1-4})], the above equation becomes same as of  equation (36) of Ref. \cite{Chiew-Gessner-2022}.\par 
\begin{proposition}\label{Proposition 6}
The following equalities involving  generalized skew informations and  standard deviations hold:
\begin{align}
\Gamma_{\nu} \cal{I}^{\nu}_{\rho}(\sigma_{\vec{n}_1})+\braket{\Delta \sigma_{\vec{n}_2}}^2_{\rho}+\braket{\Delta \sigma_{\vec{n}_3}}^2_{\rho}&=\frac{1}{2},\label{IIID3-5}\\
\Gamma_{\nu} \cal{I}^{\nu}_{\rho}(\sigma_{\vec{n}_1})+\Gamma_{\nu^{\prime}}\cal{I}^{\nu^{\prime}}_{\rho}(\sigma_{\vec{n}_2})+\braket{\Delta \sigma_{\vec{n}_3}}^2_{\rho}&=\frac{1}{2}\Tr\rho^2,\label{IIID3-6}
\end{align}
where $\Gamma_{\nu}=\frac{(1-4\lambda_1\lambda_2)}{[1-2(\Tr\rho^{\nu}/2)^{{1}/{\nu}}]}$.
\end{proposition}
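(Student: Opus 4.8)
The plan is to exploit the fact that the weight $\Gamma_{\nu}$ has been engineered precisely to cancel the $\nu$-dependent prefactor appearing in the qubit identity (\ref{IIID2-1}). Writing $m_{\nu}(\lambda_1,\lambda_2)=(\tfrac{1}{2}\Tr\rho^{\nu})^{1/\nu}$, the denominator of $\Gamma_{\nu}$ is exactly the bracket $[1-2m_{\nu}(\lambda_1,\lambda_2)]$ that multiplies $\braket{\Delta \sigma}^2_{\lambda_l}$ in (\ref{IIID2-1}). Hence, for a single direction I would first record the bridge relation
\begin{align}
\Gamma_{\nu}\cal{I}^{\nu}_{\rho}(\sigma_{\vec{n}})=(1-4\lambda_1\lambda_2)\braket{\Delta \sigma_{\vec{n}}}^2_{\lambda_l},\nonumber
\end{align}
where $\ket{\lambda_l}$ is either eigenstate of $\rho$; the two choices yield the same variance, as already noted in the proof of Proposition~\ref{Proposition 4}.

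The core step is to convert this eigenstate variance into the mixed-state variance $\braket{\Delta \sigma_{\vec{n}}}^2_{\rho}$, and here lies the one subtlety to get right: one must keep the full Bloch vector $\vec{r}$ distinct from the unit Bloch vector $\hat{r}=\vec{r}/|\vec{r}|$ of the eigenstate. Since the eigenstates of $\rho=\tfrac{1}{2}(I+\vec{r}\cdot\vec{\sigma})$ are pure states with Bloch vectors $\pm\hat{r}$, applying (\ref{IIID3-1}) to such a pure state gives $\braket{\Delta \sigma_{\vec{n}}}^2_{\lambda_l}=\tfrac{1}{4}[1-(\vec{n}\cdot\hat{r})^2]$, while $1-4\lambda_1\lambda_2=|\vec{r}|^2$ (using $\Tr\rho^2=\lambda_1^2+\lambda_2^2=1-2\lambda_1\lambda_2$ together with $\Tr\rho^2=\tfrac{1}{2}(1+|\vec{r}|^2)$). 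Multiplying and using $|\vec{r}|^2(\vec{n}\cdot\hat{r})^2=(\vec{n}\cdot\vec{r})^2$ gives $\Gamma_{\nu}\cal{I}^{\nu}_{\rho}(\sigma_{\vec{n}})=\tfrac{1}{4}[|\vec{r}|^2-(\vec{n}\cdot\vec{r})^2]$. Comparing with $\braket{\Delta \sigma_{\vec{n}}}^2_{\rho}=\tfrac{1}{4}[1-(\vec{n}\cdot\vec{r})^2]$ from (\ref{IIID3-1}), I obtain the clean, direction-independent relation
\begin{align}
\Gamma_{\nu}\cal{I}^{\nu}_{\rho}(\sigma_{\vec{n}})=\braket{\Delta \sigma_{\vec{n}}}^2_{\rho}-\tfrac{1}{2}(1-\Tr\rho^2).\nonumber
\end{align}
The crucial feature is that the correction $\tfrac{1}{2}(1-\Tr\rho^2)$ depends only on the purity, not on $\vec{n}$ or $\nu$, so it is the same for every spin direction and every index.

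With this bridge in hand both equalities are immediate. For (\ref{IIID3-5}) I would substitute the bridge relation only for the $\vec{n}_1$ term and leave the $\vec{n}_2,\vec{n}_3$ variances untouched, so that the left-hand side reads $\sum_{k=1}^{3}\braket{\Delta \sigma_{\vec{n}_k}}^2_{\rho}-\tfrac{1}{2}(1-\Tr\rho^2)$; invoking the three-direction sum rule (\ref{IIID3-3}), $\sum_{k}\braket{\Delta \sigma_{\vec{n}_k}}^2_{\rho}=1-\tfrac{1}{2}\Tr\rho^2$, collapses this to $\tfrac{1}{2}$. For (\ref{IIID3-6}) I would substitute the bridge relation for both the $\vec{n}_1$ and $\vec{n}_2$ terms, producing $\sum_{k=1}^{3}\braket{\Delta \sigma_{\vec{n}_k}}^2_{\rho}-(1-\Tr\rho^2)$, which by (\ref{IIID3-3}) equals $\tfrac{1}{2}\Tr\rho^2$. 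The only genuine work is the Bloch-vector bookkeeping in the middle paragraph; once the direction-independent correction is isolated, the orthonormality of $(\vec{n}_1,\vec{n}_2,\vec{n}_3)$ already encoded in (\ref{IIID3-3}) does the rest, and since these are exact equalities no minimization or inequality estimate is required.
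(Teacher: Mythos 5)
Your proof is correct, and it reaches the same pivotal identity as the paper, namely Eq.~(\ref{IIID3-8}) in the form $\Gamma_{\nu}\cal{I}^{\nu}_{\rho}(\sigma_{\vec{n}})=\braket{\Delta \sigma_{\vec{n}}}^2_{\rho}-\tfrac{1}{2}(1-\Tr\rho^2)$, after which both of you finish identically by substituting into the three-direction sum rule (\ref{IIID3-3}). The difference lies in how that bridge relation is obtained. The paper imports the qubit identity $\braket{\Delta \sigma}^2_{\rho}=\tfrac{1}{4}F_{\rho}(\sigma)+\tfrac{1}{2}(1-\Tr\rho^2)$ from Ref.~\cite{Toth-2018} and then converts the Fisher information into $\cal{I}^{\nu}_{\rho}$ via the proportionality (\ref{IIID2-2}), so its proof is a two-line reduction resting on an external citation. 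You instead derive the bridge relation from first principles: starting from the qubit formula (\ref{IIID2-1}) for $\cal{I}^{\nu}_{\rho}$ in terms of the eigenstate variance, you use the Bloch parametrization to compute $\braket{\Delta \sigma_{\vec{n}}}^2_{\lambda_l}=\tfrac{1}{4}[1-(\vec{n}\cdot\hat{r})^2]$ and $1-4\lambda_1\lambda_2=|\vec{r}|^2$, and the identity $|\vec{r}|^2(\vec{n}\cdot\hat{r})^2=(\vec{n}\cdot\vec{r})^2$ does the rest; your bookkeeping here is accurate. This makes your argument self-contained and, as a by-product, it actually proves the cited Toth relation (set $\nu=-1$) rather than assuming it; the paper's route is shorter but leans on the literature and on the earlier equivalence (\ref{IIID2-2}). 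Both are valid, and the final substitutions for one versus two directions yielding $\tfrac{1}{2}$ and $\tfrac{1}{2}\Tr\rho^2$ check out in each case.
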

\begin{proof}
In a qubit, it can be shown that  the following relation holds \cite{Toth-2018}: 
\begin{align}
\braket{\Delta \sigma}^2_{\rho}=\frac{1}{4}F_{\rho}(\sigma)+\frac{1}{2}(1-\Tr\rho^2).\label{IIID3-7}
\end{align}
By using Eq. (\ref{IIID2-2}) with $\nu_{k^{\prime}}=-1$ in Eq. (\ref{IIID3-7}), it becomes
\begin{align}
\braket{\Delta \sigma}^2_{\rho}= \Gamma_{\nu}\cal{I}^{\nu}_{\rho}(\sigma)+\frac{1}{2}(1-\Tr\rho^2).\label{IIID3-8}
\end{align}
It is now straightforward to prove Eqs. (\ref{IIID3-5}) and (\ref{IIID3-6}) by using Eq. (\ref{IIID3-8}) for operators $\sigma_{\vec{n}_1}$ and $\sigma_{\vec{n}_2}$ in Eq. (\ref{IIID3-3}).
\end{proof}
\subsubsection{Usefulness of generalized skew informations}
 Generalized skew information includes  both the well known informational quantities the Wigner-Yanase skew information ($\nu=0$) and the Fisher information ($\nu=-1$). The other values of $\nu$ in Eq. (\ref{IIID1-1}) may represent some other type of information which is currently unknown. But the convexity property of generalized skew information can be useful for detection of entanglement \cite{Horodecki2009} (or quantum steering \cite{Wiseman-2007}) in multipartite systems. In particular, as Fisher information $\cal{I}^{-1}_{\rho}(A)$ is upper bounder by   $\cal{I}^{-\infty}_{\rho}(A)$ and both of them are upper bounded by variance $V_{\rho}(A)$, it will be more efficient to use $\cal{I}^{-2}_{\rho}(A)$ or $\cal{I}^{-3}_{\rho}(A)$ or $\cdots$ or  $\cal{I}^{-\infty}_{\rho}(A)$ over $\cal{I}^{-1}_{\rho}(A)$ as the gap between $V_{\rho}(A)$ and   $\cal{I}^{-2}_{\rho}(A)$ or $V_{\rho}(A)$ and   $\cal{I}^{-3}_{\rho}(A)$ or $\cdots$ or $V_{\rho}(A)$ and   $\cal{I}^{-\infty}_{\rho}(A)$ is smaller than  the gap between $V_{\rho}(A)$ and   $\cal{I}^{-1}_{\rho}(A)$. Note that Fisher information has been used previously for entanglement detection of bipartite/multipartite systems \cite{Luca-2009,Toth-2012,Hyllus-2012}. 
 
 \subsection{{The Wigner-Yanase-Dyson skew information via weak value}}\label{IIIE}
 The Wigner-Yanase-Dyson skew information can not be determined in experiment directly by preparing the state and performing  measurement of the concerned observable using the conventional method. Here we provide a scheme based on weak values  which can be implemented in experiment to extract the Wigner-Yanase-Dyson skew information of an unknown  observable. In that case we only need the information of the spectral decomposition of the density operator and the information of the concerned observable is not required at all.  The experimental determination of the Wigner-Yanase-Dyson skew information will be useful in a situation where, for example, one is required to know whether the quantum state contains the information  (or more precisely skew information) with respect to an additive unknown conserved quantity represented by an observable or positive operator-valued measures in the quantum system.  A non-zero value of the Wigner-Yanase-Dyson skew information  obtained  in experiment will  imply that there is a information content in the system's state.  \par
  Compared to the expression of the Wigner-Yanase-Dyson skew information given in Eq. (\ref{IIA2-2}), it has now much simpler expression which is given in Eq. (\ref{IIIB-2}). As Hermitian operators are experimentally measurable, we consider  the Wigner-Yanase-Dyson skew information  $I^s_{\rho}(A)$ for a Hermitian operator $A$ which is unknown. Thus, Eq. (\ref{IIIB-2}) for the Hermitian operator $A$ becomes
\begin{align}
I^s_{\rho}(A)&=\braket{\Tilde{\Phi}^s|H^2_A|\tilde{\Phi}^{1-s}}\nonumber\\
&=\cal{N}_{s}\cal{N}_{1-s}\sum_{i,j=1}^d\braket{{\Phi}^s|H_A|\alpha_i\alpha_j^*}\braket{\alpha_i\alpha_j^*|H_A|{\Phi}^{1-s}}\nonumber\\
&=\cal{N}_{s}\cal{N}_{1-s}\sum_{i,j=1}^d\frac{\braket{{\Phi}^s|H_A|\alpha_i\alpha_j^*}}{\braket{{\Phi}^s|\alpha_i\alpha_j^*}}\frac{\braket{\alpha_i\alpha_j^*|H_A|{\Phi}^{1-s}}}{\braket{\alpha_i\alpha_j^*|{\Phi}^{1-s}}}\cross\nonumber\\
&\hspace{3cm}\braket{{\Phi}^{s}|\alpha_i\alpha_j^*}\braket{\alpha_i\alpha_j^*|{\Phi}^{1-s}},\label{IIIE-1}
\end{align}
where we have used   $\sum_{i,j=1}^d\ketbra{\alpha_i\alpha_j^*}{\alpha_i\alpha_j^*}=I\otimes I$ and \{$\ket{\alpha_i}\}_{i=1}^d$ and \{$\ket{\alpha_i^*}\}_{i=1}^d$ are basis in the first and second Hilbert spaces, respectively,  and  $\ket{\tilde{\Phi}^{1-s}}=\cal{N}_{1-s}\ket{{\Phi}^{1-s}}$, and   $\ket{\tilde{\Phi}^{s}}=\cal{N}_{s}\ket{{\Phi}^{s}}$, where  $\cal{N}_{1-s}$ and $\cal{N}_{s}$ are normalization factors.  Now, we denote  $\frac{\braket{\alpha_i\alpha_j^*|H_A|{\Phi}^{s}}}{\braket{\alpha_i\alpha_j^*|{\Phi}^{s}}}=\braket{H_A}_{{\Phi}^{s}}^{\alpha_i\alpha_j^*}$ as the weak value of the Hermitian operator $H_A$ with the pre- and postselections $\ket{{\Phi}^{s}}$ and $\ket{\alpha_i\alpha_j^*}$, respectively.  Similarly  $\frac{\braket{\alpha_i\alpha_j^*|H_A|{\Phi}^{1-s}}}{\braket{\alpha_i\alpha_j^*|{\Phi}^{1-s}}}=\braket{H_A}_{{\Phi}^{1-s}}^{\alpha_i\alpha_j^*}$ is the weak value of $H_A$ with the pre- and postselections $\ket{{\Phi}^{1-s}}$ and $\ket{\alpha_i\alpha_j^*}$, respectively.  Note that weak values are complex and   experimentally accessible. See Ref. \cite{sahil-sohail-sibasish,AAV-1988,Kofman-Nori,Dressel-2014} to understand how weak values can be measured in experiments in a single quantum system. In this case, as the weak value of an observable  on the joint Hilbert space $\cal{H}\otimes\cal{H}$ is involved,  we provide a detail description  in Appendix \ref{F} on how to obtain it. Finally, by denoting  $\cal{N}_{s}\braket{\alpha_i\alpha_j^*|{\Phi}^{s}}=\tilde{\cal{N}}_{s}^{ij}$   and $\cal{N}_{1-s}\braket{\alpha_i\alpha_j^*|{\Phi}^{1-s}}=\tilde{\cal{N}}_{1-s}^{ij}$ in Eq. (\ref{IIIE-1}), we have
\begin{align}
I^s_{\rho}(A)&=\sum_{i,j=1}^d[{\tilde{\cal{N}}_{s}^{ij}}]^*\tilde{\cal{N}}_{1-s}^{ij}\left[\braket{H_A}_{{\Phi}^{s}}^{\alpha_i\alpha_j^*}\right]^*\braket{H_A}_{{\Phi}^{1-s}}^{\alpha_i\alpha_j^*}.\label{IIIE-2}
\end{align}
As  the Wigner-Yanase-Dyson skew information of the Hermitian operator  $A$ is a real number, we must have the sum of all the imaginary values  (which appear due to the complex nature of the weak values) in the right hand side of Eq. (\ref{IIIE-2}) zero. Thus the correct expression for $I^s_{\rho}(A)$ is 
\begin{align}
I^s_{\rho}(A)&=\cal{R}e\left(\sum_{i,j=1}^d[{\tilde{\cal{N}}_{s}^{ij}}]^*\tilde{\cal{N}}_{1-s}^{ij}\left[\braket{H_A}_{{\Phi}^{s}}^{\alpha_i\alpha_j^*}\right]^*\braket{H_A}_{{\Phi}^{1-s}}^{\alpha_i\alpha_j^*}\right),\label{IIIE-3}
\end{align}
 and in addition, we have 
 \begin{align*}
 \cal{I}m\left(\sum_{i,j=1}^d[{\tilde{\cal{N}}_{s}^{ij}}]^*\tilde{\cal{N}}_{1-s}^{ij}\left[\braket{H_A}_{{\Phi}^{s}}^{\alpha_i\alpha_j^*}\right]^*\braket{H_A}_{{\Phi}^{1-s}}^{\alpha_i\alpha_j^*}\right)=0.
 \end{align*}
 Thus if the weak values of the operator $H_A$ for the pre-selections $\ket{\Phi^s}$ and $\ket{\Phi^{1-s}}$ are determined experimentally when the postselections are $\ket{\alpha_i\alpha_j^*}$ for $i,j=1,2,\cdots,d$, then  the Wigner-Yanase-Dyson skew information of the  Hermitian operator  $A$  is recovered immediately. Note that $\tilde{\cal{N}}_{s}^{ij}$ and $\tilde{\cal{N}}_{1-s}^{ij}$ are known from the information given about the density operator $\rho$ and the basis \{$\ket{\alpha_i}\}_{i=1}^d$.
\section{conclusion}\label{IV}
   We have derived several   state-dependent and state-independent uncertainty relations  based on skew information and standard deviation. We have  derived the sum and product  uncertainty equalities (\ref{IIIA1-1}) and (\ref{IIIA1-2}) based on standard deviation  when the state of the system is an arbitrary density operator, a generalization of the work of Ref. \cite{Maccone-Pati,Yao-2015} where only pure states were considered.  The consequences of the sum and product uncertainty equalities are: ($i$) they can be used to obtain a series of uncertainty inequalities with hierarchical structure, ($ii$) to solve the triviality issue of the RHUR in the sense that even if the average value of the commutation of incompatible observables is zero, the lower bound of the uncertainty relation doesn't become zero.  Product uncertainty equality can be used to derive stronger quantum speed limits for states and observables.   We have derived another product uncertainty equality (\ref{IIIA1-3}) which remains meaningful even if the average value of the commutator is zero while the product uncertainty equality (\ref{IIIA1-2}) becomes useless as in that condition one of the sides of the equality becomes zero. We have also derived sum and product uncertainty equalities for multiple observables (\emph{e.g}., three) and shown that our  uncertainty equalities do not contain covariances of the operators. Having covariances of observables in  uncertainty inequalities  for multiple observables is a difficult task to analyse  them and the lower bounds of those inequalities contain negative terms of product of covariances and thus further simplification of the lower bounds can not be done by deleting the terms of product of covariances (see Appendix \ref{A}). \par
 We have derived  uncertainty equality (\ref{IIIA2-1}) for the Wigner-Yanse-Dyson skew informations of two incompatible operators which contains commutator of those operators. We have shown the consequences of that uncertainty equality  and these are in the following. ($i$) They are useful to derive a series of uncertainty inequalities with hierarchical structure, ($ii$)  We have shown using uncertainty equality that the  commutator term remains explicit and non-zero in the lower  bound   of uncertainty relations based on the Wigner-Yanase skew information and standard deviation   even if the average value of the commutator is zero (see \emph{Corollary \ref{Corollary 1}}), and ($iii$) we have used  the commutator in the uncertainty equality  to derive quantum speed limit for observable (state) [see Eq. (\ref{IIIA2-1-2})]. We have  shown that  skew information (variance) of an arbitrary operator can be expressed as the sum of  skew informations (variances) of two Hermitian operators. This  particular equality  has been used to derive state-independent uncertainty relations based on skew informations  of $N$ number of arbitrary operators.  \par
  We have derived state-independent uncertainty relations for sum of  skew informations of a collection of arbitrary operators. {As the skew information can be used as a resource in various contexts, a non-zero value of it must be guaranteed irrespective of the system’s density operators. A state-independent uncertainty relation based on skew information provides that guarantee in the sense that even if the skew information of an operator is zero, the skew information of another incompatible operator is never zero and the reason is solely due to the noncommutative nature of those operators}. We considered a system of spin-$j$ particles and have shown analytically for spin-$1/2$, $1$, (and spin-$3/2$, 2 numerically) and finally conjectured for large spin number that the lower bound of the  state-independent uncertainty relation for the Wigner-Yanase skew informations of spin operators in three orthogonal directions doesn't depend on  the spin numbers. The condition on the states  which saturates   the lower bound of the state-independent uncertainty relation based on the Wigner-Yanase skew information has been  derived. We have also provided a strategy to check the goodness of the lower bound in our uncertainty relation. For pure states, the results become state-independent uncertainty relations for   sum of  standard deviations  of the collection of those arbitrary operators and this can be considered as a generalization of the work of \cite{Giorda-2019} where only Hermitian operators were considered. In the work \cite{Giorda-2019}, the authors considered that all the operators whose state-independent  sum uncertainty relation is considered must not share any common eigenstate. We have relaxed the condition and shown that at least two operators must not share any common eigenstate. We provided  an example  where we have shown that our state-independent uncertainty relation based on  standard deviation  gives tighter state-independent uncertainty relation  than the work of \cite{Giorda-2019}. The  Wigner-Yanase skew information of a quantum channel has recently been used as a measure of quantum coherence of a density operator with respect to that channel \cite{Luo-Sun-2017}. We have shown that if  the coherence of a density operator  is measured with respect to a collection of different channels, then there exists a state-independent uncertainty  relation for  coherence measures of the density operator  with respect to  that collection of different channels (see Sec. \ref{IIIC}). The implication of the uncertainty relation  is that  how the coherence of the density operator with respect to a channel  is restricted by the coherence of the same density operator with respect to the other channels. \par
We have derived  state-dependent and state-independent  uncertainty relations for   generalized skew informations, a more general version of skew information whose special cases are the  Wigner-Yanase (-Dyson) skew information and the Fisher information \cite{Yang-2022}. Although in an arbitrary dimensional system, it is not possible to interrelate  different type of  generalized skew informations, for example, relation between the Wigner-Yanase skew information ($\nu=0$) and the Fisher information ($\nu=-1$),   or  more generally, relation between $\cal{I}^{\nu}_{\rho}(A)$ and $\cal{I}^{\nu^{\prime}}_{\rho}(A)$, we have shown that it is possible to interrelate   different type of generalized skew informations  in a qubit (see Sec. \ref{IIID}). Moreover,  uncertainty relations for  different form of generalized  skew informations and standard deviations of incompatible operators, tight uncertainty relations, and state-independent uncertainty equalities involving generalized skew informations and standard deviations of spin operators along three orthogonal directions have been derived and discussed in detail. As skew information and standard deviation are important in information theory, the uncertainty relations based on skew information and standard deviation will play fundamental roles in revealing how the values of skew informations and standard deviations of incompatible observables limit each others.\par
Finally, we have provided a scheme based on weak values which can be implemented in experiment to determine the Wigner-Yanase-Dyson skew information of an unknown observable (see Sec. \ref{IIIE}).  The experimental determination of the Wigner-Yanase-Dyson skew information is  useful in a situation where, for example, one is required to know whether the quantum state contains the information  (or more precisely skew information) with respect to an additive unknown conserved quantity represented by an observable or positive operator-valued measures in the quantum system.  A non-zero value of the Wigner-Yanase-Dyson skew information  obtained  in experiment will  imply that there is  information content in the system's state. \par
  As non-Hermitian  operators (\emph{e.g}.,  unitary operators) are becoming very useful in recent years in the context of uncertainty relations \cite{Massar-Spindel,Shrobona-Pati,Bong-2018,Zhao-2024}, our results (which  hold for arbitrary operators) can be used as uncertainty relations for unitary  operators as well.  We have   compared our result of state-independent uncertainty relation based on standard deviations of $N$ number of non-Hermitian operators with the work of Ref. \cite{Zhao-2024} in the context of  entanglement detection in Appendix \ref{G} by showing that their method is  limited for the Kraus operators of quantum channels only whereas our method works for any number of arbitrary operators.\par
 A detail investigation of  speed limits for observables and states [see Eq. (\ref{IIIA2-1-2})] using uncertainty equalities  based on  skew information and standard deviation are left for  future work.
 \section*{Acknowledgments} 
The author acknowledges  Prof. Sibasish Ghosh and Sohail for  fruitful discussions.

\appendix
\onecolumngrid
\section{}\label{A}
\subsection{Proof of Theorem 1: Eq. (\ref{IIIA1-1})}
By using the definition of the standard deviation for  non-Hermitian operators $A$ and $B$ [see Eq. (\ref{IIA1-1})], we have
\begin{align}
\braket{\Delta A}_{\rho}^2+\braket{\Delta B}_{\rho}^2&=\frac{1}{2}\Tr[(A^{\dagger}A+AA^{\dagger})\rho]+\frac{1}{2}\Tr[(B^{\dagger}B+BB^{\dagger})\rho]-(|Tr(A\rho)|^2+|Tr(B\rho)|^2)\nonumber\\
&=\pm \frac{i}{2} \Tr([A^{\dagger},B]\rho)\pm \frac{i}{2} \Tr([A,B^{\dagger}]\rho)+\frac{1}{2}\Tr[(A\mp iB)^{\dagger}(A\mp iB)\rho]+\frac{1}{2}\Tr[(A\pm iB)(A\pm iB)^{\dagger}\rho]\nonumber\\
&\hspace{3.8mm}-\frac{1}{2}|\Tr[(A\mp iB)\rho]|^2-\frac{1}{2}|\Tr[(A\pm iB)\rho]|^2\nonumber\\
&=\pm \frac{i}{2} \Tr([A^{\dagger},B]\rho)\pm \frac{i}{2} \Tr([A,B^{\dagger}]\rho)+\frac{1}{2}\Tr[\left(A\mp iB-\braket{A\mp iB}_{\rho}I\right)^{\dagger}\left(A\mp iB-\braket{A\mp iB}_{\rho}I\right)\rho]\nonumber\\
&\hspace{3.8mm} +\frac{1}{2}\Tr[\left(A\pm iB-\braket{A\pm iB}_{\rho}I\right)\left(A\pm iB-\braket{A\pm iB}_{\rho}I\right)^{\dagger}\rho]\nonumber\\
&=\pm \frac{1}{2}\braket{i([A^{\dagger},B]+[A,B^{\dagger}])}_{\rho}+\frac{1}{2}\braket{(M_{\mp}^{\dagger}M_{\mp}+N_{\pm}N_{\pm}^{\dagger})}_{\rho},\label{A1}
\end{align}
where $M_{\mp}=A\mp iB-\braket{A\mp iB}_{\rho}I$ and $N_{\pm}=A\pm iB-\braket{A\pm iB}_{\rho}I$, and thus Eq. (\ref{IIIA1-1}) is proved.
\subsection{Proof of Theorem 1: Eq. (\ref{IIIA1-2})}
By substituting  $A\rightarrow\frac{A}{\braket{\Delta A}_{\rho}}$ and $B\rightarrow\frac{B}{\braket{\Delta B}_{\rho}}$ in Eq. (\ref{A1}), we have 
\begin{align}
\frac{\braket{\Delta A}^2_{\rho}}{\braket{\Delta A}^2_{\rho}}+\frac{\braket{\Delta B}^2_{\rho}}{\braket{\Delta B}^2_{\rho}}=\pm \frac{1}{2\braket{\Delta A}_{\rho}\braket{\Delta B}_{\rho}}\braket{i([A^{\dagger},B]+[A,B^{\dagger}])}_{\rho}+\frac{1}{2}\braket{(R_{\mp}^{\dagger}R_{\mp}+S_{\pm}S_{\pm}^{\dagger})}_{\rho},\label{A2}
\end{align}
where we have used the fact that $M_{\mp}\rightarrow R_{\mp}$ and $N_{\mp}\rightarrow S_{\pm}$ under the above transformation. Here $R_{\mp}$ and $S_{\pm}$ are defined  in Eq. (\ref{IIIA1-2}). After manipulation of Eq. (\ref{A2}), we arrive at Eq. (\ref{IIIA1-2}).
\subsection{Uncertainty equalities for three observables}
Here we derive sum and product uncertainty equalities for three Hermitian operators. The consequences of the uncertainty equalities  are that a series of inequalities with hierarchical structure can be obtained involving three Hermitian operators. Moreover, the lower bounds of the sum and product uncertainty equalities do not contain the covariances of those Hermitian operators. One can derive uncertainty equalities  for any number of arbitrary operators, but we restrict ourselves for three Hermitian operators.     \par
Let $X_1$, $X_2$ and $X_3$ be Hermitian operators and we denote $X_{ii}=\braket{\Delta X_i}_{\rho}$, $Y_{ij}=\frac{1}{2}\braket{i[X_i,X_j]}_{\rho}$ and $\braket{X_i}=\braket{X_i}_{\rho}$. Then it can be shown using Eq. (\ref{A1}) that the following sum uncertainty equality holds:
 \begin{align}
       X_{11}^2+X_{22}^2+X_{33}^2&=\Big(r_{12}Y_{12}+r_{23}Y_{23}+r_{31}Y_{31}\Big)+\frac{1}{2}\braket{(M_{12}^{\dagger}M_{12}+M_{23}^{\dagger}M_{23}+M_{31}^{\dagger}M_{31})},\label{A3}
  \end{align}
        where $M_{12}=X_1-\braket{X_1}I+i r_{12}(X_2-\braket{X_2}I)$, $M_{23}=X_2-\braket{X_2}I+i r_{23}(X_3-\braket{X_3}I)$, and $M_{31}=X_3-\braket{X_3}I+i r_{31}(X_1-\braket{X_1}I)$. Here, $r_{12}=\pm$, $r_{23}=\pm$ and $r_{31}=\pm$ are chosen such that each of the term within the first bracket in the right hand side of Eq. (\ref{A3}) is positive. \par
To derive product uncertainty equality,  make the  substitutions   $A_1\rightarrow A_1X_{22}X_{33}$, $A_2\rightarrow A_2X_{33}X_{11}$ and $A_3\rightarrow A_3X_{11}X_{22}$ in Eq. (\ref{A3}). Then we have
\begin{align}
 3(X_{11}X_{22}X_{33})^2&=X_{11}X_{22}X_{33}\Big(r_{12}Y_{12}X_{33}+r_{23}Y_{23}X_{11}+r_{31}Y_{31}X_{22}\Big)+\frac{1}{2}\braket{({M^{\prime}_{12}}^{\dagger}{M_{12}^{\prime}}+{M^{\prime}_{23}}^{\dagger}{M_{23}^{\prime}}+{M^{\prime}_{31}}^{\dagger}{M_{31}^{\prime}})},\label{A4}
  \end{align}
where
\begin{align}
M_{12}^{\prime}&=X_{22}X_{33}(X_1-\braket{X_1}I)+i r_{12}X_{33}X_{11}(X_2-\braket{X_2}I),\nonumber\\
M_{23}^{\prime}&=X_{33}X_{11}(X_2-\braket{X_2}I)+i r_{23}X_{11}X_{22}(X_3-\braket{X_3}I),\nonumber\\
M_{31}^{\prime}&=X_{11}X_{22}(X_3-\braket{X_3}I)+i r_{31}X_{22}X_{33}(X_1-\braket{X_1}I).\nonumber
\end{align}
Now  divide both the sides of Eq. (\ref{A4}) by $X_{11}X_{22}X_{33}$, and use $\frac{{M_{12}^{\prime}}^{\dagger}M_{12}^{\prime}}{X_{11}X_{22}X_{33}}=\tilde{M}_{12}^{\dagger}\tilde{M}_{12}$, $\frac{{M_{23}^{\prime}}^{\dagger}M_{23}^{\prime}}{X_{11}X_{22}X_{33}}=\tilde{M}_{23}^{\dagger}\tilde{M}_{23}$ and $\frac{{M_{31}^{\prime}}^{\dagger}M_{31}^{\prime}}{X_{11}X_{22}X_{33}}=\tilde{M}_{31}^{\dagger}\tilde{M}_{31}$, where
\begin{align}
\tilde{M}_{12}&=\sqrt{\frac{X_{22}X_{33}}{X_{11}}}\Big(X_1-\braket{X_1}I\Big)+i r_{12}\sqrt{\frac{X_{33}X_{11}}{X_{22}}}\Big(X_2-\braket{X_2}I\Big),\nonumber\\
\tilde{M}_{23}&=\sqrt{\frac{X_{33}X_{11}}{X_{22}}}\Big(X_2-\braket{X_2}I\Big)+i r_{12}\sqrt{\frac{X_{11}X_{22}}{X_{33}}}\Big(X_3-\braket{X_3}I\Big),\nonumber\\
\tilde{M}_{31}&=\sqrt{\frac{X_{11}X_{22}}{X_{33}}}\Big(X_3-\braket{X_3}I\Big)+i r_{12}\sqrt{\frac{X_{22}X_{33}}{X_{11}}}\Big(X_1-\braket{X_1}I\Big)\nonumber
\end{align}
in Eq. (\ref{A4}), then it becomes
 \begin{align}
 X_{11}X_{22}X_{33}&=\frac{1}{3}\Big(r_{12}Y_{12}X_{33}+r_{23}Y_{23}X_{11}+r_{31}Y_{31}X_{22}\Big)+\frac{1}{6}\braket{({\tilde{M}_{12}}^{\dagger}{\tilde{M}_{12}}+{\tilde{M}_{23}}^{\dagger}{\tilde{M}_{23}}+{\tilde{M}_{31}}^{\dagger}{\tilde{M}_{31}})}.\label{A5}
  \end{align}
  Both the sum and product uncertainty equalities (\ref{A3}) and (\ref{A5}) contain all the possible commutators of the three observables. Both the equalities do not contain the covariances and   remain meaningful \emph{i.e}., non-trivial even if all the average values of the commutators are zero.  \par
  To compare with the work Ref. \cite{Dodonov-2018}, the author derived several uncertainty inequalities for three and four observables (Hermitian operators), and showed that those inequalities do not  contain any covariances of different observables. It was shown that for more than two observables, product uncertainty relations can contain a large number of covariances and to analyse those inequalities with such large number of  covariances  can be very difficult \cite{Dodonov-2018, Qin-2016}. In addition,  the lower bounds of those inequalities contain negative  terms of product of covariances  and thus further simplification of the lower bounds can not be done by deleting  the  terms of product of covariances. See Eq. (7) of Ref. \cite{Dodonov-2018} for a detail description. The uncertainty relations derived in Ref. \cite{Dodonov-2018} are in inequality form and can not be made  arbitrarily  tight. In contrast, our relations (\ref{A4}) and (\ref{A5})  do not contain any covariances of different observables, and a series of uncertainty inequalities with hierarchical structure can be obtained.

\section{}\label{B}
\textbf{Proof of Theorem \ref{Theorem 2}:}\par
The skew information of an arbitrary operator $A$ is defined in Eq. (\ref{IIA2-2}) as 
\begin{align}
    I^s_{\rho}(A)=\frac{1}{2}Tr([\rho^s,A]^{\dagger}[\rho^{1-s},A])=\frac{1}{2}\Tr[(A^{\dagger}A+AA^{\dagger})\rho]-\frac{1}{2}\Tr(\rho^{1-s}A^{\dagger}\rho^sA)-\frac{1}{2}\Tr(\rho^sA^{\dagger}\rho^{1-s}A).\label{B1}
\end{align}
Now consider two positive operators 
\begin{align}
\xi^{\pm}&=\left(\frac{A}{\sqrt{I^s_{\rho}(A)}}\pm i\frac{B}{\sqrt{I^s_{\rho}(B)}}\right)^{\dagger}\rho^s\left(\frac{A}{\sqrt{I^s_{\rho}(A)}}\pm i\frac{B}{\sqrt{I^s_{\rho}(B)}}\right),\label{B2}\\
\eta^{\mp}&=\left(\frac{A}{\sqrt{I^s_{\rho}(A)}}\mp i\frac{B}{\sqrt{I^s_{\rho}(B)}}\right)\rho^s\left(\frac{A}{\sqrt{I^s_{\rho}(A)}}\mp i\frac{B}{\sqrt{I^s_{\rho}(B)}}\right)^{\dagger},\label{B3}
\end{align}
and calculate 
\begin{align}
\frac{1}{2}\Tr(\xi^{\pm})+\frac{1}{2}\Tr(\eta^{\mp})=\frac{\Tr[\frac{1}{2}(A^{\dagger}A+AA^{\dagger})\rho^s]}{I_{\rho}^s(A)}+\frac{\Tr[\frac{1}{2}(B^{\dagger}B+BB^{\dagger})\rho^s]}{I_{\rho}^s(B)}\mp \frac{i}{2}\frac{\Tr([A^{\dagger},B]\rho^s)}{\sqrt{I_{\rho}^s(A)I_{\rho}^s(B)}}\mp \frac{i}{2}\frac{\Tr([A,B^{\dagger}]\rho^s)}{\sqrt{I_{\rho}^s(A)I_{\rho}^s(B)}}.\label{B4}
\end{align}
By substituting $\rho^s=\rho+\sigma$ [where $\sigma$ is a positive operator defined as $\sigma=\rho^s-\rho$] in the first two terms in the right hand side of Eq. (\ref{B4}), it becomes 
\begin{align}
\frac{1}{2}\Tr(\xi^{\pm})+\frac{1}{2}\Tr(\eta^{\mp})=&\frac{\Tr[\frac{1}{2}(A^{\dagger}A+AA^{\dagger})\rho]}{I_{\rho}^s(A)}+\frac{\Tr[\frac{1}{2}(B^{\dagger}B+BB^{\dagger})\rho]}{I_{\rho}^s(B)}+\frac{\Tr(\frac{1}{2}\{A^{\dagger},A\}\sigma)}{I_{\rho}^s(A)}+\frac{\Tr(\frac{1}{2}\{B^{\dagger},B\}\sigma)}{I_{\rho}^s(B)}\nonumber\\
&\mp \frac{i}{2}\frac{\Tr([A^{\dagger},B]\rho^s)}{\sqrt{I_{\rho}^s(A)I_{\rho}^s(B)}}\mp \frac{i}{2}\frac{\Tr([A,B^{\dagger}]\rho^s)}{\sqrt{I_{\rho}^s(A)I_{\rho}^s(B)}}.\label{B5}
\end{align}
Also calculate 
\begin{align}
\frac{1}{2}\Tr(\xi^{\pm}\rho^{1-s})+\frac{1}{2}\Tr(\eta^{\mp}\rho^{1-s})=&\frac{1}{2}\frac{\Tr(\rho^{1-s}A^{\dagger}\rho^sA)}{I_{\rho}^s(A)}\pm\frac{i}{2}\frac{\Tr(\rho^{1-s}A^{\dagger}\rho^sB)}{\sqrt{I_{\rho}^s(A)I^s_{\rho}(B)}}\mp\frac{i}{2}\frac{\Tr(\rho^{1-s}B^{\dagger}\rho^sA)}{\sqrt{I_{\rho}^s(A)I^s_{\rho}(B)}}+\frac{1}{2}\frac{\Tr(\rho^{1-s}B^{\dagger}\rho^sB)}{I_{\rho}^s(A)} \nonumber\\
&+\frac{1}{2}\frac{\Tr(\rho^{1-s}A\rho^sA^{\dagger})}{I_{\rho}^s(A)}\pm\frac{i}{2}\frac{\Tr(\rho^{1-s}A\rho^sB^{\dagger})}{\sqrt{I_{\rho}^s(A)I^s_{\rho}(B)}}\mp\frac{i}{2}\frac{\Tr(\rho^{1-s}B\rho^sA^{\dagger})}{\sqrt{I_{\rho}^s(A)I^s_{\rho}(B)}}+\frac{1}{2}\frac{\Tr(\rho^{1-s}B\rho^sB^{\dagger})}{I_{\rho}^s(A)}.\label{B6}
\end{align}
By taking the difference between Eqs. (\ref{B5}) and (\ref{B6}), we have the following equality:
\begin{align}
\frac{1}{2}\Tr[(\xi^{\pm}+\eta^{\mp})(I-\rho^{1-s})]=2+2\Omega_{AB}^s -\frac{\pm\frac{i}{2}\left[\Tr([A^{\dagger},B]\rho^s)+\Tr([A,B^{\dagger}]\rho^s)+\cal{E}_{AB}^s\right]}{\sqrt{I_{\rho}^s(A)I^s_{\rho}(B)}},\label{B7}
\end{align}
where $\cal{E}_{AB}^s$ and $\Omega_{AB}^s$ are defined in Eq. (\ref{IIIA2-1}). After manipulation, we obtain the uncertainty equality (\ref{IIIA2-1}).\par

\twocolumngrid

\section{}\label{C}
\textbf{The reverse Cauchy-Schwarz inequality:} We follow the same procedure given in  Ref. \cite{Otachel-2018} to prove the \emph{Proposition \ref{Proposition 1}}.\par
($i$) Given the unnormalized state $\ket{\tilde{\Phi}^s}$ and an arbitrary normalized state $\ket{\chi} \in \cal{H}\otimes \cal{H}$ such that $\braket{\tilde{\Phi}^s|\chi}\neq 0$, the  state orthogonal to $\ket{\chi}$ is obtained by the Gram-Schmidt orthogonalization procedure as 
\begin{align}
\ket{\tilde{\chi}^{\perp}_s}=\ket{\tilde{\Phi}^s}-\frac{\braket{\chi|\tilde{\Phi}^s}}{\braket{\chi|\chi}}\ket{\chi},\label{C1}
\end{align}
where $\ket{\tilde{\chi}^{\perp}_s}$ is an unnormalized state. Eq. (\ref{C1}) can be rewritten as 
\begin{align}
c_1\ket{\tilde{\Phi}^s}=\ket{\chi}+\ket{\tilde{\chi}^{\perp}_1},\label{C2}
\end{align}
where $c_1=\frac{1}{\braket{\chi|\tilde{\Phi}^s}}$ and $\ket{\tilde{\chi}^{\perp}_1}=c_1\ket{\tilde{\chi}^{\perp}_s}$. In the same way, given the unnormalized state $H_{tot}\ket{\tilde{\Phi}^{1-s}}=\ket{\tilde{\Phi}^{1-s}_{H_{tot}}}$ and  $\ket{\chi}$ such that $\braket{\tilde{\Phi}^{1-s}_{H_{tot}}|\chi}\neq 0$, the  state orthogonal to $\ket{\chi}$ is obtained by the Gram-Schmidt orthogonalization procedure as 
\begin{align}
\ket{\tilde{\chi}^{\perp}_{1-s}}=\ket{\tilde{\Phi}_{H_{tot}}^{1-s}}-\frac{\braket{\chi|\tilde{\Phi}_{H_{tot}}^{1-s}}}{\braket{\chi|\chi}}\ket{\chi},\label{C1-1}
\end{align}
where $\ket{\tilde{\chi}^{\perp}_{1-s}}$ is an unnormalized state and  Eq. (\ref{C1-1}) can be rewritten as 
\begin{align}
c_2\ket{\tilde{\Phi}^{1-s}_{H_{tot}}}=\ket{\chi}+\ket{\tilde{\chi}^{\perp}_2},\label{C3}
\end{align}
where $c_2=\frac{1}{\braket{\chi|\tilde{\Phi}_{H_{tot}}^{1-s}}}$ and $\ket{\tilde{\chi}^{\perp}_2}=c_2\ket{\tilde{\chi}^{\perp}_{1-s}}$. Note that $\braket{\chi|\tilde{\chi}^{\perp}_i}=0$ for $i=1,2$. \par
Now we consider 
\begin{align}
 ||\ket{\tilde{\chi}^{\perp}_i}||\leq \tau_i||\ket{\chi}||,\label{C4}
 \end{align}
 where $\tau_i\geq 0$  for $i=1,2$. By using Eq. (\ref{C4}) in the Cauchy-Schwarz inequality $Re\braket{\tilde{\chi}^{\perp}_1|\tilde{\chi}^{\perp}_2}\leq |\braket{\tilde{\chi}^{\perp}_1|\tilde{\chi}^{\perp}_2}|\leq ||\ket{\tilde{\chi}^{\perp}_1}||||\ket{\tilde{\chi}^{\perp}_2}||\leq\tau_1\tau_2||\ket{\chi}||^2$, we have 
 \begin{align}
 Re\braket{\tilde{\chi}^{\perp}_1|\tilde{\chi}^{\perp}_2}\geq -\tau_1\tau_2||\ket{\chi}||^2.\label{C5}
 \end{align}
  From Eq. (\ref{C2}) and (\ref{C3}), we have $|c_1|^2||\ket{\tilde{\Phi}^s}||^2=||\ket{\chi}||^2+||\ket{\tilde{\chi}^{\perp}_1}||^2\leq (1+\tau_1^2)||\ket{\chi}||^2$ and $|c_2|^2||\ket{\tilde{\Phi}^{1-s}_{H_{tot}}}||^2=||\ket{\chi}||^2+||\ket{\tilde{\chi}^{\perp}_2}||^2\leq (1+\tau_2^2)||\ket{\chi}||^2$. Hence 
   \begin{align}
   |c_1|^2|c_2|^2||\ket{\tilde{\Phi}^s}||^2||\ket{\tilde{\Phi}^{1-s}_{H_{tot}}}||^2\leq (1+\tau_1^2)(1+\tau_2^2)||\ket{\chi}||^4.\label{C6}
 \end{align}
 Again from Eq. (\ref{C2}) and (\ref{C3}), we have $c_1^*c_2\braket{\tilde{\Phi}^s|\tilde{\Phi}^{1-s}_{H_{tot}}}=||\ket{\chi}||^2+\braket{\tilde{\chi}^{\perp}_1|\tilde{\chi}^{\perp}_2}$ implying  $|c_1|^2|c_2|^2|\braket{\tilde{\Phi}^s|\tilde{\Phi}^{1-s}_{H_{tot}}}|^2=\left(||\ket{\chi}||^2+Re\braket{\tilde{\chi}^{\perp}_1|\tilde{\chi}^{\perp}_2}\right)^2+\left(Im\braket{\tilde{\chi}^{\perp}_1|\tilde{\chi}^{\perp}_2}\right)^2$. Now, by using Eq. (\ref{C5}), it becomes
   \begin{align}
|c_1|^2|c_2|^2|\braket{\tilde{\Phi}^s|\tilde{\Phi}^{1-s}_{H_{tot}}}|^2&\geq\left(||\ket{\chi}||^2+Re\braket{\tilde{\chi}^{\perp}_1|\tilde{\chi}^{\perp}_2}\right)^2\nonumber\\
&\geq \left(||\ket{\chi}||^2-\tau_1\tau_2||\ket{\chi}||^2\right)^2\nonumber\\
&=(1-\tau_1\tau_2)^2||\ket{\chi}||^4.\label{C7}
 \end{align}
 By combining Eqs. (\ref{C6}) and (\ref{C7}), and putting $|\braket{\tilde{\Phi}^s|\tilde{\Phi}^{1-s}_{H_{tot}}}|=\braket{\tilde{\Phi}^s|{H_{tot}}|\tilde{\Phi}^{1-s}}$, $||\ket{\tilde{\Phi}^{1-s}_{H_{tot}}}||=\sqrt{\braket{\Phi^{1-s}|H_{tot}^2|\Phi^{1-s}}}\sqrt{\braket{\tilde{\Phi}^{1-s}|\tilde{\Phi}^{1-s}}}$, where $\ket{\Phi^{1-s}}=\ket{\tilde{\Phi}^{1-s}}/\sqrt{\braket{\tilde{\Phi}^{1-s}|\tilde{\Phi}^{1-s}}}$, we have
   \begin{align}
        |\braket{\Tilde{\Phi}^s|{H_{tot}}|\tilde{\Phi}^{1-s}}| \geq  \frac{(1-\tau_1\tau_2)\Theta_s}{(1+\tau_1^2)(1+\tau_2^2)}\sqrt{\braket{\Phi^{1-s}|H_{tot}^2|\Phi^{1-s}}},\label{C8}
    \end{align}
    where $\Theta_s=\sqrt{\braket{\tilde{\Phi}^{s}|\tilde{\Phi}^{s}}\braket{\tilde{\Phi}^{1-s}|\tilde{\Phi}^{1-s}}}=\sqrt{[\Tr\rho^{2s}][\Tr\rho^{2(1-s)}]}$.  To hold Eq. (\ref{C8}) nontrivially, it must be $\tau_1\tau_2<1$. 
 By using Eq. (\ref{C2}) and (\ref{C3}), and $||\ket{\chi}||=1$, in Eq. (\ref{C4}), we have the conditions on $\tau_1$ and $\tau_2$ as
 \begin{equation*}
 \begin{split}
   \tau_1&\geq \frac{1}{|\braket{\chi|\Phi^s}|^2}-1,\\
    \tau_2&\geq \frac{1}{|\braket{\chi|\Phi^{1-s}_{H_{tot}}}|^2}-1,\\
    s.t\\
    &\tau_1\tau_2<1,
    \end{split}
    \end{equation*}
where $\ket{\Phi^s}=\ket{\tilde{\Phi}^s}/\sqrt{\braket{\tilde{\Phi}^s|\tilde{\Phi}^s}}$,  $\ket{\Phi^{1-s}_{H_{tot}}}=\ket{\tilde{\Phi}^{1-s}_{H_{tot}}}/\sqrt{\braket{\tilde{\Phi}^{1-s}_{H_{tot}}|\tilde{\Phi}^{1-s}_{H_{tot}}}}$.  To make  inequality (\ref{C8}), the tightest one, we have to maximize the right hand side of  inequality (\ref{C8}) over $\tau_1$ and $\tau_2$ and thus inequality (\ref{IIIB1-1}) is proved.\par
($ii$) The left hand side of  inequality (\ref{C8}) remains unchanged  under $s\rightarrow 1-s$, but the right hand side does not. Hence, we will get another lower bound of $|\braket{\Tilde{\Phi}^s|H_{tot}|\tilde{\Phi}^{1-s}}|$ and can be obtained by just substituting  $s\rightarrow 1-s$ in  inequality (\ref{C8}). This is given in the main text in  inequality (\ref{IIIB1-2}).

\section{}\label{D}
Here, our first goal is to find  the state(s) $\ket{\Phi_0}\in \mathcal{H}\otimes\mathcal{H}$ for which   $\braket{{\Phi_0}|H_{tot}^2|{\Phi_0}}=0$ and based on that we will find the minimum value of $\braket{{\Phi^{1-s}}|H_{tot}^2|{\Phi^{1-s}}}$ and $\braket{{\Phi^{s}}|H_{tot}^2|{\Phi^{s}}}$.  As both $H_{tot}$ and $H_{tot}^2$  are positive semidefinite operators,  $\braket{{\Phi_0}|H_{tot}|{\Phi_0}}=0$  for the same $\ket{\Phi_0}$  also. Hence,  for  convenience, we will start with the condition $\braket{{\Phi_0}|H_{tot}|{\Phi_0}}=0$ to find  $\ket{\Phi_0}$. Note that  $\braket{{\Phi_0}|H^2_{k,n}|{\Phi_0}}=0$, $\forall k$, and $n=1,2$ $\implies$ $\braket{{\Phi_0}|H_{tot}|{\Phi_0}}=0$.  It is easy to  show that $\braket{{\Phi_0}|H_{k,n}^2|{\Phi_0}}=0$ if and only if $\ket{\Phi_0}\in Ker(H_{k,n})$ (see Appendix \ref{H}). Now, as each  $A_{k,n}$  is  Hermitian, then   we have  $H_{k,n}=\frac{1}{\sqrt{2}}(A_{k,n}\otimes I-I\otimes A_{k,n}^T)=\sum_{i,j}(a^{(i)}_{k,n}-a^{(j)}_{k,n})\ket{a^{(i)}_{k,n}}\ket{{a^{(j)^*}_{k,n}}}\bra{a^{(i)}_{k,n}}\bra{{a^{(j)^*}_{k,n}}}$, where  \{$\ket{a^{(i)}_{k,n}}\}$ and \{$\ket{{a^{(j)^*}_{k,n}}}\}$ are the eigenstates of $A_{k,n}$ and   $A_{k,n}^T$, respectively. See Appendix \ref{H} for the properties of $A_{k,n}^T$. Thus $span$\{$\ket{a^{(1)}_{k,n}}\ket{{a^{(1)^*}_{k,n}}}$, $\ket{a^{(2)}_{k,n}}\ket{{a^{(2)^*}_{k,n}}}$,$\cdots$,$\ket{a^{(d)}_{k,n}}\ket{{a^{(d)^*}_{k,n}}}\}=Ker(H_{k,n})$. This result remains  same  also for the case of degeneracy of the eigenspectrums of $A_{k,n}$.\par
    Now we consider only two Hermitian operators  $A_{k,n}$ and $A_{k^{\prime},n^{\prime}}$  from the set $\{A_{k,1},A_{k,2}\}_{k=1}^N$, where $n,n^{\prime}=1,2$ and this analysis will make the result for the case of $\{A_{k,1},A_{k,2}\}_{k=1}^N$ easier. We also assume that $A_{k,n}$ and $A_{k^{\prime},n^{\prime}}$ do not share any common eigenstate. It is  obvious that to get  $\braket{{\Phi_0}|H_{k,n}^2|{\phi_0}}+\braket{{\Phi_0}|H_{k^{\prime},n^{\prime}}^2|{\Phi_0}}=||H_{k,n}\ket{\Phi_0}||^2+||H_{k^{\prime},n^{\prime}}\ket{\Phi_0}||^2=0$, one must have $\ket{\Phi_0}\in Ker(H_{k,n})$ as well as $\ket{\Phi_0}\in Ker(H_{k^{\prime},n^{\prime}})$ or equivalently, $\ket{\Phi_0}\in Ker(H_{k,n})\cap Ker(H_{k^{\prime},n^{\prime}})$. Here we assume that $Ker(H_{k,n})\cap Ker(H_{k^{\prime},n^{\prime}})\neq \varnothing$. This suggests that  $\ket{\Phi_0}$ can be written as the linear combination of the elements of $Ker(H_{k,n})$  and individually of the elements of $Ker(H_{k^{\prime},n^{\prime}})$. In other words,
    \begin{align}
\ket{\Phi_0}=\sum_{i=1}^d\alpha_{k,n}^{(i)}\ket{{a_{k,n}^{(i)}}}\ket{{a_{k,n}^{(i)^*}}}=\sum_{j=1}^d\alpha_{k^{\prime},n^{\prime}}^{(j)}\ket{{a_{k^{\prime},n^{\prime}}^{(j)}}}\ket{{a_{k^{\prime},n^{\prime}}^{(j)^*}}}.\label{D1}
  \end{align}
    Now write $\alpha_{k,n}^{(i)}=|\alpha_{k,n}^{(i)}|e^{i\phi^{(i)}_{k,n}}$ and $\alpha_{k^{\prime},n^{\prime}}^{(j)}=|\alpha_{k^{\prime},n^{\prime}}^{(j)}|e^{i\phi^{(j)}_{k^{\prime},n^{\prime}}}$. By putting $e^{i\phi^{(i)}_{k,n}/2}\ket{{a_{k,n}^{(i)}}}=\ket{{\tilde{a}_{k,n}^{(i)}}}$ and $e^{i\phi^{(i)}_{k,n}/2}\ket{{a_{k,n}^{(i)^*}}}=\ket{{\tilde{a}_{k,n}^{(i)^*}}}$, and similarly $e^{i\phi^{(j)}_{k^{\prime},n^{\prime}}/2}\ket{{a_{k^{\prime},n^{\prime}}^{(j)}}}=\ket{{\tilde{a}_{k^{\prime},n^{\prime}}^{(j)}}}$ and $e^{i\phi^{(j)}_{k^{\prime},n^{\prime}}/2}\ket{{a_{k^{\prime},n^{\prime}}^{(j)^*}}}=\ket{{\tilde{a}_{k^{\prime},n^{\prime}}^{(j)^*}}}$ in Eq. (\ref{D1}), we have 
  \begin{align}
\ket{\Phi_0}=\sum_{i=1}^d|\alpha_{k,n}^{(i)}|\ket{{\tilde{a}_{k,n}^{(i)}}}\ket{{\tilde{a}_{k,n}^{(i)^*}}}=\sum_{j=1}^d|\alpha_{k^{\prime},n^{\prime}}^{(j)}|\ket{{\tilde{a}_{k^{\prime},n^{\prime}}^{(j)}}}\ket{{\tilde{a}_{k^{\prime},n^{\prime}}^{(j)^*}}},\label{D2}
    \end{align}
where $\ket{\Phi_0}$ is  now in the Schmidt decomposition form with the basis $\{\ket{{\tilde{a}_{k,n}^{(i)}}}\ket{{\tilde{a}_{k,n}^{(i)^*}}}\}_{i=1}^d$ or $\{\ket{{\tilde{a}_{k^{\prime},n^{\prime}}^{(j)}}}\ket{{\tilde{a}_{k^{\prime},n^{\prime}}^{(j)^*}}}\}_{j=1}^d$. Eq. (\ref{D2}) can equivalently be understood in the form of reduced density operator as
\begin{align}
\sum_{i=1}^d|\alpha_{k,n}^{(i)}|^2\ket{{\tilde{a}_{k,n}^{(i)}}}\!\!\bra{{\tilde{a}_{k,n}^{(i)}}}=\sum_{j=1}^d|\alpha_{k^{\prime},n^{\prime}}^{(j)}|^2\ket{{\tilde{a}_{k^{\prime},n^{\prime}}^{(j)}}}\!\!\bra{{\tilde{a}_{k^{\prime},n^{\prime}}^{(j)}}}.\label{D3}
\end{align}
Now the following case provides the validation of the equality of Eq. (\ref{D3}):\\
 If  there are $d-1$  number of degenerate eigenvalues in $\{|\alpha_{k,n}^{(i)}|\}_{i=1}^d$ and $\{|\alpha_{k^{\prime},n^{\prime}}^{(j)}|\}_{j=1}^d$ \emph{i.e}., for example, $|\alpha_{k,n}^{(1)}|=\cdots=|\alpha_{k,n}^{(d-1)}|$  and $|\alpha_{k^{\prime},n^{\prime}}^{(1)}|=\cdots=|\alpha_{k^{\prime},n^{\prime}}^{(d-1)}|$ (this is the extreme case except for the case where all the eigenvalues are degenerate), then  Eq. (\ref{D3}) becomes
\begin{align}
&|\alpha_{k,n}^{(d)}|^2\ket{{\tilde{a}_{k,n}^{(d)}}}\!\!\bra{{\tilde{a}_{k,n}^{(d)}}}+|\alpha_{k,n}^{(1)}|^2\sum_{i=1}^{d-1}\ket{{\tilde{a}_{k,n}^{(i)}}}\!\!\bra{{\tilde{a}_{k,n}^{(i)}}}\nonumber\\
&=|\alpha_{k^{\prime},n^{\prime}}^{(d)}|^2\ket{{\tilde{a}_{k^{\prime},n^{\prime}}^{(d)}}}\!\!\bra{{\tilde{a}_{k^{\prime},n^{\prime}}^{(d)}}}+|\alpha_{k^{\prime},n^{\prime}}^{(1)}|^2\sum_{j=1}^{d-1}\ket{{\tilde{a}_{k^{\prime},n^{\prime}}^{(j)}}}\!\!\bra{{\tilde{a}_{k^{\prime},n^{\prime}}^{(j)}}}.\label{D4}
\end{align}
Now remember  our initial condition  which is that the operators $\{A_{k,n},A_{k^{\prime},n^{\prime}}\}$ do not share any common eigenstate and  hence $\ket{{\tilde{a}_{k,n}^{(i)}}}\neq \ket{{\tilde{a}_{k^{\prime},n^{\prime}}^{(i)}}}$ implying that the equality in   Eq. (\ref{D4}) doesn't hold. We thus conclude that Eq. (\ref{D3}) or equivalently  Eq. (\ref{D2}) holds only when all the eigenvalues in $\{|\alpha_{k,n}^{(i)}|\}_{i=1}^d$  and $\{|\alpha_{k^{\prime},n^{\prime}}^{(j)}|\}_{j=1}^d$   are degenerate \emph{i.e}., when 
 \begin{align}
\ket{\Phi_0}=\frac{1}{\sqrt{d}}\sum_{i=1}^d\ket{{\tilde{a}_{k,n}^{(i)}}}\ket{{\tilde{a}_{k,n}^{(i)^*}}}=\frac{1}{\sqrt{d}}\sum_{j=1}^d\ket{{\tilde{a}_{k^{\prime},n^{\prime}}^{(j)}}}\ket{{\tilde{a}_{k^{\prime},n^{\prime}}^{(j)^*}}}\label{D5}
    \end{align}
 is a maximally entangled state. Thus, only  maximally entangled state  can  provide $\braket{{\Phi_0}|H_{k,n}^2|{\Phi_0}}+\braket{{\Phi_0}|H_{k^{\prime},n^{\prime}}^2|{\Phi_0}}=0$.  \par
Now we consider $\{A_{k,1},A_{k,2}\}_{k=1}^N$ in which  $A_{k,n}$ and $A_{k^{\prime},n^{\prime}}$ do not share any common eigenstate although the other Hermitian operators can share common eigenstate. Then, the average of $H_{tot}$ in any state $\ket{\Psi}$  is given by   $\braket{\Psi|H_{tot}|\Psi}=\braket{{\Psi}|H_{k,n}^2|{\Psi}}+\braket{{\Psi}|H_{k^{\prime},n^{\prime}}^2|{\Psi}}+\sum_{l\neq (k,k^{\prime})}^N\sum_{r\neq (n,n^{\prime})}\braket{{\Psi}|H_{l,r}^2|{\Psi}}$. Now, let us find the form of $\ket{\Psi}$ such that $\braket{\Psi|H_{tot}|\Psi}=0$.  One can easily check that if  $\ket{\Psi}= \ket{\Phi_0}$, then $\braket{\Psi|H_{tot}|\Psi}=0$.  And if  $\ket{\Psi}\neq\ket{\Phi_0}$ then  $\braket{{\Psi}|H_{k,n}^2|{\Psi}}+\braket{{\Psi}|H_{k^{\prime},n^{\prime}}^2|{\Psi}}\neq 0$.  Thus, we conclude that  $\braket{\Psi|H_{tot}|\Psi}=0$ or $\braket{\Psi|H_{tot}^2|\Psi}=0$ only when $\ket{\Psi}= \ket{\Phi_0}$.\par
 As maximally entangled state leads  $I^s_{\rho}(A_{k,n})=0$,   the state $\ket{\Psi}$ is not allowed to be maximally entangled state or equivalently, the density operator  $\rho=\Tr_B(\ketbra{\Psi}{\Psi})$ can not be taken as  maximally mixed state. Thus whenever $\epsilon_0=0$, we have 
\begin{align}
\braket{\Psi|H_{tot}^2|\Psi}&=\sum_{i=0}^{d^2-1}\epsilon_i^2|\braket{\Psi|\epsilon_i}|^2\nonumber\\
    &=\sum_{i=1}^{d^2-1}\epsilon_i^2|\braket{\Psi|\epsilon_i}|^2\nonumber\\
    &\geq\sum_{i=1}^{d^2-1}\epsilon_1^2|\braket{\Psi|\epsilon_i}|^2\nonumber\\
    &=\epsilon_1^2\sum_{i=1}^{d^2-1}\braket{\Psi|(I-\ketbra{\epsilon_0}{\epsilon_0})|\Phi}\nonumber\\
    &=\epsilon_1^2(1-|\braket{\epsilon_0|\Psi}|^2),\label{D6}
\end{align}
where we have considered $\epsilon_{d^2-1}\geq\epsilon_{d^2-2}\geq\cdots\geq\epsilon_{2}\geq\epsilon_{1}$ without any loss of generality as $\{\epsilon_i\}$ are the eigenvalues of  the positive  semidefinite operator $H_{tot}$ and used $\sum_{i=0}^{d^2-1}\ketbra{\epsilon_i}{\epsilon_i}=I$. \par
 If  $\ket{\Psi}=\ket{\Phi^{1-s}}$ or $\ket{\Psi}=\ket{\Phi^{s}}$,  then by using $\ket{\epsilon_0}=\ket{\Phi_0}$ defined in Eq. (\ref{D5}), we have
\begin{align}
    min\{(1-|\braket{\epsilon_0|\Phi^{1-s}}|^2)\}&=\left[1-\frac{1}{d}\frac{(\Tr\rho^{1-s})^2}{{\Tr\rho^{2(1-s)}}}\right],\nonumber\\
    min\{(1-|\braket{\epsilon_0|\Phi^{s}}|^2)\}&=\left[1-\frac{1}{d}\frac{(\Tr\rho^{s})^2}{{\Tr\rho^{2s}}}\right],\nonumber
\end{align}
and finally we have, 
\begin{align}
    \braket{\Phi^{1-s}|H_{tot}^2|\Phi^{1-s}}&\geq \epsilon_1^2\left[1-\frac{1}{d}\frac{(\Tr\rho^{1-s})^2}{{\Tr\rho^{2(1-s)}}}\right],\nonumber\\
     \braket{\Phi^{s}|H_{tot}^2|\Phi^{s}}&\geq \epsilon_1^2\left[1-\frac{1}{d}\frac{(\Tr\rho^{s})^2}{{\Tr\rho^{2s}}}\right].\nonumber
\end{align}
\par
To prove the inequality (\ref{IIIB3-2}), note that  from Eq. (\ref{IIIB3-3}), $\sum_{k=1}^NI_{\rho}(A_k)= \braket{{\Phi}|H_{tot}|{\Phi}}$ and the minimum value of  $\braket{{\Phi}|H_{tot}|{\Phi}}$ can be found by the  similar derivation of Eq. (\ref{D6}) when  $\epsilon_0=0$ as
\begin{align}
\braket{\Phi|H_{tot}|\Phi}&\geq\epsilon_1(1-|\braket{\epsilon_0|\Phi}|^2)\nonumber\\
&\geq  min \{ \epsilon_1(1-|\braket{\epsilon_0|\Phi}|^2)\}\nonumber\\
&= \epsilon_1\left[1-\frac{1}{d}(\Tr\sqrt{\rho})^2\right].\nonumber
\end{align}

\onecolumngrid
\section{}\label{E}
In a two-dimensional Hilbert space, there are only two non-zero eigenvalues  $\lambda_1$ and  $\lambda_2$ of a density operator, where $\lambda_1+\lambda_2=1$. Now first calculate the generalized skew information  of any  operator  $\sigma$ which acts on a two-dimensional Hilbert space as
\begin{align}
\cal{I}^{\nu}_{\rho}(\sigma)&=\frac{1}{2}\Tr[\rho(\sigma^{\dagger}\sigma+\sigma\sigma^{\dagger})]-\frac{1}{2}\sum_{i,j=1}^2m_{\nu}(\lambda_i,\lambda_j)(|\braket{\lambda_i|\sigma^{\dagger}|\lambda_j}|^2+|\braket{\lambda_i|\sigma|\lambda_j}|^2)\nonumber\\
&=\frac{1}{2}\lambda_1\braket{\lambda_1|(\sigma^{\dagger}\sigma+\sigma\sigma^{\dagger})|\lambda_1}+\frac{1}{2}\lambda_2\braket{\lambda_2|(\sigma^{\dagger}\sigma+\sigma\sigma^{\dagger})|\lambda_2}-\lambda_1|\braket{\lambda_1|\sigma|\lambda_1}|^2-\lambda_2|\braket{\lambda_2|\sigma|\lambda_2}|^2\nonumber\\
&-m_{\nu}(\lambda_1,\lambda_2)(|\braket{\lambda_1|\sigma^{\dagger}|\lambda_2}|^2+|\braket{\lambda_1|\sigma|\lambda_2}|^2)\nonumber\\
&=\lambda_1\braket{\Delta \sigma}_{\lambda_1}^2+\lambda_2\braket{\Delta \sigma}_{\lambda_2}^2-m_{\nu}(\lambda_1,\lambda_2)(|\braket{\lambda_1|\sigma^{\dagger}|\lambda_2}|^2+|\braket{\lambda_1|\sigma|\lambda_2}|^2),\label{E1}
\end{align}
where we used $|\braket{\lambda_2|\sigma^{\dagger}|\lambda_1}|^2=|\braket{\lambda_1|\sigma|\lambda_2}|^2$ and $|\braket{\lambda_2|\sigma|\lambda_1}|^2=|\braket{\lambda_1|\sigma^{\dagger}|\lambda_2}|^2$. Now for a qubit, we can use the identity $\ketbra{\lambda_1}{\lambda_1}=I-\ketbra{\lambda_2}{\lambda_2}$ to evaluate the following: \\

\begin{align}
|\braket{\lambda_1|\sigma^{\dagger}|\lambda_2}|^2+|\braket{\lambda_1|\sigma|\lambda_2}|^2&=\braket{\lambda_1|\sigma^{\dagger}|\lambda_2}\braket{\lambda_2|\sigma|\lambda_1}+\braket{\lambda_1|\sigma|\lambda_2}\braket{\lambda_2|\sigma^{\dagger}|\lambda_1}\nonumber\\
&=\braket{\lambda_1|\sigma^{\dagger}|(I-\ketbra{\lambda_1}{\lambda_1})\sigma|\lambda_1}+\braket{\lambda_1|\sigma|(I-\ketbra{\lambda_1}{\lambda_1})\sigma^{\dagger}|\lambda_1}\nonumber\\
&=\braket{\lambda_1|\sigma^{\dagger}\sigma|\lambda_1}+\braket{\lambda_1|\sigma\sigma^{\dagger}|\lambda_1}-2|\braket{\lambda_1|\sigma|\lambda_1}|^2\nonumber\\
&=2\braket{\Delta \sigma}_{\lambda_1}^2.\label{E2}
\end{align}
It can also be shown that 
\begin{align}
|\braket{\lambda_1|\sigma^{\dagger}|\lambda_2}|^2+|\braket{\lambda_1|\sigma|\lambda_2}|^2=2\braket{\Delta \sigma}_{\lambda_2}^2\label{E3}
\end{align}
and thus from Eqs. (\ref{E2}) and (\ref{E3}), we have
\begin{align}
\braket{\Delta \sigma}_{\lambda_1}^2=\braket{\Delta \sigma}_{\lambda_2}^2.\label{E4}
\end{align}
By using Eqs. (\ref{E2}) and (\ref{E4}) in Eq. (\ref{E1}), it becomes
\begin{align*}
\cal{I}^{\nu}_{\rho}(\sigma)&=[1-2m_{\nu}(\lambda_1,\lambda_2)]\braket{\Delta \sigma}_{\lambda_l}^2,
\end{align*}
where $l=1$ or $2$.

\twocolumngrid
\section{}\label{F}
Let us first denote the system of our interest  by  subsystem A  and the auxiliary  system by subsystem B.  The associated Hilbert spaces for the subsystems A and B are denoted by $\cal{H}_A\equiv\cal{H}$ and  $\cal{H}_B\equiv\cal{H}$, respectively.   Now, remember from Eq. (\ref{IIIE-1}) that $\ket{{\Phi}^{1-s}}=\frac{1}{\cal{N}_{1-s}}\ket{\tilde{\Phi}^{1-s}}\in\cal{H}_A\otimes\cal{H}_B$, and   similarly  $\ket{{\Phi}^{s}}=\frac{1}{\cal{N}_{s}}\ket{\tilde{\Phi}^{s}}\in\cal{H}_A\otimes\cal{H}_B$ are the normalized states. \par
Our goal is now to determine the weak value  $\braket{H_A}_{{\Phi}^{s}}^{\alpha_i\alpha_j^*}=\frac{\braket{\alpha_i\alpha_j^*|H_A|{\Phi}^{s}}}{\braket{\alpha_i\alpha_j^*|{\Phi}^{s}}}=\frac{1}{\sqrt{2}}\left(\braket{A\otimes I}_{{\Phi}^{s}}^{\alpha_i\alpha_j^*}-\braket{I\otimes A^T}_{{\Phi}^{s}}^{\alpha_i\alpha_j^*}\right)$, where we have substituted $H_{A}=\frac{1}{\sqrt{2}}(A\otimes I-I\otimes A^T)$. Now consider the following weak value of $A\otimes I$:
\begin{align}
\braket{A\otimes I}_{{\Phi}^{s}}^{\alpha_i\alpha_j^*}&=\frac{\braket{\alpha_i\alpha^*_j|A\otimes I|{\Phi}^{s}}}{\braket{\alpha_i\alpha_j^*|{\Phi}^{s}}}\nonumber\\
&=\frac{\braket{\alpha_i|A|\tilde{\phi}^j_A}}{\braket{\alpha_i|{\tilde{\phi}}_A^j}}\nonumber\\
&=\frac{\braket{\alpha_i|A|{\phi}^{j}_A}}{\braket{\alpha_i|{{\phi}}^{j}_A}}\nonumber\\
&=\braket{A}_{{\phi}^{j}_A}^{\alpha_i},\label{F1}
\end{align}
where $\ket{\tilde{\phi}^{j}_A}=\braket{\alpha_j^*|\Phi^s}$ is an unnormalized state vector in $\cal{H}_A$ and $\ket{{\phi}^{j}_A}=\ket{\tilde{\phi}^{j}_A}/\sqrt{\braket{\tilde{\phi}^{j}_A|\tilde{\phi}^{j}_A}}$. Thus according to the Eq. (\ref{F1}), the weak value  $\braket{A\otimes I}_{{\Phi}^{s}}^{\alpha_i\alpha_j^*}$  can be obtained in experiment by first measuring  a projective operator $\ketbra{\alpha_j^*}{\alpha_j^*}$ in  subsystem B, and then there will be an immediate state collapse in  subsystem A which is given by $\ket{{\phi}^{j}_A}$, here the  state shared  between  subsystems A and B is $\ket{\Phi^s}$. Now the pre-selection  in the subsystem A is $\ket{{\phi}^{j}_A}$, the operator is $A$ which is unknown, and the postselection is $\ket{\alpha_i}$. Then in this set up, the weak value will be $\braket{A}_{{\phi}^{j}_A}^{\alpha_i}$. Note that, weak value of an unknown observable is experimentally obtainable. \par
Similarly, one can obtain $\braket{I\otimes A^T}_{{\Phi}^{s}}^{\alpha_i\alpha_j^*}$ in experiment in the following way:
\begin{align}
\braket{I\otimes A^T}_{{\Phi}^{s}}^{\alpha_i\alpha_j^*}&=\frac{\braket{\alpha_i\alpha_j^*|I\otimes A^T|{\Phi}^{s}}}{\braket{\alpha_i\alpha_j^*|{\Phi}^{s}}}\nonumber\\
&=\frac{\sum_k\lambda^s_k\braket{\alpha_i|\lambda_k}\braket{\alpha^*_j|A^T|\lambda_k^*}}{\sum_k\lambda^s_k\braket{\alpha_i|\lambda_k}\braket{\alpha^*_j|\lambda_k^*}}\nonumber\\
&=\frac{\sum_k\lambda^s_k\braket{\lambda_k|A|\alpha_j}\braket{\lambda_k^*|\alpha_i^*}}{\sum_k\lambda^s_k\braket{\lambda_k|\alpha_j}\braket{\lambda_k^*|\alpha_i^*}}\nonumber\\
&=\frac{\braket{\Phi^s|A\otimes I|\alpha_j\alpha_i^*}}{\braket{{\Phi}^{s}|\alpha_j\alpha_i^*}}\nonumber\\
&=\left[\frac{\braket{\alpha_j\alpha_i^*|A\otimes I|{\Phi}^{s}}}{\braket{\alpha_j\alpha_i^*|{\Phi}^{s}}}\right]^*\nonumber\\
&=\left[\braket{A\otimes I}_{{\Phi}^{s}}^{\alpha_j\alpha_i^*}\right]^*,\label{F2}
\end{align}
where we have used $\ket{{\Phi}^{s}}=\frac{1}{\cal{N}_{s}}\ket{\tilde{\Phi}^{s}}=\frac{1}{\cal{N}_{s}}\sum_{k}\lambda^s_k\ket{\lambda_k}\ket{\lambda^*_k}$, and $\braket{\alpha^*_j|A^T|\lambda_k^*}=\braket{\lambda_k|A|\alpha_j}$ and $\braket{\alpha_i|\lambda_k}=\braket{\lambda_k^*|\alpha_i^*}$ \cite{proof-1}. \par
Using the similar strategy used to obtain the final expression in Eq. (\ref{F1}), we can express $\braket{I\otimes A^T}_{{\Phi}^{s}}^{\alpha_i\alpha_j^*}$ given in Eq. (\ref{F2})  as follows:
\begin{align}
\braket{I\otimes A^T}_{{\Phi}^{s}}^{\alpha_i\alpha_j^*}&=\left[\braket{A}_{{\phi}^{i}_A}^{\alpha_j}\right]^*,\label{F3}
\end{align}
where $\ket{{\phi}^{i}_A}=\ket{\tilde{\phi}^{i}_A}/\sqrt{\braket{\tilde{\phi}^{i}_A|\tilde{\phi}^{i}_A}}$ and $\ket{\tilde{\phi}^{i}_A}=\braket{\alpha_i^*|\Phi^s}$ is an unnormalized state vector in $\cal{H}_A$. Thus according to the Eq. (\ref{F3}), the weak value  $\braket{I\otimes A^T}_{{\Phi}^{s}}^{\alpha_i\alpha_j^*}$  can be obtained in experiment by first measuring  a projective operator $\ketbra{\alpha_i^*}{\alpha_i^*}$ in  subsystem B, and then there will be an immediate state collapse in  subsystem A which is given by $\ket{{\phi}^{i}_A}$, here the  state shared  between  subsystems A and B is $\ket{\Phi^s}$. Now the pre-selection  in the subsystem A is $\ket{{\phi}^{i}_A}$, the operator is $A$ which is unknown, and the postselection is $\ket{\alpha_j}$. Then in this set up, the weak value will be $\braket{A}_{{\phi}^{i}_A}^{\alpha_j}$.\par
By changing the preselection to $\ket{\Phi^{1-s}}$ in the above description, we can determine the weak value  $\braket{H_A}_{{\Phi}^{1-s}}^{\alpha_i\alpha_j^*}$ also and hence we obtain the full information of $I_{\rho}^s(A)$ according to Eq. (\ref{IIIE-3}).

\onecolumngrid
\section{}\label{G}
The sum uncertainty relation for non-Hermitian operators \{$A_k\}^N_{k=1}$ is given by  Eq. (\ref{IIIB4-1}) as
\begin{align}
        \sum_{k=1}^N\braket{\Delta A_k}_{\psi}^2\geq \epsilon_0.\label{G1}
    \end{align}
This inequality can be useful to detect entanglement of an arbitrary state in a  bipartite system as follows. Consider a generic separable state 
\begin{align}
\rho_{AB}=\sum_ip_i\ketbra{\psi^A_i}{\psi^A_i}\otimes \ketbra{\psi^B_i}{\psi^B_i}.\label{G2}
\end{align}  
Now for a separable state, the sum of the variances of the non-Hermitian operators \{$(A_k\otimes I+I\otimes B_k)\}^{N}_{k=1}$  satisfies the following inequality: 
\begin{align}
\sum_{k=1}^N\braket{\Delta (A_k\otimes I+I\otimes B_k)}^2_{\rho_{AB}}&\geq\sum_{k=1}^N\sum_{i}p_i\braket{\Delta (A_k\otimes I+I\otimes B_k)}^2_{\ket{\psi_i^A}\ket{\psi_i^B}}\nonumber\\
&=\sum_{k=1}^N\sum_{i}p_i(\braket{\Delta A_k}^2_{\psi_i^A}+\braket{\Delta B_k}^2_{\psi_i^B}),\label{G3}
\end{align}
where we have used the concavity property of variance in the first inequality and to achieve the last equality, we have used the fact that for any product state: $\braket{\Delta (A\otimes I+I\otimes B)}^2_{\psi^A\psi^B}=\braket{\Delta A}^2_{\psi^A}+\braket{\Delta B}^2_{\psi^B}$. By using inequality (\ref{G1}) for the non-Hermitian operators  \{$A_k\}^N_{k=1}$ and \{$B_k\}^N_{k=1}$ for the subsystems A and B, respectively, we obtain
\begin{align}
\sum_{k=1}^N\braket{\Delta (A_k\otimes I+I\otimes B_k)}^2_{\rho_{AB}}\geq \epsilon_0^A+\epsilon_0^B.\label{G4}
\end{align}
Thus any separable state must satisfy the inequality (\ref{G4}) and if the inequality (\ref{G4}) is violated for a bipartite state, then the state must be entangled. The inequality (\ref{G4}) for Hermitian operators is well known \cite{Hofmann-2003}.  Thus inequality (\ref{G4}) for non-Hermitian operators is more general and it can detect more class of entangled states than the one given in \cite{Hofmann-2003} by using a suitably chosen  set of  Hermitian operators or  non-Hermitian operators or both. \par
Recently, the authors in \cite{Zhao-2024} have shown that a particular set of non-Hermitian operators namely the Kraus operators of a channel can be used to detect entanglement of a bipartite system using the following inequality:
 \begin{align}
\sum_{k=1}^N\braket{\Delta (E^A_k\otimes I+I\otimes E^B_k)}^2_{\rho_{AB}}\geq 2-F^{max}(\cal{E}^A)-F^{max}(\cal{E}^B),\label{G5}
\end{align}
where $\cal{E}^A$ and $\cal{E}^B$ are  local channels for subsystems A and B, respectively  with the Kraus representations $\cal{E}^A(\rho^A)=\sum_kE^A_k\rho^A{E^A_k}^{\dagger}$ and $\cal{E}^B(\rho^B)=\sum_kE^B_k\rho^B{E^B_k}^{\dagger}$. Here, $F^{max}(\cal{E}^A)=\underset{\phi_A}{max}\braket{\phi^A|\cal{E}^A(\ketbra{\phi^A}{\phi^A})|\phi^A}$ and $F^{max}(\cal{E}^B)=\underset{\phi_B}{max}\braket{\phi^B|\cal{E}^B(\ketbra{\phi^B}{\phi^B})|\phi^B}$. To compare, their protocol inequality (\ref{G5}) is valid only for non-Hermitian operators which are only the Kraus operators. In contrast, our protocol that is inequality (\ref{G4}) is valid for any non-Hermitian operators.

\section{}\label{H}
\begin{lemma}
    For any  positive semidefinite  operator $X$ and $\ket{\psi}\in \mathcal{H}$, where $\cal{H}$ is the Hilbert space
    \begin{align}
        \braket{\psi|X^2|\psi}=0\nonumber
    \end{align}
    if and only if  $\ket{\psi}\in Ker(X)$.
\end{lemma}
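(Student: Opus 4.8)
The plan is to reduce the statement to the positive-definiteness of the inner product by exploiting that a positive semidefinite operator is in particular Hermitian. First I would note that $X \geq 0$ implies $X = X^{\dagger}$, so that $X^2 = X^{\dagger}X$. This lets me rewrite the quadratic form as a squared norm:
\begin{align}
\braket{\psi|X^2|\psi} = \braket{\psi|X^{\dagger}X|\psi} = ||X\ket{\psi}||^2. \nonumber
\end{align}

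With this identity in hand, both implications would follow immediately. For the forward direction, I would suppose $\braket{\psi|X^2|\psi}=0$; then $||X\ket{\psi}||^2 = 0$, and since the norm induced by the inner product is positive definite (it vanishes only on the zero vector), I conclude $X\ket{\psi}=0$, i.e. $\ket{\psi}\in Ker(X)$. For the reverse direction, if $\ket{\psi}\in Ker(X)$ then $X\ket{\psi}=0$ by definition of the kernel, whence $||X\ket{\psi}||^2 = 0$ and therefore $\braket{\psi|X^2|\psi}=0$.

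There is no genuine obstacle here; the only point requiring care is the first step, namely invoking Hermiticity of $X$ so that $X^2$ factors as $X^{\dagger}X$, after which everything rests on $||\cdot||^2$ being zero exactly on the zero vector. An alternative route I could take, bypassing even that remark, is to use the spectral decomposition $X = \sum_n x_n \ketbra{x_n}{x_n}$ with eigenvalues $x_n \geq 0$: then
\begin{align}
\braket{\psi|X^2|\psi} = \sum_n x_n^2\, |\braket{x_n|\psi}|^2 \nonumber
\end{align}
is a sum of non-negative terms, which vanishes if and only if $x_n \braket{x_n|\psi}=0$ for every $n$, i.e. if and only if $X\ket{\psi} = \sum_n x_n \braket{x_n|\psi}\ket{x_n} = 0$. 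I would favor the norm argument for brevity, keeping the spectral version only as a cross-check.
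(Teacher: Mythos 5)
Your argument is correct and follows essentially the same route as the paper's Appendix~\ref{H}: both directions rest on the identity $\braket{\psi|X^2|\psi}=\|X\ket{\psi}\|^2$ (valid since $X\geq 0$ implies $X=X^{\dagger}$) together with the positive-definiteness of the norm. Your explicit remark on Hermiticity and the spectral cross-check are fine but add nothing beyond what the paper already does.
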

\begin{proof}
    If $\ket{\psi}\in Ker(X)$, then $X\ket{\psi}=0\implies X^2\ket{\psi}=0\implies \braket{\psi|X^2|\psi}=0$. \par
    Now, consider $\braket{\psi|X^2|\psi}=0  \implies ||X\ket{\psi}||^2=0
        \implies X\ket{\psi}=0\implies \ket{\psi}\in Ker(X)$.  
\end{proof}
\textbf{Properties of $Y^T$:} If $Y$ is a Hermitian operator, then  the spectral decomposition of $Y^T$ is given by 
\begin{align}
    Y^T&=\left(\sum_{i=1}^dy_i\ketbra{y_i}{y_i}\right)^T\nonumber\\
    &=\sum_{i=1}^dy_i\left(\ketbra{y_i}{y_i}\right)^T.\label{H1}
\end{align}
Now, let  $\ket{y_i}=\sum_k\alpha^k_i\ket{k}$ in the computational basis \{$\ket{k}\}_k$, then we define  $\ket{y_i^*}=\sum_k{\alpha^k_i}^*\ket{k}$.  It can be shown that $\left(\ketbra{y_i}{y_i}\right)^T=\sum_{k,k^{\prime}}{\alpha^{k^{\prime}}_i}^*\alpha^k_i(\ketbra{k}{k^{\prime}})^T=\sum_{k,k^{\prime}}{\alpha^{k^{\prime}}_i}^*\alpha^k_i\ketbra{k^{\prime}}{k}=\sum_{k,k^{\prime}}{\alpha^{k}_i}^*\alpha^{k^{\prime}}_i\ketbra{k}{k^{\prime}}=\ketbra{y_i^*}{y_i^*}$. Thus Eq. (\ref{H1}) becomes
\begin{align*}
    Y^T=\sum_{i=1}^dy_i\ketbra{y_i^*}{y_i^*}.
\end{align*}
 \par
Note that \{$\ket{y_i^*}\}^d_{i=1}$ are orthonormal vectors:  $\braket{y^*_i|y^*_j}=\braket{y_j|y_i}=\delta_{ij}$ \cite{proof-1} and the above equation represents the spectral decomposition of $Y^T$.

\twocolumngrid

\end{document}